\documentclass[letterpaper, 10 pt, journal]{ieeetran}
\pdfminorversion=7

\usepackage{amsmath}

\usepackage{amsthm}
\usepackage{amsfonts}
\usepackage{bbm} 
\usepackage{graphicx}
\usepackage{subfigure}
\usepackage{cite}
\usepackage{algorithm}
\usepackage{algpseudocode}
\usepackage{nicefrac}
\usepackage{eufrak}
\usepackage{upgreek}
\usepackage{mathrsfs}
\usepackage{enumitem}

\usepackage{xcolor}
\definecolor{forestgreen}{rgb}{0.13, 0.55, 0.13}
\definecolor{orange}{rgb}{1,0.49,0}

\newtheorem{defn}{Definition}
\newtheorem{thm}{Theorem}
\newtheorem{lem}{Lemma}
\newtheorem{prop}{Proposition}
\newtheorem{cor}{Corollary}
\newtheorem{note}{Note}
\newtheorem{rem}{Remark}
\newtheorem{exam}{Example}
\newtheorem{assump}{Assumption}
\newtheorem{prob}{Problem}

\newcommand{\e}{\varepsilon}
\newcommand{\G}{\mathcal{G}}
\newcommand{\V}{\mathbb{V}}
\newcommand{\E}{\mathcal{E}}
\newcommand{\EE}{\mathbb{E}}

\newcommand{\R}{\mathbb{R}}
\newcommand{\PP}{\mathbb{P}}

\newcommand{\HH}{\mathcal{H}}
\newcommand{\N}{\mathcal{N}}

\usepackage[color=yellow, textsize=small, textwidth=\marginparwidth, draft]{todonotes}   

\begin{document}
\title{{Monitoring Link Faults in Nonlinear Diffusively-coupled Networks}}
\author{Miel Sharf,~\IEEEmembership{Graduate~Student~Member,~IEEE} and Daniel Zelazo,~\IEEEmembership{Senior~Member,~IEEE}
\thanks{M. Sharf and D. Zelazo are with the Faculty of Aerospace Engineering, Israel Institute of Technology, Haifa, Israel.
    {\tt\small mielsharf@gmail.com, dzelazo@technion.ac.il}.  }
}

\maketitle
\begin{abstract}
Fault detection and isolation is {an area} of engineering dealing with designing on-line protocols for systems that allow one to identify the existence of faults, pinpoint their exact location, and overcome them. We consider the case of multi-agent systems, where faults correspond to the disappearance of links in the underlying graph, simulating a communication failure between the corresponding agents. We study the case in which the agents and controllers are maximal equilibrium-independent passive (MEIP), and use the known connection between steady-states of these multi-agent systems and network optimization theory. We first study asymptotic methods of differentiating the faultless system from its faulty versions by studying their steady-state outputs. We explain how to apply the asymptotic differentiation to {detect and isolate communication faults}, with graph-theoretic guarantees on the number of faults that can be isolated, assuming the existence of a ``convergence assertion protocol", a data-driven method of asserting that a multi-agent system converges to a conjectured limit. We then construct two data-driven model-based convergence assertion protocols. We demonstrate our results by a case study.
\end{abstract}

\section{Introduction}\label{sec.Intro}
Multi-agent systems (MAS) have been widely studied in recent years, as they present both a variety of applications, as well as a deep theoretical framework \cite{Oh2015,Scardovi2010,OlfatiSaber2007}. One of the deepest concerns when considering applications of MAS is communication failures, which can drive the agents to act poorly, or fail their task altogether. These communication failures, {which we term \emph{network failures}}, can either be accidental or planned by an adversary. There is a need of detecting network faults and dealing with them in real-time for the network to be secure.

{{Fault detection and isolation (FDI) for multi-agent systems usually deals with faults in one of the agents, see e.g. \cite{Teixeira_Shames_Sandberg_Johansson} and references therein. The possibility of faults in the communication links was first studied in \cite{Jafai2010} using the notion of structural controllability, which was later used in \cite{Jafari2011} to solve the problem of leader localization. The problem of network FDI, {i.e., detection and isolation of network faults,} was  studied primarily for linear and time-invariant (LTI) systems with a known model. In \cite{Rahimian2014a,Rahimian2015}, the authors use jump discontinuities in the derivative of the output to detect topological changes in the network. Tools from switching systems theory, namely mode-observability, was used in \cite{Battistelli2015} for network FDI. Combinatorial tools were used in \cite{Rahimian2012,Valcher2019} to solve the FDI problem for consensus-seeking networks. Recently, \cite{Zhang2020} proposed a network FDI method which allows an uncertainty in the model, but is restricted for networks with LTI systems. A related problem in which one tries to distinguish between multi-agent systems with identical agents but different communication graphs was studied in \cite{Rahimian2014b,Sharf2018b,Patil2019}, from which only \cite{Sharf2018b} also deals with nonlinear agents. 

We aim at a network FDI scheme applicable also for nonlinear systems by relying on another concept widespread in multi-agent systems, namely passivity. Passivity was first used to address faults by \cite{Yang2008} for control-affine systems, although only fault-tolerance is addressed, and no synthesis procedures are suggested. Later works used FDI for a single nonlinear agent \cite{Yang2008,Chen2010,Marton2012b,Lei2018}. To the extent of our knowledge, passivity has not been previously used to give network FDI schemes, and no other works consider networks with nonlinear components.}}

Passivity theory is a cornerstone of the theoretical framework of networks of dynamical systems \cite{Bai_Arcak_Wen}. It allows for the analysis of multi-agent systems to be decoupled into two separate layers, the dynamic system layer and the information exchange layer. Passivity theory was first used to study the convergence properties of network systems in \cite{Arcak}. Many variations and extensions of passivity have been applied in different aspects of multi-agent systems. For example, the related concepts of incremental passivity or relaxed co-coercivity have been used to study various synchronization problems \cite{Stan2007, Scardovi2010}, and more general frameworks including Port-Hamiltonian systems on graphs \cite{Schaft2012}. Passivity is also widely used in coordinated control of robotic systems \cite{Chopra2006}.

One prominent variant is maximal equilibrium-independent passivity (MEIP), which was applied in \cite{SISO_Paper} in order to reinterpret the analysis problem for a multi-agent system as a pair of network optimization problems. Network optimization is a branch of optimization theory dealing with optimization of functions defined over graphs \cite{Rockafellar1998}. The main result of \cite{SISO_Paper} showed that the asymptotic behavior of these networked systems is \emph{(inverse) optimal} with respect to a family of network optimization problems. In fact, the steady-state input-output signals of both the dynamical systems and the controllers comprising the networked system can be associated to the optimization variables of either an \emph{optimal flow} or an \emph{optimal potential} problem; these are the two canonical dual network optimization problems described in \cite{Rockafellar1998}. The results of \cite{SISO_Paper} were used in \cite{LCSS_Paper,TAC_Paper} in order to solve the synthesis problem for multi-agent systems, and were further used in \cite{Sharf2018b,Sharf2019d} to solve the network identification problem.

We aim to use the network optimization framework of \cite{SISO_Paper,LCSS_Paper,TAC_Paper} for analysis and synthesis of multi-agent systems in order to provide a {strategy for detecting and isolating network faults}. We also consider adversarial games regarding communication faults. We strive to give graph-theoretic-based results, showing that {network} fault detection and isolation can be done for any MEIP multi-agent system, so long that the graph $\G$ satisfies certain conditions. We show that if the graph $\G$ is ``connected enough", we can solve the network FDI problem. Namely, if $\G$ is $2$-connected, then detecting the existence of any number of faults is possible, and if $\G$ is $k$-connected with $k>2$, we can isolate up to $k-2$ faults.

The rest of the paper is as follows. Section \ref{sec.background} surveys the relevant parts of the network optimization framework. Section \ref{sec.probform} presents the problem formulation of this work and states the assumptions used throughout the paper.  Section \ref{sec.IndicationVectors} presents the first technical tool used for building the {network} fault detection schemes, namely edge-indication vectors, and shows how to construct them. Section \ref{sec.Applications} uses edge-indication vectors to design network FDI schemes, as well as strategies for adversarial games, assuming the existence of a ``convergence assertion protocol", a data-driven method of asserting that a given MAS converges to a conjectured limit. Section \ref{sec.ConvAssert} studies these convergence assertion protocols, prescribing two approaches for constructing them. Lastly, we present simulations demonstrating the constructed algorithms.

\paragraph*{Notations}
We use basic notions from algebraic graph theory \cite{Godsil_Royle}. An undirected graph $\mathcal{G}=(\mathbb{V},\mathbb{E})$ consists of a finite set of vertices $\mathbb{V}$ and edges $\mathbb{E} \subset \mathbb{V} \times \mathbb{V}$.  We denote by $e=\{i,j\} \in \mathbb{E}$ the edge that has ends $i$ and $j$ in $\mathbb{V}$. For each edge $e$, we pick an arbitrary orientation and denote $e=(i,j)$.  The incidence matrix of $\mathcal{G}$, denoted $\mathcal{E}_\mathcal{G}\in\mathbb{R}^{|\mathbb{E}|\times|\mathbb{V}|}$, is defined such that for edge $e=(i,j)\in \mathbb{E}$, $[\mathcal{E_G}]_{ie} =+1$, $[\mathcal{E_G}]_{je} =-1$, and $[\mathcal{E_G}]_{\ell e} =0$ for $\ell \neq i,j$. 
{We also use simple notions from graph theory \cite{GraphTheory}. A path is a sequence of distinct nodes $v_1,v_2,\cdots,v_n$ such that $\{v_i,v_{i+1}\}\in \mathbb{E}$ for all $i$. A cycle is the union of a path $v_1,\cdots,v_n$ with the edge $\{v_1,v_n\}$. A cycle $v_1,v_2,\cdots,v_{n-1},v_1$ is called simple if $v_i\neq v_j$ for all $i\neq j$. A collection of paths is called vertex-disjoint if no two share a node, except possibly for their first and last nodes.}
Furthermore, for a linear map $T:U\to V$ between vector spaces, we denote the kernel of $T$ by $\ker{T}$.

\section{Network Optimization and MEIP\\ Multi-Agent Systems}\label{sec.background}

The role of network optimization theory in cooperative control was introduced in \cite{SISO_Paper}, and was used in \cite{LCSS_Paper,TAC_Paper} to solve the synthesis problem for MAS.  In this section, we brief on the main results we need from \cite{SISO_Paper, LCSS_Paper, TAC_Paper}.
\vspace{-10pt}
\subsection{Diffusively-Coupled Networks and Their Steady-States}
Consider a collection of {SISO} agents interacting over a network $\mathcal{G}=(\mathbb{V},\mathbb{E})$. The agents $\{\Sigma_i\}_{i\in\V}$ and the controllers $\{\Pi_e\}_{e\in\EE}$ have the following models:
\begin{align} \label{AgentsandControllers}
\Sigma_i: 
\begin{cases}
\dot{x}_i = f_i(x_i,u_i) \\
y_i = h_i(x_i).
\end{cases}, 
\Pi_e: 
\begin{cases}
\dot{\eta}_e = \phi_e(\eta_e,\zeta_e) \\
\mu_e = \psi_e(\eta_e,\zeta_e).
\end{cases}
\end{align} 

\begin{figure} [!t] 
    \centering
    \includegraphics[scale=0.2]{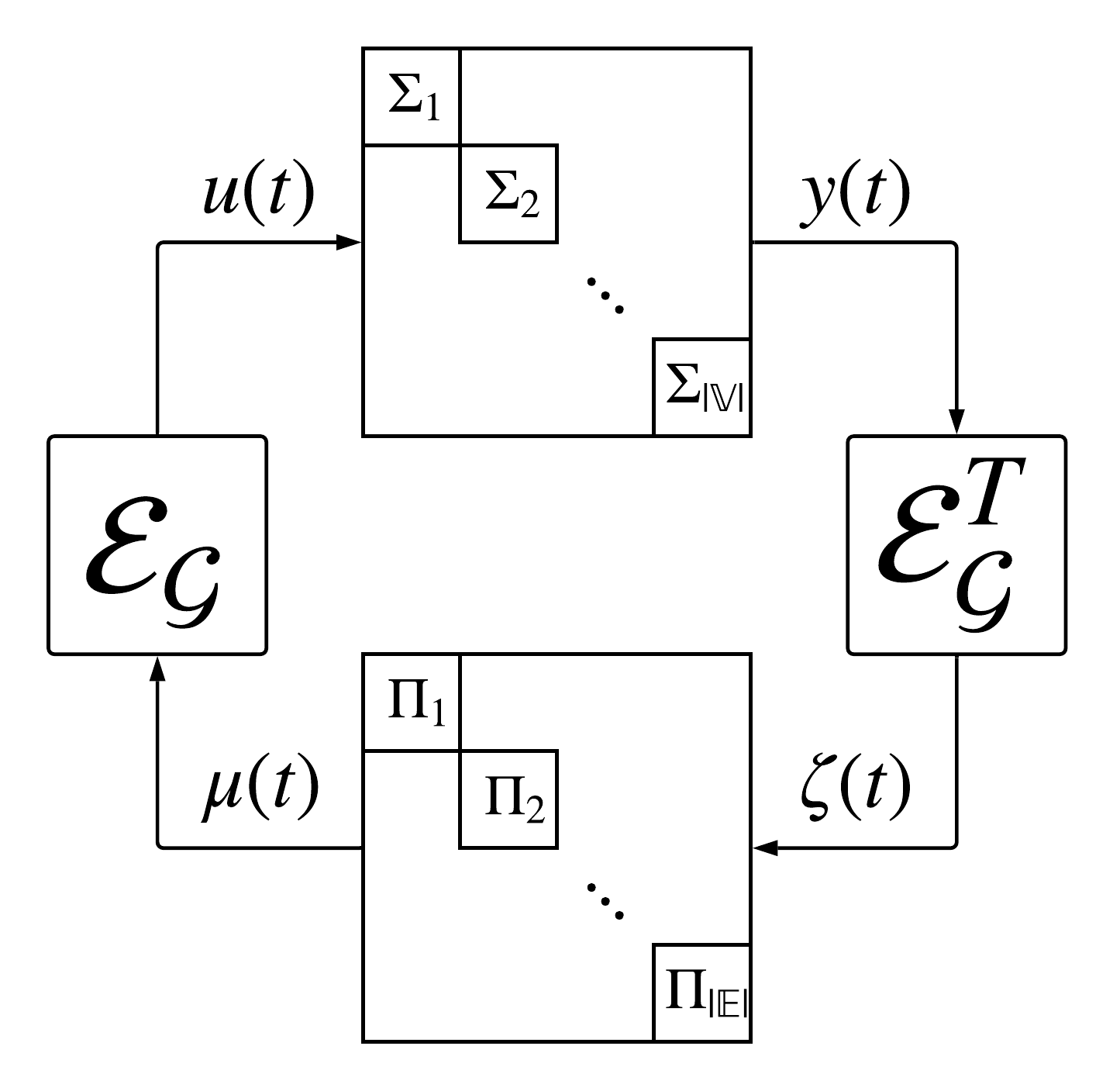}
    \caption{Block-diagram of the closed loop.}
    \label{ClosedLoop}
\vspace{-15pt}
\end{figure}

 We consider stacked vectors of the form $u=[u_1^\top,...,u_{|\mathbb{V}|}^\top]^\top$ and similarly for $x,y,\zeta,\eta$ and $\mu$. 
The agents and controllers are coupled by defining the controller input as $\zeta = \E_\G^\top y$ and the control input as $u = -\E_\G\mu$. This closed-loop system is called a \emph{diffusively-coupled network}, and is denoted by $(\G,\Sigma,\Pi)$. Its structure is illustrated in Figure \ref{ClosedLoop}. We wish to study the steady-states of the closed-loop. Suppose that $(\mathrm{u},\mathrm{y},\upzeta,\upmu)$ is a steady-state of $(\G,\Sigma,\Pi)$. For every $i\in\V,e\in\EE$, $(\mathrm{u}_i,\mathrm{y}_i)$ is a steady-state input-output pair of the $i$-th agent, and $(\upzeta_e,\upmu_e)$ is a steady-state pair of the $e$-th controller. This motivates the following definition, originally from \cite{SISO_Paper}:
\begin{defn} \label{def.SS}
The steady-state input-output relation $k$ of a dynamical system is the collection of all steady-state input-output pairs of the system. Given a steady-state input $\mathrm{u}$ and a steady-state output $\mathrm{y}$, we define
\begin{align}
k(\mathrm u) &= \{\mathrm y:\ (\mathrm{u,y})\in k)\}, ~~ k^{-1}(\mathrm y) &= \{\mathrm u:\ (\mathrm{u,y})\in k)\} .\nonumber
\end{align}
\end{defn}
Let $k_i$ be the steady-state input-output relation for the $i$-th agent, $\gamma_e$ be the steady-state input-output relation for the $e$-th controller, and $k,\gamma$ be their stacked versions. Then the closed-loop steady-state $(\mathrm{u},\mathrm{y},\upzeta,\upmu)$ has to satisfy $\mathrm y\in k(\mathrm u), \upzeta=\E_\G^\top\mathrm y, \upmu\in\gamma(\upzeta), \mathrm u=-\E_\G\upmu$. By a simple manipulation, one can show that $\mathrm{y}$ is a closed-loop steady-state for the agent output if and only if $0\in k^{-1}(\mathrm y) + \E_\G\gamma(\E_\G^\top\mathrm y)$ \cite{TAC_Paper}.
\vspace{-10pt}
\subsection{MEIP Systems and Closed-Loop Convergence}
The convergence of the diffusively-coupled network $(\G,\Sigma,\Pi)$ can be assured using passivity. We first recall the classic definition of (shifted) passivity:
\begin{defn}[\small{Passivity} \cite{Khalil}]
Let $\Upsilon$ be a SISO system with input $u(t)$, output $y(t)$ and state $x(t)$, and let $(\mathrm{u,y})$ be a steady-state input-output pair of the system. For a differentiable function $S=S(x)$ and a number $\rho>0$, we consider the inequality
$\frac{d}{dt}{S(x)} \le -\rho\| y(t) -\mathrm y\|^2 + (y(t)-\mathrm y)(u(t) - \mathrm u)$
We say $\Upsilon$ is passive (w.r.t. $(\mathrm{u,y})$) if there exists a semi-definite storage function $S(x)$ and $\rho \ge 0$ such that the inequality holds for any trajectory. Also, we say the system is output-strictly passive (w.r.t. $(\mathrm{u,y})$) if the same condition holds for some $\rho >0$. The largest number $\rho$ for which the condition holds is called the \emph{(output) passivity index} w.r.t. $(\mathrm{u,y})$.
\end{defn}
{Passivity was first used for diffusively-coupled networkes in \cite{Arcak}. It is known that if $(\mathrm u, \mathrm y,\upzeta,\upmu)$ is an equilibrium of the network, and the agents and controllers are passive with respect to $(\mathrm u_i,\mathrm y_i)$ and $(\upzeta_e,\upmu_e)$, then the network converges to said equilibrium. The existence of an equilibrium for the closed-loop network can be proved using network optimization tools under certain monotonicity assumptions on the steady-state input-output relation of the agents and controllers \cite{SISO_Paper,TAC_Paper}, namely under the following variant of passivity.}

\begin{defn}[\small{Maximal Equilibrium Independent Passivity \cite{SISO_Paper}}]
Consider the SISO dynamical system of the form
\begin{align} \label{Upsilon}
\Upsilon : 
\dot{x} = f(x,u);~~
y = h(x,u),
\end{align}
with input-output relation $r$. The system $\Upsilon$ is said to be (output-strictly) MEIP if the following conditions hold:
\begin{enumerate}
\item[i)] The system $\Upsilon$ is (output-strictly) passive with respect to any steady-state pair $(\mathrm{u},\mathrm{y})$, i.e., {with respect to any $\mathrm u,\mathrm y$ such that} $\mathrm{y} \in r(\mathrm{u})$.
\item[ii)] The steady-state input-output relation $r$ is maximally monotone. That is, if $(\mathrm{u}_1,\mathrm{y}_1),(\mathrm{u}_1,\mathrm{y}_2)\in r$ then $(\mathrm{u}_1-\mathrm{u}_2)(\mathrm{y}_1-\mathrm{y}_2) \ge 0$, and $r$ is not contained in any larger monotone relation \cite{Rockafellar1998}.
\end{enumerate}
The passivity index of the system $\Upsilon$ is defined as $\underset{\mathrm y\in\mathrm r(\mathrm u)}{\min} \rho_{\mathrm{u,y}}$, where $\rho_{\mathrm{u,y}}$ is the passivity index with respect to $(\mathrm{u,y})$.
\end{defn}
Such systems include single integrators, gradient systems, port-Hamiltonian systems on graphs, and others (see \cite{SISO_Paper, TAC_Paper} for more examples). In this work we often consider networks of control-affine systems. The theorem below gives a sufficient condition for a control-affine system to be MEIP:
\begin{thm} \label{thm.MEIPAgents}
Let $\Sigma$ be the SISO system of the form $\dot{x} = -f(x)+q(x)u, y=h(x)$. Suppose that $q(x)$ is positive for all $x$, that $h$ is strictly monotone $C^1$ ascending, and that $f/q$ is $C^1$ and monotone ascending.
\begin{enumerate}
\item A pair $(\mathrm{u,y})$ is a steady-state input-output pair for $\Sigma$ if and only if there exists some $\mathrm x \in \R$ such that $\mathrm u = f(\mathrm x)/q(\mathrm x)$ and $\mathrm y = h(\mathrm x)$;
\item For any $\mathrm x\in \mathbb{R}$, the function 
$S(x) = \int_{\mathrm x} ^ x \frac{h(\sigma)-h(\mathrm x)}{q(\sigma)}d\sigma$ is a storage function for the steady-state input-output pair $\mathrm u = f(\mathrm x)/q(\mathrm x)$ and $\mathrm y = h(\mathrm x)$;
\item The function $S(x)$ proves that $\Sigma$ is passive w.r.t. $(\mathrm u,\mathrm y)$ with passivity index 
$\rho = \inf_{x\in \R} \frac{\frac{f(x)}{q(x)}-\frac{f(\mathrm x)}{q(\mathrm x)}}{h(x)-h(\mathrm x)}\ge 0;$
\item If either $\lim_{|t|\to\infty} |f(t)/q(t)| = \infty$ or $\lim_{|t|\to\infty} |h(t)| = \infty$, then the system is MEIP.
\item If the derivative of $h$ is always positive, then the inverse steady-state relation $k^{-1}$ is differentiable.
\end{enumerate}
\end{thm}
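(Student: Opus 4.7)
My plan is to establish the five items roughly in the stated order, combining items 2 and 3 into a single Lyapunov computation, since the passivity index in item 3 is read off directly from the dissipation inequality produced in item 2. For item 1, setting $\dot x = 0$ in the dynamics and dividing by $q(\mathrm x)>0$ yields $\mathrm u = f(\mathrm x)/q(\mathrm x)$, while the output equation gives $\mathrm y=h(\mathrm x)$; both directions are immediate. Consequently the steady-state input-output relation is the parameterized curve $\{(f(\mathrm x)/q(\mathrm x),\, h(\mathrm x)):\mathrm x\in\R\}$.

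For items 2 and 3, I would first observe that $S\ge 0$ and vanishes at $\mathrm x$: by strict monotonicity of $h$ and positivity of $q$, the integrand has the sign of $\sigma-\mathrm x$, so the integral from $\mathrm x$ to $x$ is nonnegative regardless of the sign of $x-\mathrm x$. Differentiating $S$ along the dynamics produces $\dot S = \frac{h(x)-h(\mathrm x)}{q(x)}(-f(x)+q(x)u)$, which after adding and subtracting $f(\mathrm x)/q(\mathrm x)=\mathrm u$ rearranges into $(y-\mathrm y)(u-\mathrm u) - (y-\mathrm y)(f(x)/q(x)-f(\mathrm x)/q(\mathrm x))$. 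Because $h$ and $f/q$ are both ascending, the differences $(y-\mathrm y)$ and $(f(x)/q(x)-f(\mathrm x)/q(\mathrm x))$ share sign, so the subtracted term is nonnegative. Factoring it as $(y-\mathrm y)^2$ times the ratio defining $\rho$ in item 3 and taking an infimum over $x$ yields both the passivity property and the claimed index.

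For item 4, monotonicity of the steady-state relation is immediate from item 1, since both coordinates of the parameterization move monotonically with $\mathrm x$. Maximality is the key point: a continuous monotone curve in $\R^2$ is maximal as a monotone relation exactly when its projection onto at least one of the two axes covers all of $\R$. Under the first hypothesis, monotonicity of $f/q$ together with $\lim_{|t|\to\infty}|f(t)/q(t)|=\infty$ forces $f/q\to\pm\infty$ at $\pm\infty$, so the input projection exhausts $\R$; the second hypothesis handles the output projection via $h$ analogously. For item 5, item 1 gives $k^{-1}(\mathrm y) = (f/q)\circ h^{-1}(\mathrm y)$; when $h'>0$ everywhere, the inverse function theorem makes $h^{-1}$ a $C^1$ function, and composition with the $C^1$ map $f/q$ is $C^1$.

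The principal subtlety I anticipate is the maximality step in item 4: one must distinguish between ``the monotone curve extends as a function'' and ``the relation is maximal as a monotone set,'' and verify that the unboundedness hypothesis is precisely what rules out a monotone extension at the boundary of the projection. Everything else reduces to the sign analysis of the two ascending maps $h$ and $f/q$ that is already present in the passivity computation.
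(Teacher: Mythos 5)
Your proposal is correct in substance, and for items 2, 3 and 5 it follows essentially the paper's own argument: the same storage function, the same add-and-subtract of $f(\mathrm x)/q(\mathrm x)$ in $\dot S$, the sign argument giving $\rho\ge 0$, and the inverse function theorem applied to $k^{-1}=(f/q)\circ h^{-1}$. Where you genuinely diverge is items 1, 2 and 4: the paper does not prove these at all but cites Proposition~1 of \cite{Sharf2019e}, whereas you give direct arguments, in particular a self-contained maximality proof for item 4. That buys a fully self-contained statement at the cost of having to get the maximal-monotonicity step exactly right, and there is one inaccuracy to flag there: the lemma you invoke, ``a continuous monotone curve is maximal exactly when its projection onto at least one axis covers all of $\R$,'' is false in the ``only if'' direction --- the curve $\{(u,0):u\le 0\}\cup\{(0,y):y\ge 0\}$ is maximal monotone although both projections are proper half-lines; the precise characterization (Rockafellar) is completeness, i.e.\ that $u+y$ sweeps all of $\R$ along the curve. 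Fortunately you only use the true ``if'' direction, and under the theorem's hypotheses it does hold: if $|h|\to\infty$ the relation is the graph of the continuous nondecreasing function $(f/q)\circ h^{-1}$ defined on all of $\R$, hence maximal; if $|f/q|\to\infty$ the fiber $\{x:\, f(x)/q(x)=u_0\}$ is a compact interval for every $u_0$, so the output values over each input form a closed interval and a short extension argument (as you anticipate in your ``principal subtlety'' remark) rules out adjoining any new point. So state the lemma one-directionally or via completeness, and spell out that fiber argument. Finally, remember that MEIP also requires passivity with respect to \emph{every} steady-state pair; this is covered because items 1--3 hold for arbitrary $\mathrm x$, but it deserves an explicit sentence in item 4.
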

\begin{proof}
The first, second and fourth parts are proved in \cite[Proposition~1]{Sharf2019e}. As for the third, we note that:
\begin{align*}
\dot{S} =& \frac{(h(x)-h(\mathrm x))}{q(x)}\dot{x} = \frac{(h(x)-h(\mathrm x))}{q(x)}(-f(x) + q(x)u) = \\
=& (h(x) - \mathrm y)(u-\mathrm u) -(h(x)-h(\mathrm x))\bigg(\frac{f(x)}{q(x)}-\frac{f(\mathrm x)}{q(\mathrm x)}\bigg) \\
\le&(y-\mathrm y)(u-\mathrm u)-\rho(y-\mathrm y)^2 
\end{align*}
We note that $\rho \ge 0$ as $(h(x)-h(\mathrm x))\bigg(\frac{f(x)}{q(x)}-\frac{f(\mathrm x)}{q(\mathrm x)}\bigg)\ge 0$ by monotonicity. Moreover, we note that $\int_{\mathrm x}^{x} \frac{h(\sigma)-h(\mathrm x)}{q(\sigma)}d\sigma \ge 0$, with strict inequality whenever $x\neq \mathrm x$, as $h$ is strictly monotone and $q(x) > 0$. Thus $S$ is a $C^1$ storage function, and we conclude the system is passive with passivity index $\rho\ge 0$ with respect to the steady-state input-output pair $(\mathrm{u,y})$. 
Lastly, we note that because $h$ is strictly monotone, the inverse $h^{-1}$ can be defined. Thus, the inverse steady-state relation $k^{-1}$ is given by $k^{-1}(\mathrm y) = \frac{f(h^{-1}(\mathrm y))}{q(h^{-1}(\mathrm y))}$, which is differentiable by the inverse function theorem.
\end{proof}

As we previously remarked, MEIP can be used  to prove existence of a closed-loop equilibrium for networks:

\begin{thm}[\cite{SISO_Paper,LCSS_Paper}]\label{thm.ClosedLoopSteadyStates}
Consider the network $(\G,\Sigma,\Pi)$. Assume the agents $\Sigma_i$ are MEIP, and the {controllers} $\Pi_e$ are output-strictly MEIP (or vice versa). Then the signals $u(t),y(t),\zeta(t),\mu(t)$ of the closed-loop system converge to steady-state values $\mathrm{u},\mathrm{y},{\upzeta},{\upmu}$, where $0\in k^{-1}(\mathrm{y}) + \E_\G \gamma(\E_\G^\top \mathrm{y})$.
\end{thm}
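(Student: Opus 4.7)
The plan is to leverage the network optimization interpretation of MEIP systems, which converts the equilibrium condition into the optimality condition of a convex program, and then to run a Lyapunov argument built from the individual storage functions guaranteed by passivity. Since each $k_i$ and each $\gamma_e$ is maximally monotone, each can be realized as the subdifferential of a convex integral function $K_i$ (with conjugate $K_i^\ast$) and $\Gamma_e$, respectively. The inclusion $0\in k^{-1}(\mathrm y)+\E_\G\,\gamma(\E_\G^\top\mathrm y)$ is then precisely the optimality condition for the convex ``optimal potential'' problem of minimizing $\sum_{i}K_i^\ast(y_i)+\sum_{e}\Gamma_e(\zeta_e)$ subject to $\zeta=\E_\G^\top y$, with dual variable $\mu$ satisfying $u=-\E_\G\mu$. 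Under the coercivity implied by the MEIP assumption, this convex program admits a minimizer, which produces a closed-loop equilibrium $(\mathrm u,\mathrm y,\upzeta,\upmu)$ satisfying $\mathrm y\in k(\mathrm u)$, $\upmu\in\gamma(\upzeta)$, $\upzeta=\E_\G^\top\mathrm y$, and $\mathrm u=-\E_\G\upmu$.

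Next I would establish convergence via a Lyapunov argument. Choose storage functions $S_i(x_i)$ certifying passivity of $\Sigma_i$ at $(\mathrm u_i,\mathrm y_i)$ and $W_e(\eta_e)$ certifying output-strict passivity of $\Pi_e$ at $(\upzeta_e,\upmu_e)$, which exist by the (O)MEIP assumption. Define the candidate Lyapunov function
\begin{equation*}
V(x,\eta)=\sum_{i\in\V}S_i(x_i)+\sum_{e\in\E}W_e(\eta_e).
\end{equation*}
Differentiating along trajectories and applying the passivity inequalities individually gives
\begin{equation*}
\dot V\le\sum_{i}(y_i-\mathrm y_i)(u_i-\mathrm u_i)+\sum_{e}\bigl[(\mu_e-\upmu_e)(\zeta_e-\upzeta_e)-\rho_e(\mu_e-\upmu_e)^2\bigr].
\end{equation*}
The crucial step is to observe that the coupling relations $\zeta=\E_\G^\top y$ and $u=-\E_\G\mu$, together with their equilibrium counterparts, make the two bilinear sums cancel: writing them in vector form gives $(y-\mathrm y)^\top(u-\mathrm u)+(\mu-\upmu)^\top(\zeta-\upzeta)=(y-\mathrm y)^\top(-\E_\G)(\mu-\upmu)+(\mu-\upmu)^\top\E_\G^\top(y-\mathrm y)=0$. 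Hence $\dot V\le -\sum_e\rho_e(\mu_e-\upmu_e)^2\le 0$, so $V$ is nonincreasing and trajectories are bounded.

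Finally, I would invoke LaSalle's invariance principle: trajectories converge to the largest invariant set on which $\mu(t)\equiv\upmu$. On this invariant set, $\dot\eta_e=0$ forces $\zeta_e$ to lie in $\gamma_e^{-1}(\upmu_e)$, and by maximal monotonicity of $\gamma_e$ combined with the coupling $u=-\E_\G\mu\equiv\mathrm u$, I would chase the algebraic conditions back through the agents to conclude $y(t)\to\mathrm y$, $\zeta(t)\to\upzeta$, and $u(t)\to\mathrm u$, giving the claimed convergence.

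The main obstacle I anticipate is the careful treatment of the set-valuedness of the monotone relations $k_i$ and $\gamma_e$: the integral functions $K_i,\Gamma_e$ need not be differentiable, so the convex-analytic existence step must be phrased in terms of subdifferentials rather than gradients, and the LaSalle argument must handle the possibility that the invariant set collapses $\mu$ but not immediately $\zeta$. The ``vice versa'' case (agents output-strictly MEIP, controllers MEIP) is handled symmetrically, since the cancellation in $\dot V$ does not care which side carries the strict dissipation term.
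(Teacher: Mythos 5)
Your proposal follows essentially the same route as the proof in the cited sources \cite{SISO_Paper,LCSS_Paper} (the paper itself only cites this theorem rather than proving it): existence of the equilibrium via Rockafellar's correspondence between maximal monotone relations and convex potentials, i.e.\ the optimal potential/flow problems, and convergence via the summed storage function whose bilinear coupling terms cancel through $\zeta=\E_\G^\top y$, $u=-\E_\G\mu$, exactly as in the in-paper computation of $\dot S$ in Section VI. The only point to tighten is the final invariance step in the ``controllers output-strictly passive'' case, where $\dot V\le-\sum_e\nu_e(\mu_e-\upmu_e)^2$ only collapses $\mu$ and one must argue (as the cited works do, using maximal monotonicity and the constant input $\mathrm u=-\E_\G\upmu$) that $y$, $\zeta$ converge to the specific equilibrium values rather than merely to the set $k(\mathrm u)$ — a subtlety you correctly flag but do not fully resolve.
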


\if(0)
{The theorem can also be restated in the language of optimization. Indeed, because the plants and controllers are MEIP, a theorem by Rockafellar implies that there exists convex functions $K,\Gamma$ such that the subdifferentials $\partial K,\partial K^\star, \partial \Gamma, \partial \Gamma^\star$ are equal to $k,k^{-1},\gamma,\gamma^{-1}$ respectively, where $^\star$ denotes the Legendre Transform. Thus, $\mathrm y$ satisfies the steady-state ``equation" $0\in k^{-1}(\mathrm{y}) + \E_\G \gamma(\E_\G^\top \mathrm{y})$ if and only if $(\mathrm y,\E_\G^\top \mathrm y)$ is a minimizer of the optimal potential problem (OPP) seen below. In that case, one can show that the limit of $(\mu(t),u(t))$, as guaranteed by the theorem, converges the the minimizer of the dual problem, known as the optimal flow problem (OFP):
\begin{center}
\begin{tabular}{ c||c }
 \textbf{Optimal Potential Problem}  & \textbf{Optimal Flow Problem}   \\
 (OPP) & (OFP)  \\\hline
 $ \begin{array}{cl} \underset{\mathrm{y,\zeta}}{\min} &K^\star(\mathrm{y}) + \Gamma(\zeta)\\
s.t.&\E_\G^\top\mathrm{y} = \zeta 
\end{array} $&  $ \begin{array}{cl}\underset{\mathrm{u,\mu}}{\min}& K(\mathrm{u}) + \Gamma^\star(\mu) \\
s.t. &\mathrm{u} = -{\E_\G}\mu.
\end{array} $ 
\end{tabular}
\end{center}
These two dual problems are some of the main problems studied in the field of network optimization, which studies optimization problems defined on networks, and have been extensively studied over the last few decades. See \cite{SISO_Paper, TAC_Paper} for a more expansive discussion about these problems and the network optimization framework in general.}
\fi
\vspace{-15pt}
\subsection{The Synthesis Problem for MEIP Multi-Agent Systems}
The synthesis problem of MAS with MEIP agents has been studied in \cite{LCSS_Paper,TAC_Paper}. The problem deals with synthesizing controllers $\{\Pi_e\}$ forcing the closed-loop network to converge to some desired steady-state output $\mathrm y^\star$, when the agents $\Sigma$ and the graph $\G$ are known. We cite the following results from \cite{TAC_Paper}:
\begin{thm}[\cite{TAC_Paper}] \label{thm.ClosedLoopSynthesis}
Let $\Sigma$ be any MEIP agents and let $\G$ be any graph. {Let $\mathrm y^\star \in \R^{|\V|}$ be any desired steady-state output. Then there exists a solution to the synthesis problem (i.e., a realization of the controllers $\Pi$) with desired output $\mathrm y^\star$ for which the controllers are output-strictly MEIP.}
\end{thm}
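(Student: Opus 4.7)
The proof would be constructive. The plan is to reduce the synthesis task to finding steady-state signals compatible with both the agents' input-output relations and the graph's conservation law, and then realize the controllers as simple memoryless nonlinearities that interpolate those signals.

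First, I compute the desired node-level steady state. I pick $\mathrm u^\star\in k^{-1}(\mathrm y^\star)$, which is nonempty because each $k_i$ is a maximally monotone relation of an MEIP agent, so $k_i^{-1}$ has full domain on the range of $k_i$; diffusive coupling then forces $\upzeta^\star = \E_\G^\top\mathrm y^\star$ as the edge steady-state input. Next I look for an edge vector $\upmu^\star$ satisfying $\E_\G\upmu^\star = -\mathrm u^\star$. This is a finite-dimensional linear equation whose solvability reduces to $\mathrm u^\star\in\IM(\E_\G)$. When $k_i^{-1}(\mathrm y^\star_i)$ is set-valued I exploit the freedom in choosing $\mathrm u^\star$ so that this inclusion holds (for a connected $\G$ this is the single scalar condition $\mathbf 1^\top\mathrm u^\star = 0$); once $\mathrm u^\star$ is feasible, any particular solution $\upmu^\star$ suffices, as different solutions differ by an element of $\ker(\E_\G)$, i.e.\ the cycle space of $\G$.

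Second, for each edge $e$ I design a static output-strictly MEIP controller $\mu_e = \psi_e(\zeta_e)$, where $\psi_e:\R\to\R$ is any smooth strictly increasing bijection satisfying $\psi_e(\upzeta^\star_e) = \upmu^\star_e$. The affine choice $\psi_e(\zeta) = \rho_e(\zeta-\upzeta^\star_e)+\upmu^\star_e$ with $\rho_e>0$ already suffices. Such a memoryless controller, viewed as a dynamical system with trivial state, is output-strictly MEIP with zero storage function and passivity index $1/\rho_e$, and its steady-state input-output relation coincides with the graph of $\psi_e$, which is maximally monotone because $\psi_e$ is a strictly monotone bijection of $\R$.

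Third, I verify convergence. By construction $\upmu^\star\in\gamma(\E_\G^\top\mathrm y^\star)$ and $\mathrm u^\star+\E_\G\upmu^\star=0$, hence $0\in k^{-1}(\mathrm y^\star)+\E_\G\gamma(\E_\G^\top\mathrm y^\star)$. Applying Theorem~\ref{thm.ClosedLoopSteadyStates} to the closed loop with MEIP agents and output-strictly MEIP controllers yields convergence to a steady state whose agent output is exactly $\mathrm y^\star$.

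The main obstacle is the feasibility step: ensuring that some element of $k^{-1}(\mathrm y^\star)$ lies in $\IM(\E_\G)$. This is precisely where the network-optimization machinery of \cite{TAC_Paper} enters, through the duality between the optimal flow and optimal potential problems associated with the MEIP agents and the graph, which guarantees that the required inclusion can always be satisfied. The remaining steps are comparatively routine once the steady-state signals have been fixed.
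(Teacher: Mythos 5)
The paper itself contains no proof of this theorem: it is imported from \cite{TAC_Paper}, and Remark \ref{rem.FormReconfig} only records the two known realizations (affine static controllers, or augmentation of a preferred output-strictly MEIP controller with a constant bias). Your second and third steps reproduce exactly the affine static realization, and those parts are sound: an affine, strictly increasing static map passing through $(\upzeta_e^\star,\upmu_e^\star)$ is output-strictly MEIP, and Theorem \ref{thm.ClosedLoopSteadyStates} then applies to the closed loop.

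The genuine gap is the step you flag at the end and then defer: producing $\mathrm u^\star\in k^{-1}(\mathrm y^\star)\cap\IM(\E_\G)$. This is not a routine feasibility check that OPP/OFP duality settles for free; it is the crux of the theorem, and the blanket claim that ``the required inclusion can always be satisfied'' is unjustified and, as stated, false within the architecture $u=-\E_\G\mu$. First, nonemptiness of $k^{-1}(\mathrm y^\star)$ does not follow from maximal monotonicity: a maximally monotone steady-state relation can have bounded output range (e.g.\ the graph of $\tanh$), so an arbitrary $\mathrm y^\star$ need not be a steady-state output of the agents at all. Second, even when $k_i^{-1}$ are single-valued functions --- which is precisely the situation under Assumption \ref{Assumption} --- there is no ``freedom in choosing $\mathrm u^\star$'': $k^{-1}(\mathrm y^\star)$ is one point, and for connected $\G$ the inclusion in $\IM(\E_\G)$ is the nontrivial scalar constraint that the entries of $k^{-1}(\mathrm y^\star)$ sum to zero. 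For the MEIP agents $\dot x_i=-x_i+u_i$, $y_i=x_i$ one has $k^{-1}(\mathrm y^\star)=\mathrm y^\star$, and the constraint fails whenever the entries of $\mathrm y^\star$ do not sum to zero; no choice of $\gamma$ and no duality argument changes the fact that every steady-state agent input lies in $\IM(\E_\G)$. So your proposal in effect proves the statement only for feasible $\mathrm y^\star$ and leaves unresolved exactly the point where the cited proof does its work (the precise formulation of achievable steady-states and the augmented controller class). A secondary gap: Theorem \ref{thm.ClosedLoopSteadyStates} only guarantees convergence to \emph{some} solution of $0\in k^{-1}(\mathrm y)+\E_\G\gamma(\E_\G^\top\mathrm y)$; to conclude the limit is the $\mathrm y^\star$ you designed for, you must also show this solution is unique, which needs strict monotonicity of $k^{-1}$ and fails for bare MEIP agents such as single integrators, where the limit drifts along $\mathrm{span}(\mathbbm{1})$ with the initial condition.
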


{
\begin{rem} \label{rem.FormReconfig}
The paper \cite{TAC_Paper} depicts many possible solutions to the synthesis problem with output-strictly MEIP controllers. It is shown that one can always solve the problem using affine controllers. Another suggested solution is an augmentation of any preferred output-strictly MEIP controller with a constant exogenous input. In practice, we will usually opt for the augmentation procedure when using the theorem as a tool for synthesis, as many real-world {networks} are already equipped with some {given edge controllers}. If this is not the case, one can use the affine controllers instead.
\end{rem}
}
\vspace{-15pt}
\section{Problem Formulation}\label{sec.probform}
This section presents the problem we aim to solve, and states the assumptions we make to tackle it. We consider a {diffusively-coupled network} of the form $\N_\G=(\G,\{\Sigma_i\}_{i\in\V},\{\Pi_e\}_{e\in\EE_\G})$, where $\G = (\V,\EE_\G)$ is the interaction graph, $\Sigma_i$ are the agents, and $\Pi_e$ are the edge controllers. For any subgraph $\HH=(\V,\EE_\HH)$ of $\G$, we can consider another diffusively-coupled network $\N_\HH=(\HH,\{\Sigma_i\}_{i\in\V},\{\Pi_e\}_{e\in\EE_\HH})$. We can think of $\N_\HH$ as a faulty version of $\N_\G$, in which the edge controllers corresponding to the edges $\EE_\G\setminus \EE_\HH$ have malfunctioned and stopped working. Edges can fault mid-run, but we assume that once an edge has malfunctioned, it remains faulty for the remainder to the run. If we let $\mathfrak{G}$ be the collection of all nonempty subgraphs of $\G$, then one can think of the closed-loop diffusively-coupled network as a switched system, where the switching signal $\varsigma:[0,\infty)\to\mathfrak{G}$ designates the functioning edges at each time instant. The assumption that faulty edges remain faulty throughout the run can be described using the switching signal $\varsigma$. Namely, we require that the switching signal $\varsigma$ is \emph{non-increasing}, in the sense that for all times $t_1 < t_2$, $\varsigma(t_2)$ is a subgraph of $\varsigma(t_1)$. We denote the number of faulty edges at time $t$ by $\hat{\varsigma}(t)$.

Now, consider a collection of agents $\{\Sigma_i\}$ and a graph $\G$. Fix some constant vector $\mathrm y^\star \in \R^{|\V|}$. Our goal is to design a control scheme for which the closed-loop network will converge to the steady-state output $\mathrm y^\star$. In the absence of faults, we can solve the synthesis problem as in Theorem \ref{thm.ClosedLoopSynthesis}. However, designing controllers while ignoring faults might prevent the system from achieving the control goal. {For that reason, we also seek for a \emph{fault monitoring system}, consisting of the agents and networked controllers, that attempts to identify aberrant behavior. When it does, it declares a fault \cite{Willsky1976}.} We now formulate the problems of {network} fault detection and isolation:
\begin{prob}[Network Fault Detection] \label{prob.NFD}
Let $\{\Sigma_i\}_{i\in\V}$ be a set of agents, $\G$ be a graph, $\mathrm y^\star$ be any desired steady-state output, and let $\varsigma(t)$ be any non-increasing switching signal. Find edge controllers $\{\Pi_e\}_{e\in \mathbb{E}_\G}$ and a {fault monitoring system} such that,
\begin{enumerate}
\item[i)] if no faults occur, i.e. $\varsigma(t) = \G,\ \forall t$, then the closed-loop {diffusively-coupled network} converges to the steady-state output $\mathrm y^\star$, {and the fault monitoring system never declares a fault};
\item[ii)] if faults do occur, i.e. $\exists t, \varsigma(t) \neq \G$, then {the fault monitoring system declares a fault}.
\end{enumerate}
\end{prob}
\begin{prob}[Network Fault {Detection and} Isolation] \label{prob.NFI}
Let $\{\Sigma_i\}_{i\in\V}$ be a set of agents, $\G$ be a graph, and $\mathrm y^\star$ be any desired steady-state output. 
Given some $r<|\EE_\G|$, find a synthesis for the edge controllers such that for any monotone non-increasing switching signal $\varsigma$ such that $\hat{\varsigma}(t)\le r,\ \forall t$, the closed-loop {diffusively-coupled network} converges to the steady-state output $\mathrm y^\star$, {i.e., the effect of up to $r$ faults can be isolated from the network.\footnote{Some authors refer to fault isolation in this case as identifying the faulty links, which is achieved by the algorithms in Subsection \ref{subsec.FI} as a side effect.}}
\end{prob}

\vspace{-15pt}
\subsection{Assumptions} \label{subsec.Assumptions}
We now state the assumptions used throughout the work. For the remainder of this work, we fix the agents $\{\Sigma_i\}$, and make the following assumption.
\begin{assump} \label{Assumption}
The agent dynamics $\{\Sigma_i\}$ are MEIP, and the chosen controller dynamics $\{\Pi_e\}$ are output-strictly MEIP (or vice versa). Moreover, the relations $k^{-1}_i$, $\gamma_e$ are $C^1$ functions. Furthermore, the derivative $\frac{dk_i^{-1}}{d\mathrm y_i}$ is positive at any $\mathrm y_i \in \mathbb{R}$.
\end{assump}
The passivity assumption assures that all the systems $\N_\HH$ will globally asymptotically converge to some limit. The added smoothness assumptions, together with the positive derivative assumption, are technical assumptions that are needed to apply tools from manifold theory that will be used later. {The passivity assumption allows the consideration of, e.g. port-Hamiltonian systems and gradient-descent systems \cite{SISO_Paper}. Moreover, if a system satisfies any dissipation inequality with respect to all equilibria, one can use output-feedback and input-feedthrough to force MEIP \cite{Sharf2019a}. Theorem \ref{thm.MEIPAgents} shows that the smoothness assumption holds for many control-affine systems. Moreover, it can be easily shown using the definition of passivity that if $\Sigma_i$ is output-strictly MEIP with passivity index $\rho$, then $\frac{dk_i^{-1}}{d\mathrm y_i} \ge \rho>0$ whenever $k_i^{-1}$ is differentiable. 
Furthermore, the smoothness assumption can be relaxed by allowing $k^{-1}_i,\gamma_e$ to not be differentiable at finitely many points. The arguments presented below still hold, but require heavier tools from measure theory, so we avoid them for clarity of presentation.}

In some cases, we need to sense the state of the system, including the state of the controllers. Sometimes, the control model is such that the controller state has a physical meaning that can be measured even for non-connected agents. For example, in the traffic control model in \cite{Bando1995}, the state $\eta_e$ is the relative position between two vehicles. However, the controller state of some systems might not have a physical meaning. For example, consider a collection of robots trying to synchronize their positions, where the output $y(t)$ is the position of each robot and the edge controllers are PI controllers. In that case, the controller state $\eta(t)$ has no physical meaning, and thus cannot be defined for non-connected agents. Some of the techniques developed later require us to be able to sense the state of the system, including the controllers' states. Thus, we will sometimes make the following assumption: 
\begin{assump} \label{Assumption_g_1}
The controllers $\Pi_e$ are static nonlinearities given by the functions $g_e$, i.e., $\mu_e{(t)} = g_e(\zeta_e{(t)})$ {for all $t$}. In this case, the steady-state relation $\gamma_e$ is equal to the function $g_e$, and the closed-loop system is $\dot{x} = f(x,-\E_\G g(\E_\G^\top h(x)))$, or equivalently, $\dot{x}_i = f_i\left(x_i,\sum_{e=\{i,j\}} g_e(h_j(x_j)-h_i(x_i))\right)$.
\end{assump}
In one of the methods below, we will want to have a clear relationship between the measurements $h_i(x_i)$ and the storage functions $S_i(x_i)$. To achieve this, we follow Theorem \ref{thm.MEIPAgents} and assume that the agents are control-affine:
\begin{assump} \label{Assumption_g_2}
Assumption \ref{Assumption_g_1} holds, and the agents have the form $\dot{x}_i = -f_i (x_i) + q_i(x_i) u_i ;\ y_i = h_i(x_i)$. Thus, the closed-loop system is governed by:
\begin{align}
\dot{x}_i = -f_i(x_i) + q_i(x_i) \sum_{e=\{i,j\}} g_e(h_j(x_j) - h_i(x_i)).
\end{align}
\end{assump}

It should be noted that the MEIP property for the static controllers $g_e$ reduces to monotonicity of the functions $g_e$. 

In the next section, we start heading toward a solution to Problems \ref{prob.NFD} and \ref{prob.NFI}. We do so by exhibiting a method for asymptotically differentiating between the nominal dynamical system $\N_\G$ and the faulty dynamical systems $\N_\HH$. Later, we show how this asymptotic differentiation can induce a finite-time differentiation of the systems.

\section{Asymptotic Differentiation\\ Between Networks} \label{sec.IndicationVectors}
In this section, we develop the notion of edge-indication vectors, which will be used for {network} fault detection later. In \cite{Sharf2018b}, the notion of indication vectors was first developed. These are constant exogenous inputs used to drive the closed-loop system, chosen appropriately to give different steady-state limits for systems with identical agents and controllers, but different underlying graphs. The idea of using constant exogenous inputs to drive the system into favorable steady-state outputs was also used in \cite{Sharf2019d} to give a network reconstruction algorithm with optimal time complexity, although it considers sets of multiple constant exogenous inputs applied in succession. Here, we opt for a slightly different strategy. 

In \cite{Sharf2018b,Sharf2019d}, the problem of network reconstruction was considered, in which we cannot affect the agents, controllers, or the underlying graph. In network FDI, we are doing synthesis, so we can manipulate the controllers and (in most cases) the underlying network. For that reason, we opt for a slightly different idea, in which we add a constant exogenous signal to the \emph{output of the controllers}, that is, we consider $u(t) = -\E_\G(\mu(t) + \mathrm w)$. A system implementing this control law is said to have the interaction protocol $(\Pi,\mathrm w)$. Analogously to the notion of indication vectors, we desire that networks with identical agents and controllers, but different underlying graphs, will be forced to converge to different steady-state outputs. This is because we can monitor the output $y$ of the system and use it to detect changes in the underlying graph, i.e., network faults. For that, we first determine what the steady-state limit is for these systems $(\G, \Sigma, (\Pi,\mathrm{w}))$.

\begin{prop} \label{prop.Indication}
Consider a diffusively-coupled network $\N_\HH = (\HH,\Sigma,\Pi)$ satisfying Assumption \ref{Assumption}. Suppose that $\mathrm w\in\mathbb{R}^{|\EE_\HH|}$ is any constant signal added to the controller output, i.e., the loop is closed as $u(t) = -\E_\HH(\mu(t) + \mathrm w)$.
Then $\mathrm y$ is a steady-state output of the closed-loop system if and only if
\begin{align}\label{eq.SteadyState_1}
k^{-1}(\mathrm y) + \E_\HH \gamma (\E_\HH^\top \mathrm y) = -\E_\HH \mathrm w.
\end{align}
\end{prop}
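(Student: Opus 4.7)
The plan is to mirror the derivation of the steady-state condition $0\in k^{-1}(\mathrm y)+\E_\G\gamma(\E_\G^\top\mathrm y)$ that was recalled in Section \ref{sec.background}, adapting it to account for the added exogenous signal $\mathrm w$. The proposition is essentially an algebraic manipulation of the equilibrium conditions for the four signals $(\mathrm u,\mathrm y,\upzeta,\upmu)$ under the modified closed-loop coupling $u=-\E_\HH(\mu+\mathrm w)$, so I do not expect a deep obstacle; the work is in cleanly stating both implications and invoking the definitions of $k$ and $\gamma$ correctly (note that these are set-valued, so one must be careful to write ``$\in$" rather than ``$=$").

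First I would write down the four steady-state identities. A steady-state quadruple $(\mathrm u,\mathrm y,\upzeta,\upmu)$ of $(\HH,\Sigma,(\Pi,\mathrm w))$ must satisfy: (i) $\mathrm y\in k(\mathrm u)$, equivalently $\mathrm u\in k^{-1}(\mathrm y)$, since $(\mathrm u_i,\mathrm y_i)$ is a steady-state input-output pair of $\Sigma_i$ for each $i$; (ii) $\upmu\in\gamma(\upzeta)$, for the analogous reason on the controller side; (iii) $\upzeta=\E_\HH^\top\mathrm y$ by the interconnection feeding the controllers; and (iv) $\mathrm u=-\E_\HH(\upmu+\mathrm w)$ by the modified feedback law.

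For the forward direction, I would substitute (iii) into (ii) to obtain $\upmu\in\gamma(\E_\HH^\top\mathrm y)$, hence $\E_\HH\upmu\in \E_\HH\gamma(\E_\HH^\top\mathrm y)$. Combining with (i) yields $\mathrm u+\E_\HH\upmu\in k^{-1}(\mathrm y)+\E_\HH\gamma(\E_\HH^\top\mathrm y)$. Finally, rewriting (iv) as $\mathrm u+\E_\HH\upmu=-\E_\HH\mathrm w$ gives $-\E_\HH\mathrm w\in k^{-1}(\mathrm y)+\E_\HH\gamma(\E_\HH^\top\mathrm y)$, which is \eqref{eq.SteadyState_1}.

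For the converse, assume \eqref{eq.SteadyState_1}. By the definition of the set-valued sum, there exist $\mathrm u\in k^{-1}(\mathrm y)$ and $\upmu\in \gamma(\E_\HH^\top\mathrm y)$ with $\mathrm u+\E_\HH\upmu=-\E_\HH\mathrm w$. Define $\upzeta:=\E_\HH^\top\mathrm y$; then (i)--(iii) hold by construction, and (iv) follows by rearranging the equation for $\mathrm u$. Hence $(\mathrm u,\mathrm y,\upzeta,\upmu)$ is a steady-state of the modified closed loop with output $\mathrm y$, completing the equivalence. The only subtlety worth highlighting in the write-up is the use of set-valuedness, since in general $k^{-1}$ and $\gamma$ need not be functions; under Assumption \ref{Assumption} they are differentiable functions and the argument simplifies, but the proof as stated does not need smoothness.
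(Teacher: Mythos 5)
Your proof is correct and is essentially the paper's argument: the paper simply compresses it by observing that the bias turns the controller steady-state relation into $\tilde\gamma(\zeta)=\gamma(\zeta)+\mathrm w$ and then invoking the already-cited characterization $0\in k^{-1}(\mathrm y)+\E_\HH\tilde\gamma(\E_\HH^\top\mathrm y)$, whereas you unpack that "simple manipulation" explicitly in both directions. Your remark on set-valuedness versus Assumption \ref{Assumption} is consistent with how the paper states \eqref{eq.SteadyState_1} as an equality.
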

\begin{proof}
Follows from the {discussion after Definition \ref{def.SS}}, as the new steady-state relation for the controllers is given as $\tilde{\gamma}(\zeta) = \gamma(\zeta) + \mathrm w$.
\end{proof}

In our case, the constant signal $\mathrm w$ will be in $\R^{|\EE_\G|}$, as we determine the exogenous controller output on each edge of $\G$. If one then considers the system $\N_\HH$ for some $\HH\in\mathfrak{G}$, then the exogenous controller output will be different from $\mathrm w$, as it will only have entries of $\mathrm w$ corresponding to edges in $\HH$. To formulate this, take any graph $\HH\in\mathfrak{G}$, and let $P_\HH$ be the linear map $\R^{|\EE_\G|}\to\R^{|\EE_\HH|}$ removing entries corresponding to edges absent from $\HH$. In other words, this is a $\mathbb{R}^{|\EE_\HH|\times |\EE_\G|}$ matrix with entries in $\{0,1\}$, whose rows are the rows of the identity matrix $\mathrm{Id}\in \R^{|\EE_\G|\times |\EE_\G|}$ corresponding to the edges of $\HH$.

We can now define the notion of edge-indication vectors.
\begin{defn}\label{def.IndicationVectors}
Let $(\G,\Sigma,\Pi)$ be a network satisfying Assumption \ref{Assumption}. Let $\mathrm w \in \R^{|\EE_\G|}$ by any vector, and for any graph $\HH \in \mathfrak{G}$, we denote the solution of \eqref{eq.SteadyState_1} with underlying graph $\HH$ and exogenous input $P_\HH \mathrm w$ by $\mathrm y_\HH$. 
\begin{itemize}
\item The vector $\mathrm w$ is called a $(\mathfrak{G},\HH)$-edge-indication vector if for any $\HH^\prime \in \mathfrak{G}$ {such that $\HH^\prime \neq \HH$, we have} $\mathrm y_{\HH} \neq \mathrm y_{\HH^\prime}$.
\item The vector $\mathrm w$ is called a $\mathfrak{G}$-edge-indication vector if for any two graphs $\HH_1\neq \HH_2$ in $\mathfrak{G}$,  $\mathrm y_{\HH_1} \neq \mathrm y_{\HH_2}$.
\end{itemize}
\end{defn}

\begin{note}
An edge-indication vector is a bias chosen on each edge in $\G$. This bias can be programmed into the controllers and nodes, and need not be changed nor computed on-line. In this light, for any $\mathrm w \in \R^{|\EE_\G|}$, \eqref{eq.SteadyState_1} transforms into
\begin{align}\label{eq.SteadyState}
k^{-1}(\mathrm y) + \E_\HH \gamma (\E_\HH^\top  \mathrm y) = -\E_\HH P_\HH \mathrm w,
\end{align}
\end{note}

We wish to find a $\mathfrak G$-edge-indication vector for given agents and controllers, or at least a $(\mathfrak G,\G)$-edge-indication vector. As in \cite{Sharf2018b}, we use randomization. We claim that random vectors are $\mathfrak G$-edge-indication vectors with probability $1$.
\begin{thm} \label{thm.GeneralRandomIndication}
Let $\PP$ be any absolutely continuous\footnote{{Unless stated otherwise, absolute continuity is with respect to the Lebesgue measure.}} probability measure on $\R^{|\EE_\G|}$. Let $\mathrm w$ be a vector sampled according to $\PP$. Then $\PP(\mathrm w \text{ is a $\mathfrak{G}$-edge-indication vector}) = 1$.
\end{thm}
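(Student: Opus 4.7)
Since $\mathfrak{G}$ is finite, by subadditivity it suffices to prove that for any two distinct $\HH_1,\HH_2 \in \mathfrak{G}$, the ``bad'' set $B_{\HH_1,\HH_2} = \{\mathrm{w} \in \R^{|\EE_\G|}: \mathrm{y}_{\HH_1} = \mathrm{y}_{\HH_2}\}$ has Lebesgue measure zero: the union over the finitely many pairs is then still Lebesgue-null, and absolute continuity of $\PP$ converts this into the probability statement. The first preparatory step is to check that the assignment $\Phi_\HH: \mathrm{w} \mapsto \mathrm{y}_\HH$ is a well-defined $C^1$ map. Uniqueness of $\mathrm{y}_\HH$ in \eqref{eq.SteadyState_1} follows by pairing the equation for two candidate solutions $\mathrm y,\mathrm y'$ with $\mathrm y-\mathrm y'$: strict monotonicity of $k^{-1}$ (since $\frac{dk_i^{-1}}{d\mathrm y_i}>0$) together with monotonicity of $\gamma$ forces $\mathrm y=\mathrm y'$. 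Existence follows by absorbing $\mathrm w$ into the controller relation as in Proposition \ref{prop.Indication} and invoking Theorem \ref{thm.ClosedLoopSteadyStates}. Smoothness of $\Phi_\HH$ follows from the implicit function theorem applied to \eqref{eq.SteadyState}, since the Jacobian of the left-hand side with respect to $\mathrm y$, namely $\frac{dk^{-1}}{d\mathrm y} + \E_\HH \frac{d\gamma}{d\zeta} \E_\HH^\top$, is everywhere positive definite, being a positive diagonal plus a positive semi-definite term.

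The main technical step is a single derivative computation. Pick an edge $e$ in the symmetric difference $\EE_{\HH_1} \triangle \EE_{\HH_2}$; without loss of generality, $e \in \EE_{\HH_1} \setminus \EE_{\HH_2}$. Because $P_{\HH_2}\mathrm{w}$ omits the $e$-th coordinate, $\Phi_{\HH_2}$ does not depend on $\mathrm{w}_e$. Differentiating the defining equation of $\Phi_{\HH_1}$ in $\mathrm{w}_e$ yields
\[
\left(\frac{dk^{-1}}{d\mathrm y} + \E_{\HH_1} \frac{d\gamma}{d\zeta} \E_{\HH_1}^\top\right) \frac{\partial \Phi_{\HH_1}}{\partial \mathrm w_e} = -(\E_{\HH_1})_e,
\]
where $(\E_{\HH_1})_e$ denotes the $e$-th column of the incidence matrix, which has exactly two nonzero entries $\pm1$. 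Since the matrix on the left is invertible (positive definite) and the right-hand side is non-zero, we get $\frac{\partial \Phi_{\HH_1}}{\partial \mathrm w_e}(\mathrm w) \neq 0$ at every $\mathrm w$. Thus the $C^1$ map $F := \Phi_{\HH_1} - \Phi_{\HH_2} : \R^{|\EE_\G|} \to \R^{|\V|}$ satisfies $\frac{\partial F}{\partial \mathrm w_e}(\mathrm w) \neq 0$ for all $\mathrm w$.

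It remains to conclude that $B_{\HH_1,\HH_2} = F^{-1}(0)$ is Lebesgue-null. For each $i \in \V$, let $U_i$ be the open set where $\frac{\partial F_i}{\partial \mathrm w_e}$ does not vanish; the previous paragraph gives $\bigcup_{i\in\V} U_i = \R^{|\EE_\G|}$. On $U_i$, the implicit function theorem represents $F_i^{-1}(0) \cap U_i$ as a $C^1$ hypersurface, which is Lebesgue-null. Since $F^{-1}(0) \subset F_i^{-1}(0)$ for every $i$, we obtain $F^{-1}(0) = \bigcup_i (F^{-1}(0) \cap U_i) \subset \bigcup_i (F_i^{-1}(0) \cap U_i)$, a finite union of null sets, hence null. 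The hard part of the argument is the combination of the first two paragraphs: verifying that $\Phi_\HH$ is well-defined and $C^1$ uniformly in $\mathrm{w}$ under Assumption \ref{Assumption}, and then isolating a single coordinate direction $\mathrm{w}_e$ in which one of the maps is active and the other is not, so that the vector-valued $F$ can be handled componentwise rather than by a full-rank Jacobian argument.
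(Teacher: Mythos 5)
Your proof is correct, and it reaches the conclusion by a genuinely different route than the paper, even though the two arguments share the same structural core (positive-definiteness of $\nabla k^{-1}(\mathrm y) + \E_\HH \nabla\gamma(\E_\HH^\top\mathrm y)\E_\HH^\top$ from Assumption \ref{Assumption}, and the observation that an edge $e$ in the symmetric difference is a direction to which only one of the two systems responds). The paper works in the joint $(\mathrm y,\mathrm w)$-space: it realizes the steady-state sets $\mathcal A_{\HH}$ as $|\EE_\G|$-dimensional manifolds via the implicit function theorem, pushes $\PP$ forward to an absolutely continuous measure on each manifold, and shows that for $\G_1\neq\G_2$ the tangent spaces $\ker dF_{\G_1}$ and $\ker dF_{\G_2}$ differ at every intersection point (using the vector $(0,\mathbbm 1_e)$), so the intersection has dimension at most $|\EE_\G|-1$ and hence measure zero on each manifold. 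You instead stay entirely in $\mathrm w$-space: you first establish that the solution map $\Phi_\HH:\mathrm w\mapsto \mathrm y_\HH$ is globally well defined (your uniqueness argument by pairing with $\mathrm y-\mathrm y'$ is a point the paper simply asserts from Assumption \ref{Assumption}) and $C^1$, then show by implicit differentiation that $\partial(\Phi_{\HH_1}-\Phi_{\HH_2})/\partial\mathrm w_e$ never vanishes, and finally conclude that the zero set of this difference map is Lebesgue-null through a componentwise hypersurface argument. What your route buys is a cleaner measure-theoretic bookkeeping: everything is referred directly to Lebesgue measure on $\R^{|\EE_\G|}$, so you avoid both the push-forward measures on the manifolds $\mathcal A_\HH$ and the slightly delicate step of converting ``tangent spaces differ at every intersection point'' into a dimension bound on the (possibly non-manifold) intersection, which the paper leaves implicit. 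What the paper's formulation buys is that the manifolds $\mathcal A_\HH$ and their tangent structure are reused verbatim in the proof of Theorem \ref{thm.SynthesisRandomIndication}, where the argument must be restricted to the subspace $\ker\E_\HH P_\HH$; your $\mathrm w$-space formulation would also adapt, but the cycle-based direction $\mathbbm 1_E$ used there replaces your single-coordinate direction $\mathbbm 1_e$, so the component-isolating step would need to be redone with a directional rather than a partial derivative.
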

\begin{proof}
From the definition, $\mathrm w$ is not a $\mathfrak G$-edge-indication vector if and only if there are two graphs $\G_1,\G_2\in\mathfrak G$ such that the same vector $\mathrm y$ solves equation \eqref{eq.SteadyState} for both graphs. We show that for any $\G_1,\G_2 \in \mathfrak G$, the probability that the two equations share a solution is zero. 

Let $n$ be the number of vertices in $\G$. For each graph $\HH \in \mathfrak{G}$, define a function $F_\HH:\mathbb{R}^n\times \mathbb{R}^{|\EE_\G|} \to \mathbb{R}^n$ by $F_\HH(\mathrm y,\mathrm w) = k^{-1}(\mathrm y) + \E_\HH \gamma(\E_\HH^\top  \mathrm y) + \E_\HH P_\HH \mathrm w$. The set of steady-state exogenous input and output pairs for the system $\N_\HH$ is given by the set $\mathcal A_\HH = \{(\mathrm{y,w}):\ F_\HH(\mathrm{y,w}) = 0\}$. 
We note that the differential $dF_\HH$ always has rank $n$. Indeed, it can be written as $[\nabla k^{-1}(\mathrm y) + \E_\HH \nabla \gamma(\E_\HH^\top  \mathrm y), \E_\HH P_\HH]$, {where $\nabla \gamma(\E_\HH^\top \mathrm y)\in \R^{|\E_\HH|\times n}$}. {By assumption \ref{Assumption}, the first matrix, of size $n\times n$, is positive-definite as a sum of a positive-definite matrix and a positive semi-definite matrix}, hence invertible. Thus, by the implicit function theorem, $\mathcal{A}_\HH$ is a manifold of dimension $|\EE_\G|$.
Moreover, by Assumption \ref{Assumption}, {for any $\mathrm w$ there is a unique $\mathrm y$ such that \eqref{eq.SteadyState_1} is satisfied. Thus, $\PP$ gives rise to an absolutely continuous\footnote{{With respect to the $|\EE_\G|$-dimensional Hausdorff measure, or equivalently, with respect to the standard Riemannian volume form on $\mathcal{A}_\mathcal{H}$.}} probability measure on each manifold $\mathcal A_\HH$.\footnote{{As the push-forward measure of $\mathbb{P}$ under the map $\mathrm w\mapsto (\varphi(\mathrm w),\mathrm w)$, where $\varphi$ is the local map $\mathrm w \mapsto \mathrm y$ given by the implicit function theorem.}}} Hence, it is enough to show that for any $\G_1\neq \G_2$, the intersection $\mathcal A_{\G_1}\cap \mathcal A_{\G_2}$ has dimension $\le |\EE_\G|-1$.

To show this, we take any point $(\mathrm{y,w}) \in \mathcal A_{\G_1}\cap \mathcal A_{\G_2}$. As both $\mathcal A_{\G_1}, \mathcal A_{\G_2}$ are of dimension $|\EE_\G|$, it is enough to show that they do not have the same tangent space at $(\mathrm{y,w})$. The tangent space of the manifold $\mathcal A_\HH$ is given by the kernel of the differential $dF_{\HH}(\mathrm{y,w}) : \mathbb{R}^n\times \mathbb{R}^{|\EE_\G|} \to \mathbb{R}^n$, so we show that if $\G_1\neq \G_2$, the kernels $\ker dF_{\G_1} , \ker dF_{\G_2}$ are different at $(\mathrm{y,w})$. As $\G_1\neq\G_2$, we can find an edge existing in one of the graphs and not the other. Assume without loss of generality that the edge $e$ exists in $\G_1$ but not in $\G_2$, and let $v=(0,\mathbbm{1}_e)$, where $\mathbbm{1}_e$ is the vector in $\mathbb{R}^{|\EE_\G|}$ with all entries zero, except for the $e$-th entry, which is equal to $1$. Then $v \in \ker dF_{\HH}$ if and only if $\mathbbm{1}_e\in \ker (\E_\HH P_\HH)$. It is clear that $\mathbbm{1}_e\not \in \ker (\E_{\G_1}P_{\G_1})$, {as $P_{\G_1}\mathbbm{1}_e = \mathbbm{1}_e$,} and thus $\E_{\G_1}P_{\G_1}\mathbbm{1}_e = \E_{\G_1}\mathbbm{1}_e \neq 0$. Moreover, $\mathbbm{1}_e\in \ker (\E_{\G_2}P_{\G_2})$, as $P_{\G_2}\mathbbm{1}_e = 0$, so $\ker dF_{\G_1} \neq \ker dF_{\G_2}$ at $(\mathrm{y,w})$. Thus $\mathcal{A}_{\G_1} \cap \mathcal{A}_{\G_2}$ is of dimension $\le |\EE_\G|-1$, meaning that it is a zero-measure set inside both $\mathcal{A}_{\G_1},\mathcal{A}_{\G_2}$.
\end{proof}

Theorem \ref{thm.GeneralRandomIndication} presents a way to choose a $\mathfrak{G}$-edge-indication vector, but does not deal {with} the control goal. One could satisfy the control goal by using Theorem \ref{thm.ClosedLoopSynthesis} to solve the synthesis problem for the original graph $\G$, but we cannot assure we get an edge-indication vector. 
Note that any $\mathrm w \in \ker \E_\G P_\G$ gives a solution of \eqref{eq.SteadyState} identical to the solution for $\mathrm w = 0$. Thus, choosing an exogenous control input in $\ker \E_\G P_\G$ does not change the steady-state output of the system $\N_\G$. However, it does change the steady-state output of all other systems $\N_\HH$. This suggests to search for an edge-indication vector in $\ker \E_\G P_\G$. We show that this is possible if $\G$ is ``sufficiently connected", defined below in an exact manner.

\begin{prop}[Menger's Theorem \cite{GraphTheory}]
Let $\G$ be any connected graph. The following conditions are equivalent:
\begin{enumerate}
\item Between every two nodes there are $k$ vertex-disjoint simple paths.
\item For any $k-1$ vertices $v_1,\cdots,v_{k-1}\in \V$, the graph $\G-\{v_1,\cdots,v_{k-1}\}$ is connected.
\end{enumerate}
Graphs satisfying either of these conditions are called $k$-connected graphs. 
\end{prop}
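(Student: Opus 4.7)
The plan is to prove the two directions separately. For (1) $\Rightarrow$ (2), I would fix any $k-1$ vertices $v_1,\dots,v_{k-1}\in\V$ and any two other vertices $u,w\in\V$. By assumption (1) there are $k$ vertex-disjoint simple $u$-$w$ paths, which (by the definition of vertex-disjoint given earlier) share no internal vertex, so each $v_i$ lies on at most one of them. Consequently the $k-1$ removed vertices destroy at most $k-1$ paths and at least one survives inside $\G-\{v_1,\dots,v_{k-1}\}$, showing $u,w$ remain connected after removal. Since $u,w$ were arbitrary, $\G-\{v_1,\dots,v_{k-1}\}$ is connected.

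The converse (2) $\Rightarrow$ (1) is the substantive content of Menger's theorem. I would proceed by contraposition: assuming some pair $s,t\in\V$ admits fewer than $k$ internally vertex-disjoint simple paths, I would exhibit a set of at most $k-1$ vertices distinct from $s,t$ whose removal disconnects $s$ from $t$, violating (2). The cleanest route is to reduce to the max-flow min-cut theorem on an auxiliary directed capacitated graph $\G'$: split each vertex $v\neq s,t$ into two copies $v^-,v^+$ joined by a directed edge of capacity $1$, and replace each undirected edge $\{u,v\}$ of $\G$ with a pair of directed edges $u^+\to v^-$ and $v^+\to u^-$ of large capacity. Vertex-disjoint $s$-$t$ paths in $\G$ then correspond bijectively to integer $s$-$t$ flows of unit value along edge-disjoint routes in $\G'$, while minimum vertex $s$-$t$ separators in $\G$ correspond to minimum edge $s$-$t$ cuts in $\G'$ consisting solely of splitting edges.

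The main obstacle is the bookkeeping in this reduction, in particular verifying (a) that a minimum edge cut in $\G'$ can always be chosen to consist exclusively of splitting edges, so that it encodes a vertex cut in $\G$, and (b) that an integer max-flow in $\G'$ decomposes into unit-flow paths that project back to genuinely internally vertex-disjoint simple paths in $\G$. Both statements follow from the specific unit-capacity structure of the splitting edges; given them, the integer max-flow min-cut theorem immediately yields the required equality between the maximum number of internally vertex-disjoint $s$-$t$ paths and the minimum vertex $s$-$t$ separator, completing the contrapositive. Since the full argument is entirely standard, in the final write-up we would simply refer the reader to \cite{GraphTheory}.
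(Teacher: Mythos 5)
The paper offers no proof of this statement at all: Menger's theorem is imported verbatim from \cite{GraphTheory} as a known result, so there is no in-paper argument to compare against. Your sketch is the standard one and is sound in outline: the direction (1) $\Rightarrow$ (2) by observing that each of the $k-1$ deleted vertices can meet at most one of the $k$ internally disjoint $u$--$w$ paths, and the converse via vertex-splitting and the integral max-flow min-cut theorem, with the two bookkeeping facts (a) and (b) you identify being exactly the points one must check.

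One concrete step in your contrapositive does fail as stated: if the pair $s,t$ witnessing fewer than $k$ internally disjoint paths happens to be \emph{adjacent}, then no set of vertices disjoint from $\{s,t\}$ can separate $s$ from $t$, so you cannot ``exhibit a set of at most $k-1$ vertices whose removal disconnects $s$ from $t$.'' Max-flow min-cut gives the local Menger theorem only for non-adjacent pairs; the global characterization of $k$-connectivity stated in the proposition (sometimes attributed to Whitney) needs the usual extra step for adjacent pairs: delete the edge $\{s,t\}$, apply the non-adjacent (or edge-deleted) version to obtain $k-1$ internally disjoint paths avoiding that edge, and add the edge $\{s,t\}$ itself as the $k$-th path. (One also implicitly needs $|\V|\ge k+1$ for condition (2) to make sense, which is the standing convention in \cite{GraphTheory}.) With that patch, deferring the remaining routine details to the reference is entirely appropriate, since the paper itself does exactly that.
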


We will take special interest in $2$-connected graphs. Specifically, we can state the following theorem about edge-indication vectors in $\ker \E_\G P_\G$.

\begin{thm}\label{thm.SynthesisRandomIndication}
Let $\PP$ be any absolutely continuous probability distribution on $\ker \E_\HH P_\HH$, where $\HH$ is 2-connected. Suppose furthermore that $\mathrm w$ is a vector sampled according to $\PP$. Then $\PP(\mathrm w \text{ is a ($\mathfrak{G},\HH)$-edge-indication vector}) = 1$.
\end{thm}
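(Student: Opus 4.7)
The plan is to mimic the manifold-intersection argument from the proof of Theorem \ref{thm.GeneralRandomIndication}, now restricted to the linear subspace $\ker \E_\HH P_\HH$. For each $\HH'\in\mathfrak G$ I define the restricted solution manifold $\mathcal A_{\HH'}^{\mathrm{res}} = \{(\mathrm y,\mathrm w)\in \R^{|\V|}\times\ker\E_\HH P_\HH : F_{\HH'}(\mathrm y,\mathrm w)=0\}$, which by the positive-definiteness of $\nabla_{\mathrm y} F_{\HH'}$ established in the earlier proof and the implicit function theorem is a manifold of dimension $\dim \ker\E_\HH P_\HH$. For the special case $\HH'=\HH$ the defining constraint $\mathrm w\in\ker\E_\HH P_\HH$ makes the $\mathrm w$-dependent term $\E_\HH P_\HH\mathrm w$ vanish, so $F_\HH=0$ reduces to the $\mathrm w$-free relation $k^{-1}(\mathrm y)+\E_\HH\gamma(\E_\HH^\top\mathrm y)=0$ with unique solution $\mathrm y_\HH$; hence $\mathcal A_\HH^{\mathrm{res}} = \{\mathrm y_\HH\}\times\ker\E_\HH P_\HH$. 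Since $\PP$ lifts to an absolutely continuous measure on $\mathcal A_\HH^{\mathrm{res}}$, it suffices to prove that $\mathcal A_\HH^{\mathrm{res}}\cap\mathcal A_{\HH'}^{\mathrm{res}}$ has strictly smaller dimension than $\mathcal A_\HH^{\mathrm{res}}$ for every $\HH'\neq\HH$ in $\mathfrak G$.

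The same tangent-space computation as in Theorem \ref{thm.GeneralRandomIndication} yields $T_{(\mathrm y,\mathrm w)}\mathcal A_\HH^{\mathrm{res}}=\{0\}\times\ker\E_\HH P_\HH$ at every point, while $T_{(\mathrm y,\mathrm w)}\mathcal A_{\HH'}^{\mathrm{res}}$ is the graph of the linear map $\delta \mathrm w \mapsto -M^{-1}\E_{\HH'}P_{\HH'}\delta \mathrm w$ for the positive-definite matrix $M=\nabla k^{-1}+\E_{\HH'}\nabla\gamma\,\E_{\HH'}^\top$. These two tangent spaces coincide precisely when $\ker\E_\HH P_\HH \subseteq \ker\E_{\HH'}P_{\HH'}$, so the entire theorem reduces to the graph-theoretic claim that this inclusion fails for every $\HH'\in\mathfrak G$ with $\HH'\neq\HH$.

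I would prove this claim by splitting into two cases. If there exists an edge $e\in\EE_{\HH'}\setminus\EE_\HH$, then $\mathbbm{1}_e\in\ker\E_\HH P_\HH$ (since $P_\HH\mathbbm{1}_e=0$), while $\E_{\HH'}P_{\HH'}\mathbbm{1}_e = \E_{\HH'}\mathbbm{1}_e$ is a nonzero incidence vector, exactly as in the proof of Theorem \ref{thm.GeneralRandomIndication}. Otherwise $\EE_{\HH'}\subsetneq\EE_\HH$; picking $e\in\EE_\HH\setminus\EE_{\HH'}$ and any $f\in\EE_{\HH'}$, I would invoke the 2-connectedness of $\HH$ through the classical consequence of Menger's theorem that any two edges of a 2-connected graph lie on a common simple cycle, obtaining a simple cycle $C\subseteq\HH$ through both $e$ and $f$. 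The signed indicator $\mathrm w$ of $C$ (zero outside $\EE_\HH$) then lies in $\ker\E_\HH P_\HH$ as an element of the cycle space of $\HH$, while $P_{\HH'}\mathrm w$ is a nonempty proper sub-collection of the edges of $C$ (it contains $f$ but excludes $e$). A short combinatorial check, exploiting that the vertices of a simple cycle are distinct so that the arc-start and arc-end vertex sets of $P_{\HH'}\mathrm w$ are disjoint subsets of $\V$, then shows $\E_{\HH'}P_{\HH'}\mathrm w\neq 0$, producing the required element of $\ker\E_\HH P_\HH\setminus\ker\E_{\HH'}P_{\HH'}$.

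The main obstacle will be this last combinatorial step together with the degenerate case $\EE_{\HH'}=\emptyset$, where no edge $f\in\EE_{\HH'}$ can be chosen and the tangent-space argument degenerates; that corner case must either be excluded by reading ``nonempty subgraphs'' in $\mathfrak G$ as requiring a nonempty edge set, or else handled separately by a direct comparison of the two deterministic steady-state outputs $\mathrm y_\HH$ and $\mathrm y_{\HH'}$ at $\mathrm w=0$.
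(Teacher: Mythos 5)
Your proposal is correct and follows essentially the same route as the paper: restrict the manifold/implicit-function-theorem argument of Theorem \ref{thm.GeneralRandomIndication} to $\ker \E_\HH P_\HH$, and defeat tangent-space coincidence by exhibiting, via 2-connectedness, a simple cycle of $\HH$ containing both a surviving and a removed edge whose (signed) indicator lies in $\ker\E_\HH P_\HH$ but not in $\ker\E_{\HH'}P_{\HH'}$. The only cosmetic differences are in the endgame: the paper proves a coloring lemma (any non-monochromatic edge 2-coloring of a 2-connected graph admits a bichromatic simple cycle) and checks nonvanishing by evaluating at a single cycle vertex touching one retained and one removed cycle edge, whereas you invoke the classical two-edges-on-a-common-cycle fact and argue via disjointness of arc endpoints; both are valid, and your reading of $\mathfrak{G}$ as containing only subgraphs with at least one edge matches the paper's usage.
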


We first need to state and prove a lemma:
\begin{lem}
Let $\HH$ be a 2-connected graph. Suppose we color the edges of $\HH$ in two colors, red and blue. If not all edges have the same color, then there is a simple cycle in $\HH$ with both red and blue edges.
\end{lem}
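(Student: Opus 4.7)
The plan is to reduce the lemma to the following standard consequence of $2$-connectivity: any two edges of a $2$-connected graph lie on a common simple cycle. Granting this, the lemma is immediate: since the coloring uses both colors, choose a red edge $e_r$ and a blue edge $e_b$; a common simple cycle through $e_r$ and $e_b$ automatically contains edges of both colors.

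To establish the common-cycle fact from Menger's theorem as stated, I would use the subdivision trick. Subdivide $e_r=\{a,b\}$ by inserting a new vertex $v_r$ (replacing $e_r$ by the two-edge path $a$--$v_r$--$b$), and do the same for $e_b=\{c,d\}$ with a new vertex $v_b$. Call the resulting graph $\HH'$. The first step is to check that $\HH'$ is still $2$-connected: any single vertex removed from $\HH'$ either (i) is a subdivision vertex $v_r$ or $v_b$, in which case $\HH'$ minus that vertex is obtained from $\HH$ minus one edge by a single subdivision, and $\HH$ minus one edge remains connected by $2$-connectivity of $\HH$ (every edge lies on a cycle); or (ii) is an original vertex of $\HH$, in which case connectivity follows from that of $\HH$ minus the same vertex, as subdividing preserves connectivity. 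Thus $\HH'$ satisfies the second condition of the Menger proposition with $k=2$.

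Apply Menger's theorem to the pair $(v_r,v_b)$ in $\HH'$: there exist two vertex-disjoint simple paths between $v_r$ and $v_b$, whose union is a simple cycle $C'\subset\HH'$ passing through both $v_r$ and $v_b$. At each of $v_r,v_b$, the cycle $C'$ uses exactly the two edges incident to that subdivision vertex, so when we undo the subdivisions (contracting $a$--$v_r$--$b$ back to $e_r$, and similarly for $e_b$), $C'$ becomes a closed walk in $\HH$ that uses $e_r$ and $e_b$ exactly once and otherwise traverses only original edges of $\HH$ with no repeated vertices. In other words, it yields a simple cycle $C\subset\HH$ through both $e_r$ and $e_b$, as required.

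The main obstacle is really just the verification that $\HH'$ inherits $2$-connectivity and that contracting subdivisions turns the Menger cycle into a \emph{simple} cycle in $\HH$; both are straightforward but need to be handled carefully, since the paper's definition of vertex-disjoint paths allows shared endpoints. Once those bookkeeping points are in place, Menger's theorem (Proposition immediately preceding the lemma) delivers the cycle, and the coloring hypothesis does the rest.
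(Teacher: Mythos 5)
Your proof is correct, but it takes a genuinely different route from the paper's. You reduce the lemma to the stronger classical fact that in a $2$-connected graph any two prescribed edges lie on a common simple cycle, proved via the subdivision trick: subdivide a red edge and a blue edge, verify the subdivided graph is still $2$-connected, and apply Menger's theorem with $k=2$ to the two subdivision vertices. The paper instead argues by contradiction using only the vertex-deletion form of $2$-connectivity: if every simple cycle were monochromatic, then for any vertex $x$ with a red edge $\{x,v_2\}$ and a blue edge $\{x,v_1\}$, a $v_1$--$v_2$ path avoiding $x$ (which exists since $\HH-\{x\}$ is connected) closes up to a bichromatic simple cycle; hence every vertex meets edges of one color only, and the red-incident and blue-incident vertex sets then disconnect $\HH$, a contradiction. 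Your approach buys a stronger conclusion (a cycle through \emph{any chosen} red edge and blue edge) and leans on a standard, reusable lemma, but it needs more machinery: the check that the subdivided graph inherits $2$-connectivity is slightly more delicate than your case (ii) suggests (when the deleted vertex is an endpoint of a subdivided edge, the subdivision vertex survives as a pendant vertex of the corresponding subdivided $\HH$ minus that vertex, which is still fine but is not literally ``a subdivision of $\HH$ minus the vertex''), and one should note that the two subdivision vertices are nonadjacent so the two internally disjoint Menger paths genuinely form a cycle that contracts to a simple cycle of $\HH$. The paper's argument is more elementary and self-contained, using $2$-connectivity only through connectivity after deleting a single vertex, at the price of proving only the existence of \emph{some} bichromatic cycle, which is all the theorem on edge-indication vectors needs.
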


\begin{proof}
Suppose, heading toward contradiction, that any simple cycle in $\HH$ is monochromatic. We claim that for each vertex $x$, all the edges touching $x$ have the same color. Indeed, take any vertex $x$, and suppose that there are two neighbors $v_1,v_2$ of $x$ such that the edge $\{x,v_1\}$ is blue and the edge $\{x,v_2\}$ is red. We note that $v_1\to x\to v_2$ is a path from $v_1$ to $v_2$, meaning there is another path from $v_1$ to $v_2$ which does not pass through $x$ . Adding both edges to the path yields a simple cycle with edges of both colors, as $\{x,v_1\}$ is blue and $\{x,v_2\}$ is red. Thus, every node touches edges of a single color.

Let $\V_\text{red}$ be the set of nodes touching red edges, and $\V_\text{blue}$ be the set of nodes touching blue edges. We know that $\V_\text{red}$ and $\V_\text{blue}$ do not intersect. Moreover, if we had an edge between $\V_\text{red}$ and $\V_\text{blue}$, it had a color. Assume, without loss of generality, it is blue. That would mean some vertex in $\V_\text{red}$ would touch a blue edge, which is impossible. Thus there are no edges between $\V_\text{red}$ and $\V_\text{blue}$. By assumption, there is at least one edge of each color in the graph, meaning that both sets are nonempty. Thus we decomposed the set of vertices in $\HH$ to two disjoint, disconnected sets. As $\mathcal{H}$ is a connected graph, we find a contradiction and complete the proof.
\end{proof}

We can now prove Theorem \ref{thm.SynthesisRandomIndication}.
\begin{proof}
We denote $m_1 = \dim \ker \E_\HH P_\HH$. The proof is similar to the proof of Theorem \ref{thm.GeneralRandomIndication}. We again define functions $F_{\G_1}$ for graphs $\G_1 \in \mathfrak G_n$ as $F_{\G_1}(\mathrm{y,w}) = k^{-1}(\mathrm y) + \E_{\G_1} g(\E_{\G_1}^\top  \mathrm y) + \E_{\G_1} P_{\G_1} \mathrm w$, but this time we consider the function $F_{\G_1}$ as defined on the space $\ker \E_\HH P_\HH\subset \mathbb{R}^{|\EE_\G|}$. As before, we define $\mathcal A_{\G_1} = \{(\mathrm{y,w} : F_{\G_1}(\mathrm{y,w}) = 0\}$ and use the implicit function theorem to show that $\mathcal A_{\G_1}$ are all manifolds, but their dimension this time is $m_1 = \dim \ker \E_\HH P_\HH$. This time, we want to show that if $\HH \neq \G_1$, then $\mathcal A_\HH \cap \mathcal A_{\G_1}$ is an embedded sub-manifold of dimension $\le m_1 - 1$, as we want to show that (with probability 1), the solutions \eqref{eq.SteadyState} with graph $\G_1$ and graph $\HH$ are different. As before, it is enough to show that if $(\mathrm{y,w})\in \mathcal A_{\G_1} \cap \mathcal A_\HH$ then the kernels $\ker dF_{\G_1}$ and $\ker dF_\HH$ are different at $(\mathrm{y,w})$. We compute that for any graph ${\G_1}$,
\begin{align*}
dF_{\G_1} = [\nabla k^{-1}(\mathrm y) + \E_{\G_1} \nabla \gamma(\E_{\G_1}^\top  \mathrm y),\ (\E_{\G_1} P_{\G_1})|_{\ker \E_\HH P_\HH}],
\end{align*}
where $\cdot|_{\ker{\E_\HH P_\HH}}$ is the restriction of the matrix to $\ker{\E_\HH P_\HH}$. Thus, if $\G_1$ is any graph in $\mathfrak{G}$ which is not a subgraph of $\HH$, it contains an edge $e$ absent from $\HH$. Following the proof of Theorem \ref{thm.GeneralRandomIndication} word-by-word, noting that $\mathbbm{1}_e \in \ker \E_\HH P_\HH$, we conclude that the $\ker dF_{\G_1}$ and $\ker dF_\HH$ are different at $(\mathrm{y,w})$. Thus we restrict ourselves to non-empty subgraphs $\G_1$ of $\HH$.

For any collection $E$ of edges in $\EE_\HH$, we consider $v=(0,\mathbbm{1}_E)$, where $\mathbbm{1}_E$ is equal to $\sum_{e\in E} \mathbbm{1}_e$. If $E$ is a the set of edges of a cycle in $\HH$, then the vector $v$ lies in the kernel of $dF_{\HH}$. We show that there is some cycle in $\HH$ such that $v$ does not lie in the kernel of $dF_{\G_1}$, completing the proof.

The graph $\G_1$ defines a coloring of the graph $\HH$ - edges in $\G_1$ are colored in blue, whereas edges absent from $\G_1$ are colored in red. Because $\G_1$ is a non-empty proper subgraph of $\HH$, this coloring contains both red and blue edges. By the lemma, there is a simple cycle in $\HH$ having both red and blue edges. Let $E$ be the set of the edges traversed by the cycle. We claim that $\E_{\G_1} P_{\G_1} \mathbbm{1}_E \neq 0$, which will complete the proof of the theorem. Indeed, because the simple cycle contains both red and blue edges, we can find a vertex touching both a red edge in the cycle and a blue edge in the cycle. We let $v$ be the vertex, and let $e_1,e_2$ be the corresponding blue and red edges. Recalling the cycle is simple, these are the only cycle edges touching $v$. However, by the coloring, $e_1$ is in $\G_1$, but $e_2$ is not. Thus,
\begin{align*}
(\E_{\G_1} P_{\G_1} \mathbbm{1}_E)_v &= (\E_{\G_1})_{ve_1} (P_{\G_1})_{e_1e_1} + (\E_{\G_1})_{ve_2} (P_{\G_1})_{e_2e_2} \\&= (\E_{\G_1})_{ve_1} = \pm 1 \neq 0,
\end{align*}
and in particular, $\mathbbm{1}_E \not\in \ker \E_{\G_1} P_{\G_1}$.
\end{proof}

\section{Monitoring Network Faults} \label{sec.Applications}
In this {section}, we consider two applications of the developed framework, namely {network} fault detection and isolation, and defense strategies for adversarial games over networks. We first present a simple algorithm for network fault detection. Then, we discuss defense strategies for adversarial games over networks, which will require a bit more effort. Lastly, we exhibit a network fault isolation protocol, which will be a combination of the previous two algorithms, which will be demonstrated in a case study in Section \ref{sec.Simulation}. In order to apply the framework of edge-indication vectors, we need an algorithm elevating the asymptotic differentiation achieved in the previous section to an on-line differentiation scheme. Thus, we make the following assumption:
\begin{assump}\label{assump.Diff}
There exists an algorithm $\mathscr{A}$ which receives a model for a diffusively-coupled network $(\G,\Sigma,\Pi)$ and a conjectured limit $\mathrm y^\star$ as input, and takes measurements of the network in-run. The algorithm stops and declares ``no" if and only if the network does not converge to $\mathrm y^\star$, and otherwise runs indefinitely. The algorithm $\mathscr{A}$ is called a convergence assertion algorithm.
\end{assump}
{In the language of Section \ref{sec.probform}, this is a fault-monitoring system that never gives false positives nor false negatives.} For now, we assume such algorithm exists. We will discuss this assumption in Section \ref{sec.ConvAssert}.
\subsection{Detecting Network Faults}\label{subsec.FaultDetection}
We first focus on Problem \ref{prob.NFD}. To tackle the problem, we use the notion of edge-indication vectors from Section \ref{sec.IndicationVectors}. Suppose we have MEIP agents $\{\Sigma_i\}$. We first take any output-strictly MEIP controllers $\{\Pi_e\}$ solving the classical synthesis problem, i.e., forcing the closed loop system to converge to $\mathrm y^\star$ (see Theorem \ref{thm.ClosedLoopSynthesis}). As we noted, if $\mathrm w\in \mathbb{R}^{|\EE_\G|}$ lies in the kernel of $\E_\G P_\G$, then the solution of the following equations is the same:
\begin{align*}
k^{-1}(\mathrm y) + \E_\G \gamma (\E_\G^\top  \mathrm y) &= -\E_\G P_\G \mathrm w,~~ k^{-1}(\mathrm y) + \E_\G \gamma (\E_\G^\top  \mathrm y) = 0.
\end{align*}
Thus, if $\mathrm w$ lies in $\ker(\E_\G P_\G)$, running the interaction protocol $(\Pi,\mathrm w)$ does not change the steady-state output of the system. However, by Theorem \ref{thm.SynthesisRandomIndication}, a random vector in $\ker(\E_\G P_\G)$ gives a $(\mathfrak G,\G)$-edge-indication vector, as long as $\G$ is $2$-connected. Thus, using the interaction protocol $(\Pi,\mathrm w)$, where $\mathrm w \in \ker \E_\G P_\G$ is chosen randomly, will guarantee that all faulty system would converge to a steady-state different than $\mathrm y^\star$, with probability $1$.  Thus, applying the algorithm $\mathscr{A}$ allows an on-line, finite time distinction between the nominal faultless system and its faulty versions. We explicitly write the prescribed algorithm below:

\begin{algorithm} 
\caption{Network Fault Detection in MEIP MAS}
\label{alg.FaultDetection}
\begin{algorithmic}[1]
\State\label{step.FaultDetection1}Find edge controllers $\{{\Pi}_e\}_{e\in \mathbb{E}_\G}$ solving the synthesis problem with graph $\G$, agents $\Sigma$, and control goal $\mathrm y^\star$ (see Theorem \ref{thm.ClosedLoopSynthesis} and Remark \ref{rem.FormReconfig}).
\State Find a basis $\{b_1,...,b_k\}$ for the linear space $\ker \E_\G P_\G$
\State Pick a Gaussian vector $\alpha \in \mathbb{R}^k$
and define $\mathrm w = \sum_i^k \alpha_i b_i$
\State \label{step.FaultDetection5}Define the interaction protocol as $({\Pi},\mathrm w)$.
\State Run the system with the chosen interaction protocol.
\State Implement {the algorithm $\mathscr{A}$ for the system $(\G,\Sigma,(\Pi,\mathrm w))$ with limit $\mathrm y^\star$. Declare a fault in the network if $\mathscr{A}$ declares that the system does not converge to the prescribed value.}
\end{algorithmic}
\end{algorithm}

\begin{thm} [{Network} Fault Detection] \label{ThmAboutFaultDetection}
Suppose that $n$ agents $\{\Sigma_i\}$ and {an underlying} graph $\G$ are given, and that $\{\Sigma_i\}$ satisfy Assumption \ref{Assumption}. Suppose furthermore that $\G$ is 2-connected. Then, {with probability 1}, Algorithm \ref{alg.FaultDetection} synthesizes an interaction protocol $(\Pi,\mathrm w)$ solving Problem \ref{prob.NFD}, i.e., the algorithm satisfies the following properties:
\begin{itemize}
\item[i)] If no faults occur in the network, the output of the closed-loop system converges to $\mathrm y^\star$.
\item[ii)] {If any number of faults occur in the network, the algorithm detects their existence}.
\end{itemize}
\end{thm}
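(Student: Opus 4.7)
The plan is to verify the two properties separately and to combine Theorem \ref{thm.ClosedLoopSynthesis}, Theorem \ref{thm.SynthesisRandomIndication}, and Assumption \ref{assump.Diff} through the observation that $\mathrm w \in \ker(\E_\G P_\G)$ leaves the nominal steady-state equation \eqref{eq.SteadyState} unchanged.

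First I would establish (i). Step \ref{step.FaultDetection1} of the algorithm produces, via Theorem \ref{thm.ClosedLoopSynthesis}, output-strictly MEIP controllers $\{\Pi_e\}$ such that, with zero bias, the closed-loop $\N_\G$ converges to $\mathrm y^\star$; in particular $\mathrm y^\star$ solves $k^{-1}(\mathrm y)+\E_\G\gamma(\E_\G^\top \mathrm y)=0$. Since $\mathrm w \in \ker(\E_\G P_\G)$ by construction, the right-hand side of \eqref{eq.SteadyState} for $\HH=\G$ satisfies $-\E_\G P_\G \mathrm w = 0$, so $\mathrm y^\star$ is still the steady-state output of the perturbed network $(\G,\Sigma,(\Pi,\mathrm w))$. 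By Theorem \ref{thm.ClosedLoopSteadyStates}, applied to the augmented controller $\tilde{\gamma}_e(\zeta_e)=\gamma_e(\zeta_e)+\mathrm w_e$ (which remains output-strictly MEIP), the closed-loop trajectory converges to $\mathrm y^\star$. Then $\mathscr{A}$, which never gives false positives by Assumption \ref{assump.Diff}, never declares a fault.

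For (ii), I would first invoke Theorem \ref{thm.SynthesisRandomIndication}: because $\G$ is $2$-connected and $\mathrm w$ is sampled from an absolutely continuous distribution on $\ker(\E_\G P_\G)$, with probability one $\mathrm w$ is a $(\mathfrak G,\G)$-edge-indication vector. Fix a realization in that probability-one event, and consider any non-increasing switching signal $\varsigma$ with $\varsigma(t_0)\neq \G$ for some $t_0$. The key structural observation is that $\mathfrak G$ is finite and $\varsigma$ is non-increasing on this finite lattice, so there exist a time $T<\infty$ and a subgraph $\HH\neq\G$ with $\varsigma(t)\equiv \HH$ for all $t\geq T$. For $t\geq T$ the system is the time-invariant diffusively-coupled network $\N_\HH$ driven by the bias $P_\HH \mathrm w$; applying Theorem \ref{thm.ClosedLoopSteadyStates} to this initial-value problem with initial condition $x(T)$ yields convergence to the unique steady state $\mathrm y_\HH$ characterized by \eqref{eq.SteadyState}. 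Since $\mathrm w$ is a $(\mathfrak G,\G)$-edge-indication vector, $\mathrm y_\HH \neq \mathrm y_\G = \mathrm y^\star$. By Assumption \ref{assump.Diff}, $\mathscr{A}$ must eventually declare ``no'', so a fault is reported.

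The main obstacle I anticipate is the switched-system aspect of (ii): Theorems \ref{thm.ClosedLoopSteadyStates} and \ref{thm.SynthesisRandomIndication} are statements about fixed graphs, so one must argue that the transient caused by faults occurring mid-run does not change the asymptotic limit. The resolution is exactly the finiteness of $\mathfrak G$ together with the monotonicity of $\varsigma$, which forces $\varsigma$ to stabilize in finite time and reduces the problem to a time-invariant convergence statement after time $T$. A minor subtlety is that Theorem \ref{thm.SynthesisRandomIndication} rules out coincidence of steady states only between distinct graphs in $\mathfrak G$; since $\HH\neq\G$, this suffices, and no additional analysis of the transient is required.
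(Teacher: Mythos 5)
Your proof is correct and follows essentially the same route as the paper: part (i) from the observation that $\mathrm w\in\ker(\E_\G P_\G)$ leaves the nominal steady-state equation unchanged together with Theorem \ref{thm.ClosedLoopSteadyStates}, and part (ii) from the $(\mathfrak G,\G)$-edge-indication property of Theorem \ref{thm.SynthesisRandomIndication} combined with Assumption \ref{assump.Diff}. The only difference is that you spell out the switched-system step --- the finiteness of $\mathfrak G$ and monotonicity of $\varsigma$ forcing the underlying graph to stabilize at some $\HH\neq\G$ in finite time --- which the paper's proof leaves implicit when it simply asserts that the closed loop converges to the steady state of the ``current'' graph $\HH$.
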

\begin{proof}
Follows from the discussion preceding Algorithm \ref{alg.FaultDetection}. Namely, Theorem \ref{thm.SynthesisRandomIndication} assures that $\mathrm w$ is a $(\mathfrak G,\G)$-edge-indication vector, so long that $\G$ is 2-connected. In other words, the output of the closed-loop system with graph $\G$ converges to $\mathrm y^\star$, and for any graph $\G \neq \HH \in \mathfrak G$, the output of the closed-loop system with graph $\HH$ converges to a value different from $\mathrm y^\star$.

It remains to show that the algorithm declares a fault if and only if a fault occurs. If no faults occur, $\mathscr{A}$ never declares a fault, and the same is true for Algorithm \ref{alg.FaultDetection}. On the contrary, assume any number of faults occur in the network, and let $\HH$ be the current underlying graph. The output of the closed loop system converges to $\mathrm y \neq \mathrm y^\star$, so $\mathscr{A}$ eventually stops and declares that the network does not converge to the conjectured limit. Thus Algorithm \ref{alg.FaultDetection} declares a fault.
\end{proof} 
\subsection{Multi-Agent Synthesis in the Presence of an Adversary}\label{subsec.Adversary}
Consider the following 2-player game. Both players are given the same $n$ SISO agents $\Sigma_1,\cdots,\Sigma_n$, the same graph $\G$ on $n$ vertices and $m$ edges, and the same vector $\mathrm y^\star \in \R^n$. There is also a server that can measure the state of the agents at certain intervals, and broadcast a single message to all agents once. The planner acts first, and designs a control scheme for the network and the server. The adversary acts second, removing at most $r$ edges from $\G$. The system is then run. The planner wins if the closed-loop system converges to $\mathrm y^\star$, and the adversary wins otherwise. We show that the planner can always win by using a strategy stemming from edge-indication vectors, assuming the agents are MEIP.

Namely, consider the following strategy. Take all possible $\sum_{\ell=0}^r \binom{m}{\ell}$ underlying graphs. For each graph, the planner solves the synthesis problem as in Theorem \ref{thm.ClosedLoopSynthesis}. If the planner finds out the adversary changed the underlying graph to $\HH$, he could notify the agents of that fact (through the server), and have them run the protocol solving the synthesis problem for $\HH$. Thus the planner needs to find a way to identify the underlying graph after the adversary took action, without using the server's broadcast. This can be done by running the system with a $\mathfrak{G}$-edge-indication vector, and using the server to identify the network's steady-state. Namely, consider Algorithms \ref{alg.Adversarial}, \ref{alg.AdversarialInRunAgents} and \ref{alg.AdversarialInRunServer}, detailing the synthesis procedure and in-run protocol for the planner. We prove:

\begin{algorithm} [t]
\caption{Planner Strategy in Adversarial Multi-Agent Synthesis Game with MEIP agents - Synthesis}
\label{alg.Adversarial}
\begin{algorithmic}[1]
\State Define $N = \sum_{\ell=0}^{r} \binom{m}{\ell}$. Let $\mathrm{Graphs}$ be an array with $N$ entries, and let $j = 1$. \;
\For {$\ell = 0,\cdots,r$}
\For {$1\le i_1 < i_2 < \cdots < i_\ell \le m$}
\State Insert the graph $\HH = \G - \{e_{i_1},\cdots,e_{i_\ell}\}$ to the $j$-th entry of $\mathrm{Graphs}$. Advance $j$ by $1$.
\EndFor
\EndFor
\State Define two arrays $\mathrm{Controllers}$,$\mathrm{SSLimits}$ of length $N$.
\State Choose $\mathrm w$ as Gaussian random vector of length $m$.
\For {$j = 1,\cdots,N$}
\State Take any edge controllers $\{\Pi_e\}_{e\in\mathbb{E}_\G}$ satisfying Assumption \ref{Assumption}.
\State Compute the steady-state limit of the network with agents $\Sigma$, underlying graph $\mathrm{Graphs}(j)$, and interaction protocol $(\Pi,\mathrm w)$. Insert the result into $\mathrm{SSLimits}(j)$
\State Solve the synthesis problem for agents $\Sigma$ and underlying graph $\mathrm{Graphs}(j)$ as in Theorem \ref{thm.ClosedLoopSynthesis} and Remark \ref{rem.FormReconfig}. Insert the result into $\mathrm{Controllers}(j)$
\EndFor
\end{algorithmic}
\end{algorithm}

\begin{algorithm} [b]
\caption{Planner Strategy - In-Run Protocol for Agents}
\label{alg.AdversarialInRunAgents}
\begin{algorithmic}[1]
\State Run the interaction protocol $(\Pi,\mathrm w)$. 
\State When a message $j$ is received, run the interaction protocol described by $\mathrm{Controllers}(j)$.
\end{algorithmic}
\end{algorithm}

\begin{algorithm} [t]
\caption{Planner Strategy - In-Run Protocol for Server}
\label{alg.AdversarialInRunServer}
\begin{algorithmic}[1]
\State Define $\text{HasFaulted}$ as an array of zeros of size $N$.
\While{\text{HasFaulted has at least two null entries}}
\State Run $N$ instances of the {algorithm $\mathscr{A}$} simultaneously, with conjectured steady-states limits from $\text{SSLimits}$.
\For {$j=1$ to $N$}
\If {The $j$-th instance declared ``no"}
\State Change the value of $\text{HasFaulted}(j)$ to $1$.
\EndIf
\EndFor
\EndWhile
\State Find {the} index $j$ such that $\text{HasFaulted}(j) = 0$. Broadcast the message $j$ to the agents.
\end{algorithmic}
\end{algorithm}

\begin{thm} \label{thm.Adversarial}
Consider the game above. {With probability 1,} Algorithms \ref{alg.Adversarial}, \ref{alg.AdversarialInRunAgents} and \ref{alg.AdversarialInRunServer} describe a winning strategy for the planner. Moreover, if $r$ is independent of $n$ (i.e., $r = O(1)$), the synthesis algorithm has polynomial time complexity. Otherwise, the time complexity is $O(n^{cr})$ for some universal constant $c>0$. Furthermore, the size of the message broadcasted by the server is $O(r\log n)$.
\end{thm}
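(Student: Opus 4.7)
The plan is to verify, in sequence, (i) that the planner wins with probability $1$, (ii) the claimed time-complexity bounds, and (iii) the broadcast message size. The central observation is that Algorithm \ref{alg.Adversarial} precomputes data for every graph the adversary could possibly hand us---namely each subgraph of $\G$ obtained by removing at most $r$ edges---while Theorem \ref{thm.GeneralRandomIndication} supplies an edge-indication vector that lets the server distinguish them on-line from its single measurement channel.

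For correctness, observe first that after the adversary acts, the true underlying graph $\HH$ has at most $r$ missing edges and hence appears as some entry $\mathrm{Graphs}(j^\star)$ with associated precomputed steady-state limit $\mathrm{SSLimits}(j^\star)$ and controllers $\mathrm{Controllers}(j^\star)$. Because $\mathrm w$ is Gaussian on $\R^{|\EE_\G|}$, and hence absolutely continuous, Theorem \ref{thm.GeneralRandomIndication} guarantees that with probability $1$ it is a $\mathfrak G$-edge-indication vector; consequently the entries of $\mathrm{SSLimits}$ are pairwise distinct, and under the interaction protocol $(\Pi,\mathrm w)$ the closed-loop output on $\HH$ converges to $\mathrm{SSLimits}(j^\star)$ and to no other entry. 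By the defining property of $\mathscr{A}$ in Assumption \ref{assump.Diff}, each of the $N$ convergence-assertion instances whose conjectured limit differs from $\mathrm{SSLimits}(j^\star)$ eventually declares ``no'', while the $j^\star$-th instance runs indefinitely. Hence the \texttt{while} loop in Algorithm \ref{alg.AdversarialInRunServer} terminates with a single null entry, precisely at index $j^\star$; the server broadcasts $j^\star$, the agents switch to $\mathrm{Controllers}(j^\star)$, and by Theorem \ref{thm.ClosedLoopSynthesis} the resulting closed loop on $\HH$ converges to $\mathrm y^\star$.

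For the complexity bounds, let $m = |\EE_\G| \le n(n-1)/2$. The number of enumerated graphs satisfies $N = \sum_{\ell=0}^{r}\binom{m}{\ell} \le (r+1)\,m^{r} = O(n^{2r+1})$, and per graph the precomputation consists only of one fixed-point solve for the steady-state limit and one invocation of the synthesis construction of Theorem \ref{thm.ClosedLoopSynthesis}, each polynomial in $n$. Summing gives the bound $O(n^{cr})$ for some universal $c>0$, which collapses to a polynomial in $n$ whenever $r = O(1)$. Finally, the broadcast consists of a single integer in $\{1,\dots,N\}$, which can be transmitted using $\lceil \log_2 N\rceil = O(r\log n)$ bits. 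The main obstacle is really the correctness step, but once the edge-indication guarantee from Theorem \ref{thm.GeneralRandomIndication} is in hand, the remaining argument reduces to identifying the unique surviving index when the \texttt{while} loop exits, and invoking the pre-solved synthesis of Theorem \ref{thm.ClosedLoopSynthesis}.
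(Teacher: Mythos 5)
Your proposal is correct and follows essentially the same route as the paper's own proof: the $\mathfrak G$-edge-indication property from Theorem \ref{thm.GeneralRandomIndication} plus Assumption \ref{assump.Diff} force every convergence-assertion instance except the one matching the true graph to declare ``no'', so the server identifies the correct index and the pre-solved synthesis of Theorem \ref{thm.ClosedLoopSynthesis} yields convergence to $\mathrm y^\star$, while the complexity and $O(r\log n)$ message-size bounds are obtained exactly as in the paper from $N=O(m^r)$, $m=O(n^2)$, and polynomial per-graph precomputation. The only difference is cosmetic: the paper itemizes the per-iteration costs (incidence-matrix storage, least-squares synthesis, convex steady-state computation), which you summarize as ``polynomial in $n$'', but this does not affect the validity of the argument.
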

\begin{proof}
Suppose the adversary changed the underlying graph to $\HH$, which has entry $j$ in $\text{Graphs}$. We note that Assumption \ref{assump.Diff} assures that {$\mathscr{A}$} never declares a fault if and only if the closed-loop system converges to the conjectured steady-state, and that $\mathrm w$ is a $\mathfrak G$-edge-indication vector by Theorem \ref{thm.GeneralRandomIndication}. Thus, the $j$-th instance of convergence assertion protocol can never return a fault, and all other instances must eventually declare a fault. Thus, the server correctly guesses the underlying graph. It then broadcasts the index $j$ to the agents, allowing them to change the interaction protocol and run the solution of the synthesis problem with desired output $\mathrm y^\star$ and underlying graph $\HH$. Thus the network will converge to $\mathrm y^\star$, and the planner wins.

We now move to time complexity. Note that $N = O(m^{r})$. The first for-loop has $N$ iterations, each of takes no more than $O(mn)$ actions done (where a graph is saved in memory by its incidence matrix, which is of size $\le m\times n$). Thus the first for-loop takes a total of $O(m^{r+1}n)$ time.  The second for-loop is a bit more complex. It solves the synthesis problem for $\HH$ by solving an equation of the form $\E_\HH \mathrm v = \mathrm v_0$ for some known vector $\mathrm v_0$ and an unknown $\mathrm v$. This can be done using least-squares, which takes no more than $O(\max\{m,n\}^3)$ time. As for finding the steady-state, this can be done by solving a convex network optimization problem \cite[Problem (OPP)]{SISO_Paper}, which takes a polynomial amount of time in $n,m$ (e.g. via gradient descent). Recalling that $m \le \binom{n}{2} = O(n^2)$, we conclude that if $r$ is bounded, the total time used is polynomial in $n$. Moreover, if $r$ is unbounded, the bottleneck is the first for-loop which takes $O(m^{r+1}n)$ time. Plugging $m \le n^2$ gives a bound on the time complexity of the form $O(n^{2r})$, as $n^3 = O(n^{2r})$. As for the communication complexity,the server broadcasts a number between $1$ and $N$, so a total of $O(\log_2 N)$ bits are needed to transmit the message. Plugging in $N = O(m^{r})$ gives that the number of bits needed is $O(r\log_2 m) = O(r\log n)$.
\end{proof}

\subsection{Detection and Isolation of Network Faults} \label{subsec.FI}
We now consider Problem \ref{prob.NFI}, in which faults occur throughout the run, and we want to detect their existence and overcome them. This problem can be thought of as a tougher hybrid of the previous two problems - in subsection \ref{subsec.FaultDetection}, the faults could appear throughout the run, but we only needed to find out they exist. In subsection \ref{subsec.Adversary}, all of the faults occur before the run, but we had to overcome them. Motivated by this view, we offer a hybrid solution. Ideally, the interaction protocol will have two disjoint phases - a first, ``stable" phase in which the underlying graph is known and no extra faults have been found, and a second, ``exploratory" phase in which extra faults have been found, and the current underlying graph is not yet known. The first phase can be solved by using the {network} fault detection Algorithm \ref{alg.FaultDetection}, as long as the current underlying graph is 2-connected. The second phase can be solved by the pre-broadcast stage of the planner strategy described in Algorithms \ref{alg.Adversarial}, \ref{alg.AdversarialInRunAgents}, and \ref{alg.AdversarialInRunServer}. 

The main issue that remains is what happens if the underlying graph changes again in the exploratory phase, i.e. we entered the exploratory phase with underlying graph $\HH_1$, but it changed to $\HH_2$ before we identified that graph. In the exploratory phase, we run an instance of $\mathscr{A}$ on all of the possible graphs simultaneously, until all but one instance declared a fault. If the instance related to $\HH_2$ has not declared a fault yet, it will not declare a fault from now on, unless another fault occurs before the end of the exploratory phase. If the same instance has already declared a fault, we have a problem - all other instances will eventually also declare a fault, and there are two options in this case.

The first option is that one instance will declare a fault last, i.e. there is a time in which all but one instance have declared a fault. In this case, we identify the graph as some $\HH_3$. When we return to the stable phase and run the interaction protocol $(\Pi,\mathrm w)$ corresponding to $\HH_3$, a fault will be declared and we will return to the exploratory phase. This is because $\mathrm w$ was synthesized as a $(\mathfrak G, \HH_3)$-edge-indication vector, meaning the de-facto steady-state limit (with graph $\HH_2$) will be different than the conjectured steady-state limit (with graph $\HH_3$), and $\mathscr{A}$ will declare a fault. The second option is that the last few instances of the convergence assertion protocol declare a fault simultaneously, i.e. there is a time in which all instances have declared a fault, which is dealt with by restarting the exploratory phase. We get the synthesis algorithm and in-run protocol presented in Algorithms \ref{alg.FaultDetectionSynthesis} and \ref{alg.FaultDetectionInRun}. We claim these solve Problem \ref{prob.NFI}. We prove:

\begin{algorithm} [t]
\caption{Synthesis for Network Fault Isolation}
\label{alg.FaultDetectionSynthesis}
\begin{algorithmic}[1]
\State Define $N = \sum_{\ell=0}^{r} \binom{m}{\ell}$. Let $\mathrm{Graphs}$ be an array with $N$ entries, and let $j = 1$ \;
\For {$\ell = 0,\cdots,r$}
\For {$1\le i_1 < i_2 < \cdots < i_\ell < m$}
\State Insert the graph $\HH = \G - \{e_{i_1},\cdots,e_{i_\ell}\}$ to the $j$-th entry of $\mathrm{Graphs}$. Advance $j$ by $1$.
\EndFor
\EndFor
\State Define two arrays $\mathrm{IP}$,$\mathrm{SSLimits}$ of length $N$.
\State Choose $\mathrm w$ as a Gaussian random vector of length $m$.
\State Choose controllers $\{\Pi_e\}_{e\in\EE}$ satisfying Assumption \ref{Assumption}.
\For {$j = 1,\cdots,N$}
\State Run steps \ref{step.FaultDetection1}-\ref{step.FaultDetection5} of Algorithm \ref{alg.FaultDetection}. Insert the resulting interaction protocol into $\mathrm{IP}(j)$.
\State Compute the steady-state limit of the closed-loop system with the interaction protocol $(\Pi,\mathrm w)$. Insert the result into $\mathrm{SSLimits}(j)$
\EndFor
\end{algorithmic}
\end{algorithm}

\begin{algorithm} [t]
\caption{In-Run Protocol for Network Fault Isolation}
\label{alg.FaultDetectionInRun}
\begin{algorithmic}[1]
\State Find the index $j$ for which $\text{Graphs}(j) = \G$.
\State \label{step.InRun0}Command the agents to change their interaction protocol to the one described in $\text{IP}(j)$. Define $\HH = \text{Graphs}(j)$.
\State \label{step. InRun0end} Run $\mathscr{A}$ for the closed-loop system with graph $\HH$ and interaction protocol $\mathrm{IP}(j)$. Only if the algorithm declares a fault, continue to step \ref{step.InRun1}.
\State \label{step.InRun1}Define $\text{HasFaulted}$ as an array of zeros of size $N$.
\State Change the agents' interaction protocol to $(\Pi,\mathrm w)$.
\While{\text{HasFaulted has at least two null entries}}
\State Run $N$ instances of the convergence assertion protocol {from Assumption \ref{assump.Diff}} simultaneously, with conjectured limits from $\text{SSLimits}$.
\For {$j=1$ to $N$}
\If {The $j$-th instance has declared a fault}
\State Change the value of $\text{HasFaulted}(j)$ to $1$.
\EndIf
\EndFor
\EndWhile \label{step.InRun1end}
\If{\text{HasFaulted} has no entries equal to zero}
\State Go to step \ref{step.InRun1}.
\EndIf
\State Find {the} index $j$ such that $\text{HasFaulted}(j) = 0$. Set $\HH = \text{Graphs}(j)$. Go to step \ref{step.InRun0}.
\end{algorithmic}
\end{algorithm}

\begin{thm}
Let $\Sigma_1,\cdots,\Sigma_n$ be agents satisfying Assumption \ref{Assumption}, and let $\G$ be a $k$-connected graph for $k\ge 3$ on $n$ vertices and $m$ edges. Then, {with probability 1}, Algorithms \ref{alg.FaultDetectionSynthesis} and \ref{alg.FaultDetectionInRun}, run with $r = k-2$, solve Problem \ref{prob.NFI} for up to $r$ faults.
\end{thm}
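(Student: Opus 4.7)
The plan is to combine the guarantees of Theorems \ref{ThmAboutFaultDetection} and \ref{thm.Adversarial} into a cycle-based argument, and exploit the fact that only finitely many faults can occur so that the hybrid scheme must eventually identify the true underlying graph and drive the output to $\mathrm y^\star$.

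First I would verify that every graph that can ever appear as the underlying graph is $2$-connected. This follows from a standard consequence of Menger's theorem: deleting one edge decreases vertex connectivity by at most one (since each pair of vertices loses at most one of its $\kappa(\G)$ vertex-disjoint paths), so removing $r=k-2$ edges from the $k$-connected graph $\G$ leaves a graph of connectivity at least $2$. Consequently Theorem \ref{thm.SynthesisRandomIndication} is applicable to each $\HH\in\text{Graphs}$, so each interaction protocol $\text{IP}(j)$ uses a $(\mathfrak G,\text{Graphs}(j))$-edge-indication vector with probability one. Combined with Theorem \ref{thm.GeneralRandomIndication} applied to the global $\mathrm w$, a union bound over these finitely many random choices gives probability one that all edge-indication properties hold simultaneously; I would condition on this event for the remainder of the argument.

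Under this conditioning I would analyze one stable-then-exploratory cycle. In a stable phase with $\HH=\text{Graphs}(j)$, Theorem \ref{ThmAboutFaultDetection} says the system tracks $\mathrm y^\star$ while no further faults occur, and $\mathscr{A}$ detects any change of the underlying graph in finite time; upon detection, the algorithm enters the exploratory phase with $(\Pi,\mathrm w)$, where the argument from Theorem \ref{thm.Adversarial} applies. If the underlying graph remains constant at some $\HH'$ throughout this phase, the $\mathfrak G$-edge-indication property of $\mathrm w$ forces every instance of $\mathscr{A}$ except the one conjecturing $\mathrm y_{\HH'}$ to declare a fault, so the algorithm correctly identifies $\HH'$ and re-enters the stable phase with the matching $\text{IP}(j')$.

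The main obstacle is handling further faults that occur during an exploratory phase, which may cause the instance of $\mathscr A$ for the eventually-true graph to fire on outdated transient data. I would handle this by focusing on the last fault time $t^\star$: after $t^\star$ the underlying graph is constant at some $\HH^\star$, and by the edge-indication property no $\mathscr{A}$ instance conjecturing $\mathrm y_{\HH^\star}$ can declare a fault on data sampled after $t^\star$. Thus either the current exploratory phase terminates with the correct index, or every instance declares a fault and the algorithm restarts; after restart the convergence-assertion instances are initialized afresh, and with the graph now fixed the restart completes with the correct identification. Entering a stable phase with the wrong index is self-correcting by the same edge-indication argument used in Theorem \ref{ThmAboutFaultDetection}. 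Since each such bad event (wrong identification, restart, or premature stable-phase exit) is triggered by a genuine graph change, the total number of bad events is bounded by $r$; so after finitely many cycles the algorithm settles in the stable phase for $\HH^\star$, and Theorem \ref{ThmAboutFaultDetection} delivers convergence to $\mathrm y^\star$. The delicate part will be the bookkeeping around the interplay between previously-declared $\mathscr A$ instances, transient trajectories before and after $t^\star$, and the restart mechanism, all of which must be made airtight to rule out an infinite sequence of misidentifications.
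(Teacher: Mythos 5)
Your proposal is correct and follows essentially the same route as the paper: preserve $2$-connectivity under removal of at most $k-2$ edges, invoke Theorems \ref{thm.GeneralRandomIndication}, \ref{thm.SynthesisRandomIndication}, \ref{ThmAboutFaultDetection} and \ref{thm.Adversarial} (on a probability-one event), decompose the run into stable and exploratory phases, and analyze the behavior after the last fault time, handling misidentification and restarts exactly as the paper's Claims 1, 2, 4, 5 and its case analysis do. The only cosmetic differences are that you make the connectivity and union-bound steps explicit (the paper merely asserts them), and your attribution of the non-firing of the $\HH^\star$-instance to the edge-indication property (rather than to Assumption \ref{assump.Diff}) and the ``at most $r$ bad events'' accounting are slightly loose but not load-bearing, since after the last fault at most one stale exploratory cycle can misidentify or restart before a fresh cycle identifies correctly.
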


\begin{proof}
We refer to steps \ref{step.InRun0} to \ref{step. InRun0end} of Algorithm \ref{alg.FaultDetectionInRun} as the stable phase of the algorithm, and to steps \ref{step.InRun1} to \ref{step.InRun1end} as the exploratory phase. As the number of faults is no bigger than $r=k-2$, the underlying graph remains 2-connected throughout the run.
We claim the theorem follows from the following simple claims:
\begin{enumerate}
\item \label{fact.1} If we are in the stable phase, the current underlying graph is $\HH = \mathrm{Graphs}(j)$, and no more faults occur throughout the run, the closed-loop system converges to $\mathrm y^\star$.
\item \label{fact.2} If we are in the stable phase, but the assumed graph $\HH = \mathrm{Graphs}(j)$ is not the current underlying graph, we will eventually move to the exploratory phase.
\item \label{fact.4} Each instance of the exploratory phase eventually ends.
\item \label{fact.5} If an instance of the exploratory phase is executed, and no more faults occur throughout the run, it correctly identifies the current underlying graph.
\end{enumerate}
We first explain why the claims hold with probability 1. Claims \ref{fact.1} and \ref{fact.2} follow from Theorem \ref{ThmAboutFaultDetection} with probability 1, as the underlying graph $\G$ is always 2-connected. Claim \ref{fact.5} holds with probability 1, as follows from Theorem \ref{thm.Adversarial}. As for Claim \ref{fact.4}, if no faults occur throughout the instance of the exploratory phase, then it eventually ends by Claim \ref{fact.5}. If faults do occur throughout the run, the discussion above shows that at some time, all (except possibly one) instances of the convergence assertion protocol have declared a fault. If all of them declared a fault, we start another instance of the exploratory phase, and otherwise, we move to the stable phase. In either case, the instance of the exploratory phase ends, and Claim \ref{fact.4} is proved.

We now explain how the theorem follows from these claims. Suppose a total of $\ell \le r$ faults occur throughout the run. Let $T<\infty$ be the time at which the last fault occurs. We look at the phase of the algorithm at times $t>T$, and show that in either case the system must converge to $\mathrm y^\star$.
\begin{itemize}
\item[i)] If we arrive at the stable phase with and the conjectured graph $\mathcal{H}$ is the true underlying graph, then the system converges to $\mathrm y^\star$ (Claim \ref{fact.1}). 
\item[ii)] If we start an instance of the exploratory phase, it eventually ends (Claim \ref{fact.4}) and the stable phase starts, in which the conjectured graph $\mathcal{H}$ is the true underlying graph (Claim \ref{fact.5}). By i), the system converges to $\mathrm y^\star$.
\item[iii)] If we are in the stable phase, but the conjectured graph $\mathcal{H}$ is different from the true underlying graph, we eventually start an exploratory phase (Claim \ref{fact.2}). Thus, the system converges to $\mathrm y^\star$ by ii).
\item[iv)] Lastly, we could be in the middle of an instance of the exploratory phase. In that case, the instance eventually ends (Claim \ref{fact.4}), after which we either apply a new instance of the exploratory phase, or the stable phase. In both cases, we can use either i), ii) or iii) and conclude the system must converge to $\mathrm y^\star$. 
\end{itemize}
\if(0)
We now prove the claims. Note that if $\HH$ can be yielded by removing no more than $r$ edges from $\G$, then it is $2$-connected. Indeed, if the removed edges are $e_1,...,e_\ell$, choose a vertex $v_i$ for each of them, so that $\HH$ contains the graph $\HH_1 = \G - \{v_1,\cdots,v_\ell\}$. For any vertex $v\neq v_1,\cdots,v_\ell$, because $\ell \le r$ and $\G$ is $(r+2)$-connected, $\HH_1 - \{v\} = \G - \{v_1,\cdots.v_\ell,v\}$ is connected. Thus $\HH_1$, and hence $\HH$, is 2-connected.  

By Theorem \ref{thm.SynthesisRandomIndication}, for each $j$, the vector $\mathrm w_j$ from the interaction protocol $\mathrm{IP}(j)$ is a $(\mathfrak G,\text{Graphs}(j))$-edge-indication vector. As all graphs achieved by removing no more than $r$ edges from $\G$ are non-empty, we conclude that for every $j_1,j_2$, if the system is run with interaction protocol $\mathrm{IP}(j_1)$, the system with underlying graph $\mathrm{Graphs}(j_1)$ will converge to a different value from the system with underlying graph $\mathrm{Graphs}(j_2)$. We thus conclude by {Assumption \ref{assump.Diff}}, that claims \ref{fact.1} and \ref{fact.2} are true.

We now prove claims \ref{fact.4} and \ref{fact.5}. By Theorem \ref{thm.GeneralRandomIndication}, the chosen vector $\mathrm w$ is a $\mathfrak G$-edge-indication vector. Thus, for any two different graphs $\mathrm{Graphs}(j_1)$ and $\mathrm{Graphs}(j_2)$, the steady-state output will be different. Thus, by {Assumption \ref{assump.Diff}}, at least $N-1$ instances of {$\mathscr{A}$} must eventually declare a fault (as there is only one true steady-state), even if the underlying graph changed while running this phase. This proves claim \ref{fact.4}. Moreover, suppose that the last fault of the run happened before we started executing this instance of the exploratory phase. The true underlying graph appears in $\mathrm{Graphs}$, as it is achieved from $\G$ by removing no more than $r$ edges. If it the true underlying graph is equal to $\mathrm{Graphs}(j)$, then by {Assumption \ref{assump.Diff}}, the $j$-th instance of the convergence assertion method will never declare a fault. Thus, the last remaining non-zero entry of $\mathrm{HasFaulted}$ is the $j$-th, meaning that we correctly identify the current underlying graph. This proves claim \ref{fact.5} and completes the proof of the theorem.
\fi
\end{proof}
\vspace{-10pt}
\begin{rem}
We can use a similar protocol to isolate more complex faults. We consider the collection of subgraphs $\HH$ of $\G$ in which there is a set of vertices of size $\le r$, so that each edge in $\G-\HH$ touches at least one vertex in the set.  This observation allows us to offer similar network fault detection and isolation algorithms for more complex types of faults. For example, we can consider a case in which each agent communicates with all other agents by a single transceiver, and if it faults, then all edges touching the corresponding vertex are removed from the graph. We can even use a hybrid fault model, in which faults correspond to certain subsets of edges touching a common vertex are removed from the graph. For example, suppose there are two distant groups of agents. Agents in the same group are close, and communicate using Bluetooth communication. Agents in different groups are farther, and communicate using Wi-Fi (or broadband cellular communication). When an agent's Bluetooth transceiver faults, all inter-group edges are removed, and when the Wi-Fi transceiver faults, all intra-group edges are removed.
\end{rem}

\section{Online Assertion of Network Convergence} \label{sec.ConvAssert}
In the previous section, we used the notion of edge-indication vectors, together with Assumption \ref{assump.Diff}, to suggest algorithms for network fault detection and isolation. The goal in this section is to propose algorithms $\mathscr{A}$ satisfying Assumption \ref{assump.Diff}. This will be achieved by using convergence estimates, relying on passivity. 
We revisit a result from \cite{SISO_Paper}.
\begin{prop}[\cite{SISO_Paper}]
Let $(\mathrm{u,y,\zeta,\mu})$ be a steady-state of $(\G,\Sigma,\Pi)$ {of the form \eqref{AgentsandControllers}}. Suppose that the agents $\Sigma_i$ {, with state $x_i$,} are passive with respect to $(\mathrm{u_i,y_i})$ with passivity index $\rho_i \ge 0$, and that the controllers $\Pi_e$, with state $\eta_e$,  are passive with respect to $(\mathrm{\zeta_e,\mu_e})$, with passivity index $\nu_e\ge 0$. Let $S_i(x_i)$ and $W_e(\eta_e)$ be the agents' and the controllers' storage functions. Then $S(x,\eta) = \sum_{i\in \V} S_i(x_i) + \sum_{e\in \EE} W_e(\eta_e)$ is a positive-definite $C^1$-function, which nulls only at the steady-states $(\mathrm{x,\eta})$ {corresponding to $(\mathrm{u_i,y_I})$ and $(\mathrm{\upzeta_e,\upmu_e})$}, and satisfying the inequality:
\begin{align} \label{eq.PassivityInequality1}
\frac{dS}{dt} \le -\sum_{i\in \V} \rho_i(y_i(t)-\mathrm y_i)^2 - \sum_{e\in \EE} \nu_e(\mu_e(t) - \upmu_e)^2.
\end{align}
\end{prop}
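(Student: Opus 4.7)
The plan is to break the proposition into three ingredients and treat each one in turn: the smoothness and positivity of $S$, its vanishing set, and the differential inequality. The first two items are essentially bookkeeping. Since each $S_i$ and $W_e$ is a $C^1$ storage function, their finite sum $S(x,\eta) = \sum_{i\in\V}S_i(x_i) + \sum_{e\in\EE}W_e(\eta_e)$ is $C^1$. Positive-definiteness about the steady-state $(\mathrm{x},\mathrm{\eta})$ follows because each summand is nonnegative and vanishes precisely at $\mathrm{x}_i$ (resp.\ $\mathrm{\eta}_e$); consequently $S$ itself vanishes exactly on the set of steady-states that correspond to the given steady-state input-output pairs.

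The main task is the inequality on $\dot S$. I would start by writing the passivity inequalities at the agent and controller level along a trajectory: for each agent,
\begin{align*}
\frac{dS_i}{dt} \le -\rho_i(y_i-\mathrm y_i)^2 + (y_i-\mathrm y_i)(u_i-\mathrm u_i),
\end{align*}
and for each controller,
\begin{align*}
\frac{dW_e}{dt} \le -\nu_e(\mu_e-\upmu_e)^2 + (\mu_e-\upmu_e)(\zeta_e-\upzeta_e).
\end{align*}
Summing over $i\in\V$ and $e\in\EE$, the quadratic terms give exactly the right-hand side of \eqref{eq.PassivityInequality1}, so what remains is to show that the cross terms cancel.

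The cancellation is where the diffusive-coupling structure enters. Using the stacked notation $y,\mathrm y,u,\mathrm u$ and $\mu,\upmu,\zeta,\upzeta$, the cross terms aggregate to $(y-\mathrm y)^\top(u-\mathrm u) + (\mu-\upmu)^\top(\zeta-\upzeta)$. By the coupling $u=-\E_\G\mu$ and $\zeta=\E_\G^\top y$, which also hold at the steady-state, one has $u-\mathrm u=-\E_\G(\mu-\upmu)$ and $\zeta-\upzeta=\E_\G^\top(y-\mathrm y)$. Substituting and using $(y-\mathrm y)^\top\E_\G(\mu-\upmu) = ((\E_\G^\top(y-\mathrm y))^\top(\mu-\upmu) = (\zeta-\upzeta)^\top(\mu-\upmu)$, the two cross terms become exact opposites and cancel, yielding \eqref{eq.PassivityInequality1}.

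There is no substantive obstacle here: positive-definiteness and smoothness are inherited from the summands, and the key identity is the standard telescoping-through-the-incidence-matrix trick that makes passivity compose cleanly over diffusively-coupled networks. The only point requiring a little care is noting that the coupling relations $\mathrm u=-\E_\G\upmu$ and $\upzeta=\E_\G^\top\mathrm y$ indeed hold at the equilibrium (since $(\mathrm{u,y,\upzeta,\upmu})$ is a steady-state of the closed loop), which is what allows the differences $u-\mathrm u$ and $\zeta-\upzeta$ to be expressed through $\E_\G$ and its transpose applied to $\mu-\upmu$ and $y-\mathrm y$.
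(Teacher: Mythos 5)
Your proposal is correct and follows essentially the same route as the paper: sum the agent and controller shifted-passivity inequalities and cancel the cross terms via the coupling identity $(u-\mathrm u)^\top(y-\mathrm y) = -(\mu-\upmu)^\top\E_\G^\top(y-\mathrm y) = -(\mu-\upmu)^\top(\zeta-\upzeta)$, with positive-definiteness and smoothness of $S$ inherited from the summands. Your explicit remark that the coupling relations also hold at the steady-state is a point the paper leaves implicit, but the argument is the same.
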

\begin{proof}
The proof follows immediately from $S_i,W_e$ being positive-definite $C^1$-functions nulling only at $\mathrm x_i, \mathrm \eta_e$, by summing the following inequalities:
\begin{align*} 
\frac{dS_i}{dt} &\le (u_i(t)-\mathrm u_i)(y_i(t)-\mathrm y_i) - \rho_i(y_i(t)-\mathrm y_i)^2 \\
\frac{dW_e}{dt} &\le (\mu_e(t)-\mathrm \upmu_e)(\zeta_e(t)-\mathrm \upzeta_e) - \nu_e(\mu_e(t) - \upmu_e)^2,
\end{align*}
and using the equality $(u(t)-\mathrm u)^\top (y(t)-\mathrm y) = -(\mu(t)-\mathrm \upmu)^\top \E_\G^\top (y(t)-\mathrm y) = -(\mu(t)-\mathrm \upmu)(\zeta(t) - \mathrm \upzeta)$.
\end{proof}

The inequality \eqref{eq.PassivityInequality1} can be thought of as a way to check that the system is functioning properly. Indeed, we can monitor $x$, $y$, $\eta$, and $\mu$, and check that the inequality holds. If it does not, there must have been a fault in the system. This idea has a few drawbacks, linked to one another.
First, {as we commented in Subsection \ref{subsec.Assumptions}, in some networks, the controller state $\eta_e(t)$ can be defined only for existing edges, so using $\eta(t)$ requires us to know the functioning edges, which is absurd. Thus, in some cases, we must use} Assumption \ref{Assumption_g_1}. Second, in practice, even if we have access to $x$, we cannot measure it continuously. Instead, we measure it at certain time intervals. One can adapt \eqref{eq.PassivityInequality1} to an equivalent integral form:
\begin{align} \label{eq.PassivityInequality}
&S(x(t_{k+1}),\eta(t_{k+1}))-S(x(t_k),\eta(t_k)) \le \\& -\int_{t_{k}}^{t_{k+1}}(\sum_{i\in \V} \rho_i\Delta y_i(t)^2 + \sum_{e\in \EE} \nu_e\Delta \mu_e(t)^2)dt, \nonumber
\end{align}
where $\Delta y_i = y_i(t) - \mathrm y_i$ and $\Delta \mu_e = \mu_e(t) - \upmu_e$. However, this gives rise to the third problem - unlike the function $S$, we can not assure that the functions $(y_i(t) - \mathrm y_i)^2$ and $(\mu_e(t) - \upmu_e)^2$ (or their sum) is monotone. Thus, we cannot compute the integral appearing on the right-hand side of the inequality. 

We present two approaches to address this problem. First, we try and estimate the integral using high-rate sampling, by linearizing the right hand side of \eqref{eq.PassivityInequality} and bounding the error. Second, we try to bound the right-hand side as a function of $S$, resulting in an inequality of the form $\dot{S}\le-\mathcal F (S)$, which will give a convergence estimate.

\subsection{Asserting Convergence Using High-Rate Sampling}
Consider the inequality \eqref{eq.PassivityInequality}, and suppose $t_{k+1}-t_k = \Delta t_k$ is very small, so the functions $y_i(t) - \mathrm y_i$ and $\mu_e(t) - \upmu_e$ are roughly constant in the time period used for the integral. More precisely, recalling that $y = h(x)$ and $\mu = \phi(\eta,\E_\G^\top  y)$, and assuming these functions are differentiable near $x(t_k),\eta(t_k)$, we expand the right-hand side of \eqref{eq.PassivityInequality} to a Taylor series,
\begin{align} \label{eq.TaylorPassivity}
&\int_{t_{k}}^{t_{k+1}}\left(\sum_{i\in \V} \rho_i\Delta y_i(t)^2 + \sum_{e\in \EE} \nu_e\Delta \mu_e(t)^2\right)dt = \\
&\left(\sum_{i\in\V} \rho_i \Delta y_i (t_{k})^2 + \sum_{e\in \EE} \nu_e\Delta\mu_e(t_{k})^2\right)\Delta t_k + O(\Delta t_k^2) .\nonumber
\end{align}

We wish to give a more explicit bound on the $O(\Delta t_k^2)$ term. We consider the following function $G$, defined on the interval $[t_k,t_{k+1}]$ by the formula 
\begin{equation}\label{Gsimp}
G(t) = \sum_i \rho_i (y_i(t) - \mathrm y_i)^2 + \sum_e \nu_e(\mu_e(t) - \upmu_e)^2.
\end{equation}
The equation \eqref{eq.TaylorPassivity} is achieved from the approximation $G(t)=G(t_{k})+O(|t-t_{k}|)$ which is true for differentiable functions. Using Lagrange's mean value theorem for $t\in[t_k,t_{k+1}]$, we find some point $s\in (t,t_{k+1})$ such that $G(t) = G(t_{k}) + \frac{dG}{dt}(s)(t-t_{k})$. If we manage to bound the time derivative $\frac{dG}{dt}$ in the interval $[t_k,t_{k+1}]$, we would find a computational way to assert convergence. By the chain rule, the time derivative of $G$ is given by
\begin{align}\label{eq.DerivativeOfG}
\frac{dG}{dt} = \sum_{i\in\V} \rho_i (y_i(t) - \mathrm y_i)\dot{y}_i + \sum_{e\in\EE} \nu_e (\mu_e(t) - \upmu_e)\dot{\mu}_e.
\end{align}
In order to compute the time derivative of $y_i,\mu_i$, we recall that both are functions of $x$ and $\eta$, namely $y = h(x)$ and $\mu = \phi(\eta,\E_\G^\top  y)  = \phi(\eta,\E_\G^\top  h(x))$. Thus, we have that\small
\begin{align} \label{eq.dotsOfSignals}
\begin{cases}
\dot{y} &= \nabla_x h(x(t))\dot{x}\\  
\dot{\mu} &= \nabla_\eta \phi(\eta(t),\zeta(t))\dot{\eta}+\nabla_x \phi(\eta(t),\zeta(t))\E_\G^\top \nabla h(x(t))\dot{x} ,
\end{cases}
\end{align}\normalsize
where $\zeta(t) = \E_\G^\top h(x(t))$,  $\dot{x} = f(x,u) = f(x,-\E_\G\phi(\eta,\zeta))$, and $\dot{\eta} = \psi(\eta,\zeta) = \psi(\eta,\E_\G^\top h(x))$.
\if(0)
where $\dot{x} = -f(x) + u = -f(x) + \E_\G g(\E_\G^\top h(x))$.\fi Thus we can write the time derivative of $G$ as a continuous function of $x(t),\eta(t)$, as we plug the expressions for $\dot{y},\dot{\mu}$ into \eqref{eq.DerivativeOfG}. However, we do not know the value of $x(t),\eta(t)$ between measurements. 
To tackle this problem,notice that we have some information on where $x(t),\eta(t)$ can lie. Namely, equation \eqref{eq.PassivityInequality1} shows that $S(x(t),\eta(t))$ is a monotone descending function. Thus, we know that $x(t),\eta(t)$ lie in the set $\mathbb B = \{x,\eta:\ S(x,\eta) \le S(x(t_k),\eta(t_k))\}$. More precisely, we show the following.

\begin{prop} \label{prop.BoundDerivativeG}
Assume the functions $h_i,f_i,\phi_e,\psi_e$ are all continuously differentiable. Then for any time $t\in [t_k, t_{k+1}]$, the following inequality holds:\small
\begin{align*}
\bigg|\frac{dG}{dt}(t)\bigg| \le (\rho_\star M_{\Delta y}M_{\dot{y}} + \nu_\star M_{\Delta \mu}M_{\dot{\mu},x})M_{\dot{x}} + \nu_\star M_{\Delta \mu} M_{\dot{\mu},\eta}M_{\dot{\eta}},
\end{align*}\normalsize
where $
M_{\dot{x}} = \max_{(x,\eta)\in\mathbb B} \|f(x,-\E_\G\phi(\eta,\E_\G^\top h(x))\|,~$$
M_{\dot{\eta}} = \max_{(x,\eta)\in\mathbb B} \|\psi(\eta,\E_\G^\top h(x))\|,~$$
M_{\dot{y}} = \max_{(x,\eta)\in \mathbb{B}} \|\nabla_x h(x)\|,~$
$M_{\dot{\mu},x} = \max_{(x,\eta)\in \mathbb{B}} \|\nabla_\zeta \phi(\eta,\E_\G^\top h(x)) \E_\G^\top  \nabla_x h(x)\|,$
$
M_{\dot{\mu},\eta} = \max_{(x,\eta)\in \mathbb{B}} \|\nabla_\eta \phi(\eta,\E_\G^\top h(x))\|,$
$
M_{\delta y} = \max_{(x,\eta)\in \mathbb B} \|h(x) - h(\mathrm x)\|,~$
$
M_{\delta \mu} = \max_{(x,\eta)\in \mathbb B} \|\psi((\eta,\E_\G^\top h(x)) -\upmu\|,~$
$
$, $\rho_\star = \max_i \rho_i$, $\nu_e = \max_e \nu_e$, and $\mathbb B = \{(x,\eta):\ S(x,\eta) \le S(x(t_k),\eta(t_k))\}$.
\end{prop}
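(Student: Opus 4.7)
The plan is to start from the chain rule expression for $\frac{dG}{dt}$ in \eqref{eq.DerivativeOfG}, substitute the derivatives of $y$ and $\mu$ from \eqref{eq.dotsOfSignals}, and then apply the triangle inequality together with sub-multiplicativity of the operator norm, finally replacing every state-dependent quantity by its maximum over the sublevel set $\mathbb{B}$. The key preliminary observation is that the trajectory stays inside $\mathbb{B}$ on the whole interval $[t_k,t_{k+1}]$: since the previous proposition gives $\frac{dS}{dt}\le 0$ along closed-loop solutions, we have $S(x(t),\eta(t))\le S(x(t_k),\eta(t_k))$ for all $t\ge t_k$, so $(x(t),\eta(t))\in\mathbb{B}$ throughout $[t_k,t_{k+1}]$.

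Concretely, I would first bound the two sums in \eqref{eq.DerivativeOfG} by pulling out the largest passivity index and applying Cauchy--Schwarz coordinate-wise, yielding
\begin{align*}
\Big|\tfrac{dG}{dt}(t)\Big| \le \rho_\star\,\|y(t)-\mathrm y\|\,\|\dot y(t)\| + \nu_\star\,\|\mu(t)-\upmu\|\,\|\dot\mu(t)\|.
\end{align*}
Next, using \eqref{eq.dotsOfSignals} together with the formulas $\dot x = f(x,-\E_\G\phi(\eta,\E_\G^\top h(x)))$ and $\dot\eta = \psi(\eta,\E_\G^\top h(x))$, I would bound
\begin{align*}
\|\dot y(t)\| &\le \|\nabla_x h(x(t))\|\,\|\dot x(t)\|, \\
\|\dot\mu(t)\| &\le \|\nabla_\eta\phi(\eta(t),\zeta(t))\|\,\|\dot\eta(t)\| + \|\nabla_\zeta\phi(\eta(t),\zeta(t))\E_\G^\top\nabla_x h(x(t))\|\,\|\dot x(t)\|.
\end{align*}
Each factor on the right-hand side is a continuous function of $(x(t),\eta(t))$ by the assumed $C^1$ regularity of $h,f,\phi,\psi$, hence continuous on $\mathbb{B}$. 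Replacing each factor by its supremum over $\mathbb{B}$ gives the definitions $M_{\dot y},M_{\dot x},M_{\dot\eta},M_{\dot\mu,x},M_{\dot\mu,\eta},M_{\Delta y},M_{\Delta\mu}$ as in the statement. Combining the resulting estimates and regrouping the terms that multiply $M_{\dot x}$ gives exactly the bound claimed.

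The main obstacle is ensuring that the suprema defining the constants $M_\star$ are actually finite, i.e. that $\mathbb{B}$ is compact (or at least that each continuous functional appearing above is bounded on it). Compactness follows once $S$ is radially unbounded (proper), which is standard for MEIP storage functions constructed as in Theorem \ref{thm.MEIPAgents} whenever $h$ grows unboundedly; I would simply note this as a standing assumption for the section, since the preceding proposition already treats $S$ as a positive-definite function vanishing only at the steady state. Everything else is routine: triangle inequality, the operator-norm inequality $\|Av\|\le\|A\|\,\|v\|$, and Cauchy--Schwarz, together with the monotonicity of $S$ to certify that the trajectory never leaves $\mathbb{B}$ between two consecutive samples.
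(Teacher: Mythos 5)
Your proposal is correct and follows essentially the same route as the paper's proof: chain rule via \eqref{eq.DerivativeOfG} and \eqref{eq.dotsOfSignals}, Cauchy--Schwarz with the largest passivity indices pulled out, operator-norm bounds on $\dot y$ and $\dot\mu$, and replacement of each factor by its maximum over $\mathbb B$, using monotonicity of $S$ to keep the trajectory in $\mathbb B$. Your extra remark on compactness (properness) of $\mathbb B$ to guarantee the maxima are finite is a detail the paper leaves implicit, but it does not change the argument.
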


\begin{proof}
We fix some $t \in [t_k,t_{k+1}]$, so that $(x(t),\eta(t)) \in \mathbb B$. We use the expressions for $\dot{x},\dot{\eta},\dot{y},\dot{\mu}$ found {in \eqref{eq.dotsOfSignals}}. First, the conditions $\|\dot{x}\| \le M_{\dot{x}}$ and $\|\dot{\eta}\| \le M_{\dot{\eta}}$ are obvious. Equation \eqref{eq.dotsOfSignals} shows that $\|\dot{y}\| \le M_{\dot{y}} M_{\dot{x}}$ and $\|\dot{\mu}\| \le M_{\dot{\mu},x}M_{\dot{x}}+M_{\dot{\mu},\eta}M_{\dot{\eta}}$. By using Cauchy-Schwartz inequality on \eqref{eq.DerivativeOfG}, we obtain $\left|\frac{dG}{dt}\right| \le \rho_\star M_{\delta y}\|\dot{y}\| + \nu_\star M_{\delta \mu} \|\dot{\mu}\|$, concluding the proof.
\end{proof}

\if(0)
\begin{rem}
Suppose that Assumption \ref{Assumption_g_2} holds, so that the agents are given by $\dot{x}_i = -f_i(x_i) + q_i(x_i)u_i; y_i = h_i(x_i)$ and the controllers are given by $\mu = g(\zeta)$. In that case, $\dot{x} = -f(x) + u = -f(x) - \E_\G g(\E_\G^\top  x)$ and $\mu = g(\E_\G^\top h(x))$, so we can get a slightly more comprehensive bound by applying the same analysis. Namely, 
\begin{align*}
\bigg|\frac{dG}{dt}(t)\bigg| \le (\rho_\star M_{\delta y}M_{\dot{y}} + \nu_\star M_{\delta \mu}M_{\dot{\mu}})(M_q M_u+M_f)
\end{align*}
where $
M_u = \max_{x\in \mathbb B} ||\E_\G g(\E_\G^\top h(x))||,~$$
M_f = \max_{x\in \mathbb B} ||f(x)||,~$$
M_q = \max_{x\in \mathbb B} \max_{i\in\V} |q_i(x)| ,~$$
M_{\delta y} = \max_{x\in \mathbb B} ||h(x) - h(\mathrm x)|| ,~$$
M_{\delta \mu} = \max_{x\in \mathbb B} ||g(\E_\G^\top h(x)) - g(\E_\G^\top h(\mathrm x))|| ,~$$
M_{\dot{y}} = \max_{x\in \mathbb{B}} ||\nabla h(x)|| ,~$$
M_{\dot{\mu}} = \max_{x\in \mathbb{B}} ||\nabla g(\E_\G^\top h(x)) \E_\G^\top  \nabla h(x)|| ,~
$, $\rho_\star = \max_i \rho_i$, $\nu_e = \max_e \nu_e$, and $\mathbb B = \{x:\ S(x) \le S(x(t_k))\}$.
\end{rem}
\fi

\begin{cor} \label{cor.IneqSatisfied}
Fix any two times $t_k < t_{k+1}$, and consider the notation of Proposition \ref{prop.BoundDerivativeG}. Then the following inequality holds:
\begin{align} \label{eq.HighRateSampling}
&S(x(t_{k+1})) - S(x(t_k)) \le\\& -\left(\sum_i \rho_i \Delta y_i (t_{k})^2 + \sum_{e\in \EE} \nu_e\Delta\mu_e(t_{k})^2\right)\Delta t_k + \frac{M}{2}\Delta t_k^2,\nonumber
\end{align}
where $$M = (\rho_\star M_{\Delta y}M_{\dot{y}} + \nu_\star M_{\Delta \mu}M_{\dot{\mu},x})M_{\dot{x}} + \nu_\star M_{\Delta \mu} M_{\dot{\mu},\eta}M_{\dot{\eta}}.$$
\end{cor}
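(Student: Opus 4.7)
The plan is to derive \eqref{eq.HighRateSampling} by starting from the integral passivity inequality \eqref{eq.PassivityInequality} and then using the uniform bound on $dG/dt$ from Proposition \ref{prop.BoundDerivativeG} to replace the integrand $G(t)$ defined in \eqref{Gsimp} by its value at $t_k$ plus a first-order error term. Concretely, I would rewrite the right-hand side of \eqref{eq.PassivityInequality} as $-\int_{t_k}^{t_{k+1}} G(t)\,dt$, and then bound $G(t)$ from below by $G(t_k) - M(t-t_k)$ for all $t\in[t_k,t_{k+1}]$.

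The core step is the lower bound on $G(t)$. I would apply Lagrange's mean value theorem to $G$ on $[t_k,t]$: for some $s\in(t_k,t)$, $G(t) - G(t_k) = \frac{dG}{dt}(s)(t-t_k)$, so $|G(t) - G(t_k)| \le M(t-t_k)$ by Proposition \ref{prop.BoundDerivativeG}. Integrating the resulting inequality $G(t) \ge G(t_k) - M(t-t_k)$ over $[t_k,t_{k+1}]$ gives
\begin{align*}
\int_{t_k}^{t_{k+1}} G(t)\,dt \ge G(t_k)\,\Delta t_k - \tfrac{M}{2}\Delta t_k^{2}.
\end{align*}
Substituting this into \eqref{eq.PassivityInequality} and recognising $G(t_k) = \sum_{i\in\V}\rho_i\Delta y_i(t_k)^2 + \sum_{e\in\EE}\nu_e \Delta\mu_e(t_k)^2$ yields exactly \eqref{eq.HighRateSampling}.

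The one subtlety — the main place where care is needed — is that Proposition \ref{prop.BoundDerivativeG} only furnishes the bound $|dG/dt|\le M$ when $(x(t),\eta(t))\in\mathbb{B} = \{(x,\eta): S(x,\eta)\le S(x(t_k),\eta(t_k))\}$. I would dispatch this by invoking the continuous-time dissipation inequality \eqref{eq.PassivityInequality1}, which guarantees that $S(x(t),\eta(t))$ is non-increasing along closed-loop trajectories, so $(x(t),\eta(t))\in\mathbb{B}$ for every $t\ge t_k$, and in particular for every $t\in[t_k,t_{k+1}]$. With this observation in hand the mean value argument applies uniformly over the sampling interval and the proof concludes in a single line of algebra; no further delicate estimates are needed.
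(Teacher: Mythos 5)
Your proposal is correct and follows essentially the same route as the paper: both start from the integral dissipation inequality \eqref{eq.PassivityInequality}, use the derivative bound of Proposition \ref{prop.BoundDerivativeG} (via the mean value theorem) to control the variation of $G$ on $[t_k,t_{k+1}]$, and integrate the resulting linear bound to produce the $\frac{M}{2}\Delta t_k^2$ error term; your observation that monotonicity of $S$ keeps the trajectory inside $\mathbb{B}$ is exactly the justification the paper gives just before Proposition \ref{prop.BoundDerivativeG}. If anything, your version is stated more carefully than the paper's one-line proof, which anchors the bound at $t_{k+1}$ and drops a sign in the final display, whereas your lower bound $G(t)\ge G(t_k)-M(t-t_k)$ matches the endpoint $t_k$ actually appearing in \eqref{eq.HighRateSampling}.
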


\begin{proof}
Recall that $G(t) = \sum_{i\in \V} \rho_i (y_i(t) - \mathrm y_i)^2 + \sum_{e\in \EE} \nu_e (\mu_e(t) - \upmu_e)^2$. By Proposition \ref{prop.BoundDerivativeG}, for every $t\in [t_k,t_{k+1}]$ we have $G(t) \le G(t_{k+1}) + M|t-t_{k+1}|$. Thus \eqref{eq.PassivityInequality} implies that $S(x(t_{k+1})) - S(x(t_k)) \le G(t_{k+1})\Delta t_k + \frac{M}{2}\Delta t_k^2$.
\end{proof}

The corollary proposes a mathematically-sound method for asserting convergence of the output $y(t)$ to $\mathrm y$. One samples $y(t)$, $x(t),\eta(t)$, and $\mu(t)$ at times $t_1,t_2,t_3,\ldots$. At every time instance $t_{k+1}$, one checks that the inequality \eqref{eq.HighRateSampling} holds. We show that when $\Delta t_k \to 0$, this method asserts that the output of the system converges to the said value. In other words, assuming we sample the system at a high-enough rate, we can assert it converges very closely to the supposed steady-state output. Indeed, we prove the following.

\begin{prop} \label{prop.HighRateSamplingConvergence}
Let $t_1,t_2,\cdots$, be any monotone sequence of times such that $t_k \to \infty$, and suppose that the inequality \eqref{eq.HighRateSampling} holds for any $k$. Then for any $\varepsilon > 0$, there are infinitely many $N>0$ such that $\sum_{i\in\V} \rho_i \Delta y_i (t_N)^2 + \sum_{e\in\EE} \nu_e \Delta \mu_e(t_N)^2 < \frac{M}{2}\Delta t_N+\varepsilon$. {More precisely, for any two times $t_{N_1}\le t_{N_2}$, if $t_{N_2} \ge t_{N_1} + \e^{-1}S(x(t_{N_1}),\eta(t_{N_1}))$, then there exists some $k\in\{N_1,N_1+1,\cdots, N_2\}$ such that  $\sum_{i\in\V} \rho_i \Delta y_i (t_k)^2 + \sum_{e\in\EE} \nu_e \Delta \mu_e(t_k)^2 < \frac{M}{2}\Delta t_k+\varepsilon$.}
\end{prop}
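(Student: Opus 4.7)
The plan is to establish the quantitative second half of the statement by contradiction; the first, asymptotic claim then follows by letting $N_2 \to \infty$ for each fixed $N_1$. To set notation, write $G_k = \sum_{i \in \V} \rho_i \Delta y_i(t_k)^2 + \sum_{e \in \EE} \nu_e \Delta \mu_e(t_k)^2$ and $\Delta S_k = S(x(t_{k+1}),\eta(t_{k+1})) - S(x(t_k),\eta(t_k))$, so that inequality \eqref{eq.HighRateSampling} reads $\Delta S_k \le -G_k \Delta t_k + \tfrac{M}{2}(\Delta t_k)^2$.

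Suppose, toward contradiction, that $t_{N_2} \ge t_{N_1} + \e^{-1} S(x(t_{N_1}),\eta(t_{N_1}))$ yet $G_k \ge \tfrac{M}{2}\Delta t_k + \e$ for every $k \in \{N_1,N_1+1,\ldots,N_2-1\}$. Substituting the assumed lower bound into the per-step estimate above produces the clean telescoping-ready inequality $\Delta S_k \le -\e\,\Delta t_k$ for each such $k$. This is the key computational step and the rest is bookkeeping.

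Summing the per-step inequality from $k = N_1$ to $k = N_2 - 1$ gives $S(x(t_{N_2}),\eta(t_{N_2})) - S(x(t_{N_1}),\eta(t_{N_1})) \le -\e (t_{N_2} - t_{N_1})$. Since $S$ is positive-definite, rearranging forces $t_{N_2} - t_{N_1} \le \e^{-1} S(x(t_{N_1}),\eta(t_{N_1}))$, contradicting the gap assumption on $t_{N_1},t_{N_2}$. Thus some $k \in \{N_1,\ldots,N_2 - 1\} \subseteq \{N_1,\ldots,N_2\}$ must satisfy $G_k < \tfrac{M}{2}\Delta t_k + \e$, proving the quantitative claim. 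For the infinitely-many-$N$ assertion, pick any $N_1$; since $t_k \to \infty$ we can choose $N_2$ large enough that $t_{N_2} \ge t_{N_1} + \e^{-1} S(x(t_{N_1}),\eta(t_{N_1}))$, and the quantitative claim then yields an index $N \ge N_1$ with the desired inequality. Letting $N_1 \to \infty$ produces infinitely many such $N$.

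I do not anticipate a substantive obstacle: the argument is a standard Lyapunov-telescoping contradiction. The only point worth watching is that the per-step decrement is $\e \Delta t_k$ (proportional to elapsed time), not a constant per sample, which is precisely what makes the window width $\e^{-1} S(x(t_{N_1}),\eta(t_{N_1}))$ the right quantity in the hypothesis; any weaker accounting would force the window to depend on sampling density rather than on the initial energy.
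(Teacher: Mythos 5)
Your proposal is correct and follows essentially the same route as the paper: negate the claim on the window $\{N_1,\dots,N_2\}$, use \eqref{eq.HighRateSampling} to get the per-step decrement $-\e\,\Delta t_k$, telescope, and contradict $S\ge 0$ once the window length exceeds $\e^{-1}S(x(t_{N_1}),\eta(t_{N_1}))$; the infinitely-many-$N$ claim then follows by sending $N_1\to\infty$, exactly as in the paper. The only (shared, harmless) looseness is at the boundary case of equality, where one gets $S(x(t_{N_2}),\eta(t_{N_2}))\le 0$ rather than a strict contradiction, but then $S=0$ forces the trajectory to the steady state and the desired inequality holds at $k=N_2$ anyway.
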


The proposition can be thought of as a close-convergence estimate. The left-hand side, viewed as a function of $x,\eta$, is a non-negative smooth function, which nulls only at the steady-state $(\mathrm x, \upeta)$. Thus it is small only when $x(t),\eta(t)$ are close to $(\mathrm {x,\upeta})$, and because we know that $S(x(t),\eta(t))$ is monotone descending, once the trajectory arrives near $(\mathrm {x,\upeta})$, it must remain near $(\mathrm {x,\upeta})$. One might ask why ``infinitely many times" is more useful in this case. Indeed, it does not add any more information if the time intervals $\Delta t_k$ are taken as constant (i.e., we sample the system at a constant rate). However, we can measure the system at an ever-increasing rate, at least theoretically. Taking $\Delta t_k \to 0$ (while still having $t_k \to \infty$, e.g. $t_k = 1/k$), we see that we must have $x(t) \to \mathrm x$ and $\eta(t) \to \mathrm \upeta$, meaning we can use the proposition to assert convergence. We now prove the proposition.
\begin{proof}
It is enough to show that for each $\varepsilon > 0$ and any $N_1>0$, there is some $N > N_1$ such that $\sum_{i\in\V} \rho_i \Delta y_i (t_N)^2 + \sum_{e\in\EE} \nu_e \Delta \mu_e(t_N)^2 < \frac{M}{2}\Delta t_N+\varepsilon$. Indeed, suppose this is not the case. Then for any $k > N_1$, the right-hand side of \eqref{eq.HighRateSampling} is upper-bounded by $-\varepsilon \Delta t_k$. We sum the telescopic series and conclude that for any $k > N_1$,\small
\begin{align} \label{eq.TelescopicSum}
S(x(t_k)) - S(x(t_{N_1})) \le -\sum_{j=N_1+1}^{k} \varepsilon \Delta t_j = -\varepsilon (t_k - t_{N_1}),
\end{align}\normalsize
so $t_{N_2} \ge t_{N_1} + \e^{-1}S(x(t_{N_1}),\eta(t_{N_1}))$ implies that $S(x(t_k),\eta(t_k)) < 0$. This is absurd, as $S \ge 0$. Thus there must exist some $N > N_1$ such that $\sum_{i\in\V} \rho_i \Delta y_i (t_N)^2 + \sum_{e\in\EE} \nu_e \Delta \mu_e(t_N)^2 < \frac{M}{2}\Delta t_N+\varepsilon$. {The second part follows from \eqref{eq.TelescopicSum} and the demand that $S(x(t_k)) \ge 0$}.
\end{proof}   

{Proposition \ref{prop.HighRateSamplingConvergence} can be used for convergence assertion.} We can consider the following scheme - begin at time $t_0$ and state $x_0,\eta_0$. We want to show that $S(x(t),\eta(t))\to 0$. We instead show that $G(t)$, defined in \eqref{Gsimp}, gets arbitrarily close to $0$. As we said, this is enough as $G(t)$ is a $C^1$ non-negative function of the state $x(t),\eta(t)$ that is only small when $x(t),\eta(t)$ is close to the steady-state $(\mathrm x,\upeta)$. We prove:

{
\begin{thm}
Consider the algorithm $\mathscr{A}$, defined in the following form. Sample the system at times $t_1,t_2,\cdots$, and check whether the inequality \eqref{eq.HighRateSampling} holds. If it does, continue, and if does not, {then} stop and declare ``no". Then there exists a sequence $t_1,t_2,\cdots$, depending on the system and the initial conditions, such that $\mathscr{A}$ satisfies assumption \ref{assump.Diff}.
\end{thm}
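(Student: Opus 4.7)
The plan is to choose a sampling schedule $t_1,t_2,\ldots$ with $\Delta t_k\to 0$, $\sum_k\Delta t_k=\infty$, and $\sum_k\Delta t_k^2<\infty$ (for instance $\Delta t_k=(k+1)^{-2/3}$), and then verify the two halves of the ``if and only if'' in Assumption~\ref{assump.Diff} separately. The constant $M$ entering Corollary~\ref{cor.IneqSatisfied} depends on the initial state through the sub-level set $\mathbb{B}=\{(x,\eta):S(x,\eta)\le S(x_0,\eta_0)\}$, which is compact thanks to MEIP and the smoothness assumptions; this is precisely why the statement permits the schedule to depend on the system and the initial conditions.

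The no-false-negatives direction is essentially immediate. If the trajectory converges to $\mathrm y^\star$, then the conjectured pair $(\mathrm y^\star,\upmu^\star)$ is a genuine steady-state of $(\G,\Sigma,\Pi)$, so the continuous-time passivity inequality~\eqref{eq.PassivityInequality1} holds along the trajectory. Applying Corollary~\ref{cor.IneqSatisfied} then yields the sampled inequality~\eqref{eq.HighRateSampling} at every pair $t_k<t_{k+1}$, independently of the schedule, so $\mathscr{A}$ never declares ``no''.

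The no-false-positives direction is the substantive half. Suppose the trajectory does \emph{not} converge to $\mathrm y^\star$. Theorem~\ref{thm.ClosedLoopSteadyStates} still forces convergence to some steady state $(\mathrm y^\dagger,\upmu^\dagger)$, and the invertibility of $\gamma$ (a $C^1$ function under Assumption~\ref{Assumption}) combined with the strict monotonicity of $k^{-1}$ promote $y(t)\not\to\mathrm y^\star$ to a genuine discrepancy on the side whose passivity indices are strictly positive, so
\[
G_\infty:=\sum_{i\in\V}\rho_i(\mathrm y_i^\dagger-\mathrm y_i^\star)^2+\sum_{e\in\EE}\nu_e(\upmu_e^\dagger-\upmu_e^\star)^2>0.
\]
By continuity $G(t_k)\ge G_\infty/2$ for all sufficiently large $k$. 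Assuming, toward a contradiction, that \eqref{eq.HighRateSampling} holds at every $k$, telescoping from a large index $K$ gives
\[
S(x(t_{k+1}),\eta(t_{k+1}))-S(x(t_K),\eta(t_K))\le -\tfrac{G_\infty}{2}\sum_{j=K}^{k}\Delta t_j+\tfrac{M}{2}\sum_{j=K}^{k}\Delta t_j^2,
\]
whose right-hand side tends to $-\infty$ as $k\to\infty$ by the choice of schedule, contradicting $S\ge 0$. Hence \eqref{eq.HighRateSampling} must fail at some finite $k$, so $\mathscr{A}$ correctly declares ``no''.

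The main obstacle I anticipate is establishing the strict inequality $G_\infty>0$: Assumption~\ref{Assumption} only guarantees one of the two sides (agents or controllers) to be output-strictly MEIP, so the other side's passivity indices may vanish, and one has to use MEIP monotonicity together with the positive-derivative clause of Assumption~\ref{Assumption} to rule out the degenerate possibility that $y(t)\not\to\mathrm y^\star$ yet the entire discrepancy between $(\mathrm y^\dagger,\upmu^\dagger)$ and $(\mathrm y^\star,\upmu^\star)$ lives on the side whose indices are zero. A secondary routine task is to verify that each maximum defining $M$ in Proposition~\ref{prop.BoundDerivativeG} is finite on $\mathbb{B}$, which follows from compactness of $\mathbb{B}$ and continuity of the vector fields.
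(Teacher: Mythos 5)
Your overall architecture is sound, and in the hard direction it is genuinely different from the paper's. The paper designs the schedule in phases: with $\delta_j=\delta_1/2^{j-1}$ it samples at rate $\Delta_j t=\delta_j/M$ for a duration $\varepsilon_j^{-1}S(x_0,\eta_0)$ with $\varepsilon_j=\delta_j/2$, and then argues (by the mechanism of Proposition \ref{prop.HighRateSamplingConvergence}) that if the check \eqref{eq.HighRateSampling} never fails, $G$ must dip below $\delta_j$ within each phase, so $\liminf_k G(t_k)=0$, which together with near-monotonicity of $S$ and positive-definiteness of $G$ about the conjectured equilibrium gives convergence to the conjectured limit; this also yields quantitative closeness information ($G(t_k^\star)\le\delta_k$) that the paper reuses later. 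You instead fix a schedule with $\Delta t_k\to 0$, $\sum_k\Delta t_k=\infty$, $\sum_k\Delta t_k^2<\infty$, invoke Theorem \ref{thm.ClosedLoopSteadyStates} to get convergence of the measured trajectory to \emph{some} steady state, and derive a telescoping contradiction with $S\ge 0$. The agreement direction (via Corollary \ref{cor.IneqSatisfied}) is identical in both proofs, and your telescoping step is fine as such: $M$ enters only through $\sum_k\frac{M}{2}\Delta t_k^2<\infty$, so it does not matter whether the off-nominal trajectory stays in the sublevel set defining $M$.

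The genuine gap is exactly the one you flagged, and your sketch for closing it would not go through as stated. When the agents are output-strictly MEIP ($\rho_i>0$), $G_\infty>0$ is immediate from $\mathrm y^\dagger\neq\mathrm y^\star$ and your argument closes. But Assumption \ref{Assumption} also allows the ``vice versa'' configuration in which only the controller indices $\nu_e$ are positive, and there your appeal to ``invertibility of $\gamma$'' is unavailable: Assumption \ref{Assumption} only makes $\gamma_e$ a $C^1$ monotone function, and output-strict passivity of a static controller does not force strict monotonicity (a flat region satisfies the output-strict inequality trivially), so $\gamma(\E_\G^\top\mathrm y^\dagger)=\gamma(\E_\G^\top\mathrm y^\star)$ with $\mathrm y^\dagger\neq\mathrm y^\star$ is not obviously contradictory, especially since under a fault the two steady-state equations involve different incidence matrices; positivity of $dk_i^{-1}/d\mathrm y_i$ alone does not rescue the step. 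To be fair, the paper's own proof leans on the analogous soft spot (it asserts that $G$ is small only near the conjectured steady state and that $S$ decreases along the measured trajectory, which is transparent only when $\rho_i>0$ and the conjectured equilibrium is genuine for the measured network), so the obstacle is intrinsic to the method rather than an artifact of your route; still, as written your proposal leaves that step open, so it is a complete proof only in the output-strict-agents case.
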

\begin{proof}
We present the following method of choosing $t_1,t_2,\cdots$. We first choose $t_0 = 0$, an arbitrary $\delta_1>0$, compute $M$ as in Proposition \ref{prop.BoundDerivativeG}, and choose $\Delta_1 t = \frac{\delta_1}{M}$ and $\varepsilon = \frac{\delta_1}{2}$. Sample the system at rate $\Delta_1 t$ until time $t_{N_1} > t_0 + \varepsilon^{-1}S(x_0,\eta_0)$. {The process is then reiterated with $\delta_{k+1} = \delta_1/2^k$ for $k=1,2,\cdots$, giving rates $\Delta_k t$ and times $t_{N_k}$}. We claim that $\mathscr{A}$, with this choice of sample times, satisfies Assumption \ref{assump.Diff}. If the diffusively-coupled network $(\G,\Sigma,\Pi)$ converges to $(\mathrm x,\upeta)$, then Corollary \ref{cor.IneqSatisfied} implies the algorithm never stops, as required. It remains to show that if the algorithm never stops, the network converges to the conjectured limit. By the discussion above, and the fact that $S(x(t),\eta(t))$ is a monotone descending function, it is enough to show that $\lim\inf_{k\to \infty} G(t_k) = 0$.
We first show at some point, $G(t) < \delta_1$. By choice of $\Delta_1 t$, if \eqref{eq.HighRateSampling} holds at each time, then when we reach time $t_{N_1}$, we know that at some point, we had $G(t) \le \frac{M}{2}\Delta_1 t + \epsilon = \delta_1$. Reiterating shows that at some times $t_k^\star$, $G(t_k^\star) \le \delta _k$, where $\delta_{k+1} = \frac{\delta_1}{2^k}$, so ${\lim\inf}_{k\to\infty} G(t_k) =0$.
\end{proof}
}

The term ``High-Rate Sampling" comes from the fact that if $M$ is not updated when we re-iterate with smaller $\delta$, then eventually, $t_{k+1}-t_k \to 0$, which is impractical in real-world cases. However, we note that the number $M$ decreases as $S(x(t),\eta(t))$ decreases, as shown in Proposition \ref{prop.BoundDerivativeG}. Thus, if $M$ is updated between iterations, we might have $\Delta t\not\to 0$. 

\begin{rem}
There is a trade-off between the time-steps $\Delta t$ and the time it takes to find a point in which $G(t) < \frac{M}{2}\Delta t + \varepsilon$, which is $t = \frac{S(x(0),\eta(0))}{\varepsilon}$. On one hand, we want larger time-steps (to avoid high-rate sampling) and shorter overall times, however, increasing both $\Delta t$ and $\varepsilon$ creates a worse eventual bound on $G(t)$. We can choose both by maximizing an appropriate cost function $C(\Delta t,\varepsilon)$, monotone in both $\Delta t$ and $\varepsilon$, subject to $\frac{M}{2}\Delta t + \varepsilon = \delta_1, \varepsilon \ge 0, \Delta t \ge 0$. Choosing $C(\Delta t,\varepsilon)$ as linear is inadvisable, as the maximizing a linear function with linear constraints always leads to the optimizer being on the boundary, which means either $\Delta t=0$ or $\varepsilon = 0$. The choice $\Delta t = \frac{\delta_1}{M}$ and $\varepsilon = \frac{\delta_1}{2}$ mentioned above corresponds to the geometric average cost function $C(\Delta t,\varepsilon) = \sqrt{\Delta t \varepsilon}$. Other choices of $C$ can express practical constraints, e.g. relative apathy to large convergence times relative to high-rate sampling should result in a cost function penalizing small values of $\Delta t$ more harshly than small values of $\varepsilon$.
\end{rem}

\subsection {Asserting Convergence Using Convergence Profiles} \label{subsec.ConvProf}
For this subsection, we now assume that Assumption \ref{Assumption_g_2} holds and that the agents are output-strictly MEIP, i.e., that $\rho_i > 0$.
Consider \eqref{eq.PassivityInequality1} and suppose there is a non-negative monotone function $\mathcal F$ such that for any $t$, the right-hand side of \eqref{eq.PassivityInequality1} is bounded from above by $-\mathcal F(S)$. In that case, we get an estimate of the form $\dot{S} \le -\mathcal F(S)$. This is a weaker estimate than \eqref{eq.PassivityInequality1}, but it has a more appealing discrete-time form,
\begin{align} \label{eq.AutonomousODEOnS}
S(x(t_{k+1})) - S(x(t_{k})) \le& -\int_{t_{k}}^{t_{k+1}}\mathcal F(S(x(t)))dt \\\le& - \mathcal F(S(x(t_{k+1})))\cdot(t_{k+1}-t_k), \nonumber
\end{align}
where we use the monotonicity of $\mathcal F$ and the fact that $S(x(t))$ is monotone non-ascending. Due to Assumption \ref{Assumption_g_2}, we focus on the elements of the right-hand side of \eqref{eq.PassivityInequality1} corresponding to the agents, and neglect the ones corresponding to controllers, as $S$ is now the sum of the functions $S_i(x_i)$. Because controllers are passive, we have $\nu_e \ge 0$, so removing the said term does not change the inequality's validity.

In order to find $\mathcal F$, it is natural to look for functions $\Omega_i$ satisfying $\Omega_i(S_i) \le (y_i(t)-\mathrm y_i)^2$.  We define the existence of the functions $\Omega_i$ properly in the following definition.
\begin{defn}
Let $\Omega:[0,\infty)\to[0,\infty)$ be any function on the non-negative real numbers. We say that an autonomous system has the \emph{convergence profile} $(\rho,\Omega)$ with respect to the steady-state $(\mathrm {u,y})$ if there exists a $C^1$ storage function $S(x)$ such that the following inequalities hold:
\begin{itemize}
	\item[i)] $\frac{dS(x(t))}{dt} \le (u(t)-\mathrm u)(y(t)-\mathrm y) - \rho(y(t)-\mathrm y)^2 $,
	\item[ii)] $\Omega(S(x(t))) \le (y(t) - \mathrm y)^2. $
\end{itemize}
\end{defn}

\begin{exam}
Consider the SISO system $\Sigma$ defined by $\dot{x} = -x + u,\ y = x$, and consider the steady-state input-output pair $(0,0)$. The storage function $S(x(t)) = \frac{1}{2}x(t)^2$ satisfies
\begin{align*}
\dot{S}(x(t)) =& x(t)\dot{x}(t) = (u(t)-0)(y(t) -0)- (y(t)-0)^2 .
\end{align*}
Thus $\Sigma$ has convergence profile $(1,\Omega)$ for $\Omega(\theta) = 2\theta$. 
\end{exam}

More generally, when considering an LTI system with no input-feedthrough, both functions $S(x)$ and $(y(t)-\mathrm y)^2$ are quadratic in $x$. Thus there is a monotone linear function $\Omega$ such that the inequality $\Omega(S(x(t)) \le (y(t)-\mathrm y)^2$ holds. In particular, a function $\Omega$ exists in this case. We show that functions $\Omega$ exist for rather general systems.

\begin{thm} \label{thm.OmegaExists}
Let $\Sigma$ be the SISO system of the form {$\dot{x} = -f(x)+q(x)u,\, y=h(x)$}. Suppose $q$ is a positive continuous function, that $f/q$ is $C^1$ and monotone ascending and that $h$ is $C^1$ and strictly monotone ascending. Let $(\mathrm u=f(\mathrm x)/q(\mathrm x),y = h(\mathrm x))$ be any steady-state input-output pair of the system. Then
\begin{enumerate}
\item[i)] using the storage function $S(x) = \int_{\mathrm x} ^ x \frac{h(\sigma)-h(\mathrm x)}{q(\sigma)}d\sigma$, the system $\Sigma$ has the convergence profile $(\rho,\Omega)$ for a strictly ascending function $\Omega$ and $\rho = \inf_{x} \frac{f(x)-f(\mathrm x)}{h(x)-h(\mathrm x)}\ge 0$;
\item[ii)] \label{Part.PowerLaw}  suppose there exists some $\alpha > 0$ such that the limit $\lim_{x \to \mathrm x} \frac{|h(x)-h(\mathrm x)|}{|x-\mathrm x|^\alpha}$ exists and is finite. Then the limit $\lim_{\theta\to 0} \frac{\Omega(\theta)}{\theta^\beta}$ exists and is finite, where $\beta = \frac{2\alpha}{\alpha+1}$. {In other words, if $h$ behaves like a power law near $\mathrm x$, then $\Omega$ behaves like a power law near $0$}.  
\end{enumerate}
\end{thm}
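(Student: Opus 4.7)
The plan for Part (i) is as follows. The first inequality in the definition of a convergence profile is precisely the output-strict passivity inequality established in Part 3 of Theorem~\ref{thm.MEIPAgents} using the same storage function $S$, so no new work is required there. For the second inequality, the key observation is that the integrand $(h(\sigma)-h(\mathrm x))/q(\sigma)$ defining $S$ is strictly negative for $\sigma<\mathrm x$ and strictly positive for $\sigma>\mathrm x$, by strict monotonicity of $h$ together with positivity of $q$. Consequently $S$ is continuous, strictly decreasing on $(-\infty,\mathrm x]$ and strictly increasing on $[\mathrm x,\infty)$, and $S(\mathrm x)=0$. Hence $S$ restricts to bijections $S_+:[\mathrm x,\infty)\to[0,M_+)$ and $S_-:(-\infty,\mathrm x]\to[0,M_-)$ for some $M_\pm\in(0,\infty]$, with continuous inverses $S_\pm^{-1}$.

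Define $\phi_\pm(s)=\bigl(h(S_\pm^{-1}(s))-h(\mathrm x)\bigr)^2$ on $[0,M_\pm)$. Because $h$ is strictly monotone ascending and $S_\pm^{-1}$ are strictly monotone in opposite directions, the sign of $h(S_\pm^{-1}(s))-h(\mathrm x)$ is constant on each branch, and hence $\phi_\pm$ are continuous, strictly increasing, and vanish at $s=0$. Set $\Omega(s)=\min(\phi_+(s),\phi_-(s))$ on $[0,\min(M_+,M_-))$, and extend $\Omega$ to all of $[0,\infty)$ by any strictly ascending continuation; this is harmless because $S(x(t))$ is non-increasing by the passivity inequality, so only values $\le S(x(0))$ are ever realized along any trajectory. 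For every $x$, writing $s=S(x)$ we have $x=S_+^{-1}(s)$ or $x=S_-^{-1}(s)$, so $(h(x)-h(\mathrm x))^2=\phi_\pm(s)\ge\Omega(s)$, which is exactly the required inequality.

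For Part (ii), I would work out the asymptotics of $\phi_\pm$ near $s=0$ separately on each branch. Let $c=\lim_{x\to\mathrm x}|h(x)-h(\mathrm x)|/|x-\mathrm x|^\alpha$ and assume first $c>0$; strict monotonicity of $h$ forces $h(x)-h(\mathrm x)=c(x-\mathrm x)^\alpha(1+o(1))$ on $x>\mathrm x$ and similarly with a minus sign for $x<\mathrm x$. Substituting into $S(x)=\int_{\mathrm x}^x\frac{h(\sigma)-h(\mathrm x)}{q(\sigma)}d\sigma$ and using continuity and positivity of $q$ at $\mathrm x$, one gets
\begin{equation*}
S(x) = \frac{c\,|x-\mathrm x|^{\alpha+1}}{q(\mathrm x)(\alpha+1)}\bigl(1+o(1)\bigr),\qquad x\to\mathrm x.
\end{equation*}
Inverting this relation yields $|x-\mathrm x|\sim\bigl(q(\mathrm x)(\alpha+1)s/c\bigr)^{1/(\alpha+1)}$ as $s=S(x)\to 0$, and substituting back into $\phi_\pm(s)=(h(x)-h(\mathrm x))^2\sim c^2|x-\mathrm x|^{2\alpha}$ gives $\phi_\pm(s)/s^\beta\to c^2\bigl(q(\mathrm x)(\alpha+1)/c\bigr)^\beta$ with $\beta=2\alpha/(\alpha+1)$. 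Both branches yield the same, positive, finite limit, so $\Omega(s)/s^\beta$ also has a finite limit.

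The main obstacle I expect is the rigorous bookkeeping of the asymptotic inversion: turning the estimate $S(x)\sim C|x-\mathrm x|^{\alpha+1}$ into $|x-\mathrm x|\sim (S/C)^{1/(\alpha+1)}$ and propagating this through the nonlinear composition $(h(x)-h(\mathrm x))^2$ requires justifying that the $o(1)$ terms do not corrupt the leading order. This is done by standard $\varepsilon$-squeezing on the integral representation of $S$ (upper and lower bounding the integrand by $(c\pm\varepsilon)|\sigma-\mathrm x|^\alpha/(q(\mathrm x)\mp\varepsilon)$ on a small interval) and by the continuity of $t\mapsto t^{1/(\alpha+1)}$ at $0$. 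The edge case $c=0$ is handled by the same bounding argument, in which the limit $\Omega(s)/s^\beta$ equals $0$, still finite as required.
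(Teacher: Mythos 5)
Your argument is, in substance, the same as the paper's: your $\Omega(s)=\min(\phi_+(s),\phi_-(s))$ built from the two monotone branches of $S$ is exactly the inverse of the paper's $\omega(\theta)=\sup\{S(x):\,(h(x)-h(\mathrm x))^2\le\theta\}$, and your part~(ii) computation ($S(x)\sim \tfrac{c}{q(\mathrm x)(\alpha+1)}|x-\mathrm x|^{\alpha+1}$, invert, substitute back) is the same asymptotic inversion the paper performs, with matching constants.

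Two small points where your write-up is weaker than it should be. First, the extension of $\Omega$ beyond $\min(M_+,M_-)$: you justify an \emph{arbitrary} strictly ascending continuation by saying only values $\le S(x(0))$ are realized, but the convergence-profile inequality must hold for every initial condition, and $S(x(0))$ itself may lie in $[\min(M_+,M_-),\max(M_+,M_-))$ (this can happen, since $q$ may grow fast enough to make $S$ bounded on one branch only). The continuation there is not free: you must take $\Omega=\phi_+$ (resp.\ $\phi_-$) on the range attained only by the surviving branch, which of course still yields a strictly ascending $\Omega$; the fix is one line, but the trajectory-based justification as stated does not cover it. Second, your closing claim that the case $c=0$ is ``handled by the same bounding argument'' with limit $0$ is not established and is in fact false in general: with $h$ merely $C^1$, strictly increasing and $|h(x)-h(\mathrm x)|=o(|x-\mathrm x|^\alpha)$, one can make $(h(x)-h(\mathrm x))^2/S(x)^\beta$ oscillate between $0$ and a positive value along sequences $x\to\mathrm x$, so $\lim_{\theta\to0}\Omega(\theta)/\theta^\beta$ need not exist. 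The paper's proof sidesteps this by (implicitly) taking the limit in the hypothesis to be positive ($C>0$); you should either add that hypothesis or claim only finiteness of the limsup in that edge case rather than existence of the limit.
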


\begin{proof}
The passivation inequality follows from Theorem \ref{thm.MEIPAgents}, so we focus on constructing the function $\Omega$. For every $\theta\ge 0$, we define the set $\mathbb A_\theta = \{x\in \mathbb{R}:\ (h(x)-h(\mathrm x))^2 \le \theta\}$. We want that $x \in \mathbb{A}_\theta$ would imply that that $\Omega(S(x))\le \theta$. Because $h$ is continuous and monotone, it is clear that $\mathbb{A}_\theta$ is an interval containing $\mathrm x$. Now, let $\omega$ be the function on $[0,\infty)$ defined as $\omega(\theta) = \sup_{x\in\mathbb{A}_\theta} S(x)$. We note that $\omega$ can take infinite values (e.g. when $h$ is bounded, but $S$ is not). However, we show that the restriction of $\omega$ on $\{\theta:\ \omega(\theta) < \infty\}$ is strictly monotone. If we show that this claim is true, then $\omega$ has an inverse function {which is also strictly monotone}. Define $\Omega = \omega^{-1}$ as the strictly monotone inverse function. By definition, for any $x\in\mathbb{R}$ we have that $x \in \mathbb{A}_\theta$ for $\theta = (h(x)-h(\mathrm x))^2$, so $S(x) \le \omega(\theta)$. Thus 
$
\Omega(S(x)) \le \Omega(\omega(\theta)) = \theta = (h(x) - h(\mathrm x))^2,
$
concluding the first part of the proof. 

We now prove that the restriction of $\omega$ on $\{\theta:\ \omega(\theta) < \infty\}$ is strictly monotone. It is clear that if $0\le \theta_1 < \theta_2$ then the interval $\{x:\ (h(x)-h(\mathrm x))^2 \le \theta_1\} = \mathbb{A}_{\theta_1}$ is strictly contained in the interval $\{x:\ (h(x)-h(\mathrm x))^2 \le \theta_2\} = \mathbb{A}_{\theta_2}${, as $h$ is strictly monotone. Moreover, It is clear that $S$ is strictly ascending in $\left[\mathrm x,\infty\right)$ and strictly descending in $\left(-\infty,\mathrm x\right]$, as the function $\frac{h(x)-h(\mathrm x)}{g(x)}$ is positive on $(\mathrm x,\infty)$ and negative on $(-\infty,\mathrm x)$}. Thus we have $\omega(\theta_1) < \omega(\theta_2)${, unless $\omega(\theta_1) = \infty$}, which is what we wanted to prove.

We now move to the second part of theorem, in which we show that if $h$ behaves like a power law near $\mathrm x$, then $\Omega$ behaves like a power law near zero. We use big-$O$ notation (in the limit $x\to \mathrm x$). By assumption and strict monotonicity of $h$, we have:
\begin{align} \label{eq.hPowerLaw}
h(x)-h(\mathrm x) = C\mathrm{sgn}(x-\mathrm x)|x-\mathrm x|^\alpha + o(|x-\mathrm x|^\alpha)
\end{align}
for some constant $C>0$, implying
$
(h(x)-h(\mathrm x))^2 = C^2 |x-\mathrm x|^{2\alpha} + o(|x-\mathrm x|^{2\alpha}).
$
By definition, we conclude that for $\theta>0$ small enough, $\mathbb A_\theta$ is an interval centered at $\mathrm x$ and has radius $\theta^{1/2\alpha}/C^{1/\alpha} + o(\theta^{1/2\alpha})$.
We recall that $S(x) = \int_\mathrm x^x \frac{h(\sigma)-h(\mathrm x)}{q(\sigma)}d\sigma$. We write $q(x) = q(\mathrm x) + o(1)$  as $q$ is continuous, so \eqref{eq.hPowerLaw} implies that  $$\frac{h(\sigma)-h(\mathrm x)}{q(\sigma)} = \frac{1}{q(\mathrm x)}\left(C\mathrm{sgn}(x-\mathrm x)|x-\mathrm x|^\alpha + o(|x-\mathrm x|^\alpha)\right).$$ Thus, $S(x) = \frac{C}{q(\mathrm x)(\alpha+1)}|x-\mathrm x|^{\alpha+1} + o(|x-\mathrm x|^{\alpha+1})$. We now compute $\omega(\theta)$ by definition, using our characterization of $\mathbb A_\theta$:{\small
\begin{align*}
&\omega(\theta) = \max_{x\in \mathbb A_\theta} S(x)
 = \max_{x\in \mathbb A_\theta} \bigg(\frac{C|x-\mathrm x|^{\alpha+1}}{q(\mathrm x)(\alpha+1)} + o(|x-\mathrm x|^{\alpha+1})\bigg) \\&
=\frac{C}{q(\mathrm x)(\alpha+1)} \bigg(\frac{\theta^{\frac{1}{2\alpha}}}{C^{1/\alpha}}\bigg)^{\alpha+1} + o\left((\theta^{\frac{1}{2\alpha}})^{\alpha+1}\right) = (D+o(1))\theta^{\frac{\alpha+1}{2\alpha}}
\end{align*}\normalsize}
for $D=\frac{1}{q(\mathrm x)(\alpha+1)C^{1/\alpha}}>0$. Thus, the inverse function $\Omega(\theta)$ is given as $\Omega(\theta) = (D^{-\frac{2\alpha}{1+\alpha}}-o(1))\theta^{\frac{2\alpha}{1+\alpha}}$, as desired.
\end{proof}

\begin{exam}
Consider a system with $q(x) = 1, h(x) = \sqrt[3]{x}$ and a steady state $\mathrm u = \mathrm x = \mathrm y = 0$. $h(x)$ behaves like a power law with power $\alpha = \frac{1}{3}$. Part ii) of Theorem \ref{thm.OmegaExists} implies that $\Omega$ also behaves like a power law, with power $\beta = \frac{2\alpha}{\alpha+1} = \frac{1}{2}$. We exemplify the computation of $\Omega$ as done in the proof, and show it behaves like a power law with $\beta = \frac{1}{2}$, as forecasted by the theorem.
Indeed, $S(x) = \int_0^x \sqrt[3]{\sigma}d\sigma = \frac{3}{4}x^{4/3}$, and $(h(x)-h(\mathrm x))^2 = x^{2/3}$. 

For every $\theta \ge 0$, $\mathbb{A}_\theta = \{x:\ x^{2/3} \le \theta\} = [-\theta^{1.5},\theta^{1.5}]$. Thus
$
\omega(\theta) = \sup_{x\in\mathbb{A}_\theta} S(x) = \sup_{|x|\le \theta^{1.5}} \frac{3}{4}x^{4/3} = \frac{3}{4}\theta^2,
$
implying that $\Omega$, the inverse function of $\omega$, is given by $\sqrt{\frac{4}{3}\theta}$, and one observes that actually $(h(x)-h(\mathrm x))^2 = \Omega(S(x))$. 
\end{exam}

\begin{rem}
Theorem \ref{thm.OmegaExists} gives a prescription to design the function $\Omega$. However, some steps, namely the inversion of $\omega$, are computationally hard. For example, if $h(x) = 1-e^{-x}$ and $q(x) = 1$, then $\omega(\theta) = \log_{e}{\frac{1}{1-\sqrt{\theta}}} - \sqrt{\theta}$ for $\theta < 1$ and $\omega(\theta) = \infty$ for $\theta \ge 1$, which is almost impossible to invert analytically. To solve this problem, we can either precompute the different values of $\Omega$ numerically and store them in a table, or approximate them on-line using the bisection method. The strength of Theorem \ref{thm.OmegaExists} is that it shows that a function $\Omega$ can always be found, implying this method is always applicable. 
\end{rem}

Up until now, we managed to transform the equation \eqref{eq.PassivityInequality} to the equation $
\frac{dS}{dt} \le \sum_i -\rho_i \Omega_i(S_i)$,
for some non-negative monotone functions $\Omega_i$. This is closer to an inequality of the form $\dot{S} \le -\mathcal F(S)$, but we still cannot use it without high-rate sampling, as we cannot assume that $S_i(x_i(t))$ are monotone decreasing. We want to transform the right hand side into a function of $S$. We note that $\Omega_i(\theta_i) = 0$ only at $\theta_i = 0$, as $S_i=0$ happens only at $\mathrm x_i$. We claim the following:
\begin{prop} \label{prop.Jensen}
Let $\rho_1,...,\rho_n$ be any positive numbers, and let $\Omega_1,...,\Omega_n:[0,\infty)\to [0,\infty)$ be any $C^1$ strictly monotone functions such that $\Omega_i(\theta_i) = 0$ only at $\theta_i = 0$. Suppose further that for any $i$ there exists some $\beta_i>0$ such that the limit $\lim_{\theta_i\to 0} \frac{\Omega_i(\theta_i)}{\theta_i^{\beta_i}}$ exists and is positive. Define $\Omega_\star:[0,\infty)\to [0,\infty)$ as $\Omega_\star(\theta) = \min_i \Omega_i(\theta)$. Then for every $D>0$, there exists some constant $C>0$ such that for all $D\ge\theta_1,\cdots,\theta_n\ge 0$, we have $\sum_{i=1}^n \rho_i \Omega_i(\theta_i) \ge C\cdot\Omega_\star(\sum_{i=1}^n \theta_i)$.
\end{prop}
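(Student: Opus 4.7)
The plan is to reduce the multi-variable inequality to a single-variable comparison via a pigeonhole step, and then exploit the assumed power-law behaviour at $0$ to turn the pointwise relation $\Omega_\star(\theta/n)>0$ into a uniform estimate of the form $\Omega_\star(\theta/n)\ge c\,\Omega_\star(\theta)$ on the compact interval $[0,nD]$. Combining the two parts will yield the claim with $C=\rho_\star\cdot c$, where $\rho_\star = \min_i\rho_i>0$.

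First, let $\theta = \sum_{i=1}^n \theta_i$, so $\theta\in[0,nD]$. By pigeonhole there is some index $j$ with $\theta_j\ge \theta/n$. Since every $\Omega_i$ is monotone ascending (it is strictly monotone and $\Omega_i(0)=0$ is a minimum), and since $\rho_i\ge\rho_\star$, we get
\begin{equation*}
\sum_{i=1}^n \rho_i\Omega_i(\theta_i) \ge \rho_j\Omega_j(\theta_j) \ge \rho_\star\,\Omega_j(\theta/n) \ge \rho_\star\,\Omega_\star(\theta/n).
\end{equation*}
So the proposition reduces to showing that there is a constant $c>0$ such that $\Omega_\star(\theta/n)\ge c\,\Omega_\star(\theta)$ for all $\theta\in[0,nD]$.

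For this, I will bound each $\Omega_i(\theta/n)/\Omega_i(\theta)$ below, uniformly in $\theta\in(0,nD]$. The power-law hypothesis $\Omega_i(\theta)/\theta^{\beta_i}\to a_i>0$ as $\theta\to 0^+$ gives $\lim_{\theta\to 0^+}\Omega_i(\theta/n)/\Omega_i(\theta)=n^{-\beta_i}>0$, so the ratio extends continuously to $0$ with a positive value. On any interval $[\varepsilon,nD]$ the ratio is continuous and strictly positive (using $C^1$-ness, strict monotonicity, and $\Omega_i>0$ on $(0,\infty)$), hence bounded below by a positive constant. Together these give a uniform lower bound $c_i>0$ with $\Omega_i(\theta/n)\ge c_i\,\Omega_i(\theta)$ for all $\theta\in[0,nD]$ (the case $\theta=0$ is trivial since both sides vanish). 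Setting $c=\min_i c_i>0$,
\begin{equation*}
\Omega_\star(\theta/n) = \min_j \Omega_j(\theta/n) \ge \min_j c\,\Omega_j(\theta) = c\,\Omega_\star(\theta),
\end{equation*}
so taking $C=\rho_\star c$ finishes the argument.

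The routine content is the pigeonhole bound and the passage from $\Omega_j$ to $\Omega_\star$ via $\min_i\rho_i>0$. The delicate step is the uniform ratio estimate for each $\Omega_i$: without the power-law assumption, $\Omega_i(\theta/n)/\Omega_i(\theta)$ might collapse to $0$ as $\theta\to 0^+$, which would break the uniform bound. This is precisely why the hypothesis $\Omega_i(\theta)\sim a_i\theta^{\beta_i}$ near $0$ appears in the statement; it is also the reason for restricting $\theta_i$ to a bounded range $[0,D]$, since the ratio need not remain bounded below as $\theta\to\infty$.
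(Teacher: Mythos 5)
Your proof is correct, and it takes a somewhat different route from the paper's. The paper first reduces (via $\Omega_i\ge\Omega_\star$ and $\rho_i\ge\min_i\rho_i$) to bounding the $n$-variable ratio $F(\theta_1,\dots,\theta_n)=\sum_i\Omega_\star(\theta_i)/\Omega_\star(\sum_i\theta_i)$ from below on the punctured cube $[0,D]^n\setminus\{0\}$: compactness handles the region away from the origin, and a $\liminf$ computation at the origin is done by factoring through $\theta^\beta$ with the single exponent $\beta=\max_i\beta_i$ (which implicitly uses that $\Omega_\star$ itself obeys a power law with exponent $\beta$ near $0$). You instead reduce to a one-variable scaling inequality: pigeonhole gives $\sum_i\rho_i\Omega_i(\theta_i)\ge\rho_\star\Omega_\star(\theta/n)$ with $\theta=\sum_i\theta_i$, and then you prove $\Omega_i(\theta/n)\ge c_i\,\Omega_i(\theta)$ uniformly on $[0,nD]$ for each $i$ separately, using the power-law limit (which gives the ratio limit $n^{-\beta_i}$ at $0$) together with continuity and positivity on $[\varepsilon,nD]$, and pass to the min at the end. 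The pigeonhole bound is essentially the same trick the paper uses inside its $\liminf$ step ($\sum_i\theta_i\le n\max_i\theta_i$, $\sum_i\Omega_\star(\theta_i)\ge\Omega_\star(\max_i\theta_i)$), but your per-function ratio estimate avoids both the $n$-dimensional $\liminf$ analysis and the need to verify the power-law behaviour of the minimum $\Omega_\star$, which makes the argument a bit more self-contained; the paper's version is more compact once those facts are granted. Your closing remarks correctly identify why the power-law hypothesis and the restriction to $[0,D]$ are needed.
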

{The proof of the proposition can be found in the appendix.}
\begin{cor} \label{cor.FFunction}
Let $S_1,...,S_n$ be the storage functions of the agents, let $S=\sum_i S_i$, and let $\Omega_1,\cdots,\Omega_n$ be $C^1$ strictly monotone functions such that $\Omega_i(\theta_i) = 0$ only at $\theta_i = 0$. Suppose that for any $i$ there exists some $\beta_i>0$ such that the limit $\lim_{\theta_i\to 0} \frac{\Omega_i(\theta_i)}{\theta_i^{\beta_i}}$ exists and is positive. Moreover, Suppose that $\dot{S} \le \sum_i \rho_i \Omega_i(S_i)$. Then for every bounded set $B\subset \mathbb{R}^n$ there exists a constant $C>0$ such that for any trajectory of the closed-loop system with initial condition in $B$, the inequality $\dot{S} \le -C\cdot \Omega_\star(S)$ holds, where $\Omega_\star(\theta) = \min_i \Omega_i(\theta)$.
\end{cor}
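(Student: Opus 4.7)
The plan is to reduce the corollary directly to Proposition \ref{prop.Jensen} by exploiting the fact that $S$ is monotone non-ascending along trajectories. First, note that the hypothesis $\dot{S}\le-\sum_i\rho_i\Omega_i(S_i)$ together with $\Omega_i\ge 0$ and $\rho_i>0$ implies $\dot{S}\le 0$, so $S(x(t))\le S(x(0))$ for every $t\ge 0$. Since each $S_i$ is continuous and $B$ is bounded, the quantity $D:=\sup_{x_0\in B}S(x_0)$ is finite. Moreover, because each $S_i\ge 0$ and $S=\sum_iS_i$, we have the pointwise bound $0\le S_i(x_i(t))\le S(x(t))\le D$ for every coordinate $i$ and every time $t\ge 0$ along any trajectory starting in $B$.

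With this uniform bound in hand, I would apply Proposition \ref{prop.Jensen} with the choice $\theta_i=S_i(x_i(t))$. The hypotheses of the proposition are met: the $\Omega_i$ are $C^1$, strictly monotone, vanish only at $0$, and satisfy the power-law asymptotics by assumption. The proposition therefore yields a constant $C=C(D)>0$, depending only on $D$ and the fixed functions $\Omega_i$ and weights $\rho_i$, such that
\begin{equation*}
\sum_{i=1}^n\rho_i\Omega_i\bigl(S_i(x_i(t))\bigr)\;\ge\;C\cdot\Omega_\star\!\left(\sum_{i=1}^nS_i(x_i(t))\right)\;=\;C\cdot\Omega_\star\bigl(S(x(t))\bigr),
\end{equation*}
valid for every $t\ge 0$ along any trajectory originating in $B$. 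Chaining this with the dissipation hypothesis produces the desired bound $\dot{S}\le -C\cdot\Omega_\star(S)$.

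The only nontrivial point is the first paragraph: one must be careful that the argument $D$ of Proposition \ref{prop.Jensen} is a single constant valid uniformly for the entire trajectory, not merely at the initial time; this is guaranteed precisely by the monotonicity of $S$, which in turn follows from the dissipation inequality before the corollary. The remainder is a direct invocation of Proposition \ref{prop.Jensen}, with no further analytic obstacles.
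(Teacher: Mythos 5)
Your proof is correct and follows essentially the same route as the paper, which simply invokes Proposition \ref{prop.Jensen} with $\theta_i = S_i$ and $D = S(x(0))$. Your only addition is taking $D=\sup_{x_0\in B}S(x_0)$ and noting the monotonicity of $S$ along trajectories, which is a slightly more careful way of obtaining a single constant $C$ uniform over all initial conditions in $B$, exactly as the statement requires.
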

\begin{proof}
Use $\theta_i = S_i$  and $D = S(x(0))$ in Proposition \ref{prop.Jensen}. 
\end{proof}

Proposition \ref{prop.Jensen} and Corollary \ref{cor.FFunction} show that an inequality of the form \eqref{eq.AutonomousODEOnS} can be achieved, so long the functions $\Omega_i$ from Theorem \ref{thm.OmegaExists} ``behave nicely" around $0$, namely do not grow faster nor slower than a power law. This condition is very general, and only excludes pathologies as $\Omega(\theta) = \frac{1}{\log(1/\theta)}$, growing faster than any power law, and $\Omega(\theta) = \exp(-1/\theta^2)$, growing slower than any power law. Theorem \ref{thm.OmegaExists} shows that if $h$ behaves like a power law near $\mathrm x$, then so does $\Omega$, so pathological functions $\Omega_\star$ can only come from pathological measurement functions $h_i$. We show it is enough to check the discretized equation \eqref{eq.AutonomousODEOnS} to assert convergence.

\begin{prop} \label{prop.ConvRateInequality}
Let $\Omega_\star:[0,\infty)\to [0,\infty)$ be any continuous function such that $\Omega_\star(\theta) = 0$ only at $\theta = 0$. Let $\tilde S(t)$ be any time-dependent monotone decreasing function $\tilde S:[0,\infty)\to [0,\infty)$. Let $t_1,t_2,t_3,\cdots$ be any unbounded sequence of times such that $\mathrm{liminf}_{k\to\infty} (t_{k+1}-t_k) > 0$, and suppose that for every $k$, the inequality $\tilde S(t_{k+1}) - \tilde S(t_k) \le -\Omega_\star(\tilde S(t_{k+1}))(t_{k+1} - t_k)$ holds. Then $\tilde S(t)\to 0$ as $t\to \infty$.
\end{prop}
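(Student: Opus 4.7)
The plan is to argue by contradiction using the fact that $\tilde S$ is monotone decreasing and bounded below by $0$, hence converges to some limit $L\ge 0$. The goal is to rule out $L>0$ by showing that in that case the discrete inequality forces $\tilde S(t_k)$ to decrease by a fixed positive amount per unit time along the sequence, eventually driving it below $0$.

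First I would record that since $\tilde S:[0,\infty)\to[0,\infty)$ is monotone decreasing and nonnegative, the limit $L := \lim_{t\to\infty}\tilde S(t)\ge 0$ exists, and $\tilde S(t_{k+1})\to L$ as $k\to\infty$. Assume, for contradiction, that $L>0$. By continuity of $\Omega_\star$ and the hypothesis that $\Omega_\star$ vanishes only at $0$, we have $\Omega_\star(L)>0$. Continuity then gives an index $K_1$ such that $\Omega_\star(\tilde S(t_{k+1}))\ge \tfrac{1}{2}\Omega_\star(L) =: m >0$ for all $k\ge K_1$. Next, since $\liminf_{k\to\infty}(t_{k+1}-t_k)>0$, there exist $\delta>0$ and an index $K_2$ such that $t_{k+1}-t_k\ge \delta$ for all $k\ge K_2$; in particular the sequence $(t_k)$ diverges to $\infty$ (we also use that it is unbounded by hypothesis, so along a subsequence, and combined with the uniform lower bound on gaps, actually $t_k\to\infty$).

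Set $K=\max\{K_1,K_2\}$. Applying the hypothesized inequality at every index $k\ge K$ gives
\begin{align*}
\tilde S(t_{k+1})-\tilde S(t_k) \le -\Omega_\star(\tilde S(t_{k+1}))(t_{k+1}-t_k) \le -m(t_{k+1}-t_k).
\end{align*}
Summing the telescoping left-hand side from $k=K$ to $k=N-1$ yields
\begin{align*}
\tilde S(t_N)-\tilde S(t_K) \le -m(t_N-t_K).
\end{align*}
Letting $N\to\infty$, the right-hand side tends to $-\infty$, while the left-hand side is bounded below by $-\tilde S(t_K)$. This is the desired contradiction, so $L=0$, which together with monotonicity of $\tilde S$ gives $\tilde S(t)\to 0$ as $t\to\infty$.

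The only subtle step is the initial reduction from the continuous-time conclusion $\tilde S(t)\to 0$ to a statement about the discrete sampling times $t_k$; this is handled for free by monotonicity of $\tilde S$, which lets us interpolate between samples. The rest is a routine telescoping/summation argument, and the hypotheses $\liminf(t_{k+1}-t_k)>0$ and $\Omega_\star>0$ away from zero exist precisely to produce the uniform negative drift $-m\delta$ per step that is needed to force a contradiction.
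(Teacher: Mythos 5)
Your proof is correct and rests on the same ingredients as the paper's: monotone convergence of $\tilde S(t_k)$ to a limit $L$, continuity of $\Omega_\star$ together with the fact that it vanishes only at $0$, the positive lower bound on the sampling gaps, and finally monotonicity of $\tilde S$ to pass from the sampled sequence to all $t$. The only (cosmetic) difference is that you argue by contradiction with a telescoping sum driving $\tilde S$ to $-\infty$, whereas the paper simply passes to the limit $k\to\infty$ in the single-step inequality to conclude $\Omega_\star(\tilde S_\infty)=0$ directly; both are valid and essentially equivalent.
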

{The proof of the proposition can be found in the appendix.}{
We want to use $\tilde S(t) = S(x(t))$. The results above suggest an algorithm for convergence assertion.
\begin{algorithm} [h]
\caption{Convergence Assertion using Convergence Profile}
\label{alg.ConvProfile}
\textbf{Input:} A diffusively-coupled network $(\G,\Sigma,\Pi)$, an initial condition $x(0)$ and a conjectured steady-state $\hat{\mathrm x}$.
\begin{algorithmic}[1]
\State Let $S_i(x_i) = \int_{\hat{\mathrm x}_i}^{x_i} \frac{h_i(\sigma_i) - h_i(\hat{\mathrm x}_i)}{q_i(\sigma_i)}d\sigma_i$, $S(x) = \sum_i S_i(x_i)$.
\State Use Theorem \ref{thm.OmegaExists}, Proposition \ref{prop.Jensen} and Corollary \ref{cor.FFunction} to find a function $\Omega$ such that $\dot{S} \le -\Omega(S)$ for all times $t$, with initial condition $x(0)$. \;
\State Choose $\delta_0 = S(x(0))$ and $t_0 = 0$\;
\For {$k=0,1,2,3,\cdots$}
\State Define $\delta_{k+1} = \delta_{k}/2$.;
\State Define $M = \min_{x:\ S(x)\ge \delta_{k+1}} \Omega(S(x))$\;
\State Sample the system at time some $t_{k+1} > t_{k} + \frac{S(x_0)}{M}$.
\If {\small$S(x(t_{k+1}))-S(x(t_k)) \not\le -\Omega(S(x(t_{k+1})))(t_{k+1}-t_k)$\normalsize}
\State Stop and return ``no";
\EndIf
\EndFor
\end{algorithmic}
\end{algorithm}
\begin{thm}
 Algorithm \ref{alg.ConvProfile}, taking the system $(\G,\Sigma,\Pi)$, the initial state $x(0)$, and the conjectured steady-state $\hat{\mathrm{x}} = h^{-1}(\mathrm y)$ as input, satisfies Assumption \ref{assump.Diff}.
\end{thm}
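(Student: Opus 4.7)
The plan is to prove both directions of the ``if and only if'' required by Assumption \ref{assump.Diff}. The forward direction (the actual system converges to $\hat{\mathrm x}$ implies the algorithm runs forever) is essentially a discretization of the continuous-time dissipation inequality $\dot S \le -\Omega(S)$ supplied by Corollary \ref{cor.FFunction}. The reverse direction (the algorithm runs forever implies convergence) is a telescoping-with-growing-gaps argument in the spirit of Proposition \ref{prop.ConvRateInequality}, applied to the choice of sampling times produced in the loop.

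For the forward direction, I first observe that if the trajectory converges to $\hat{\mathrm x}$ then, by Theorem \ref{thm.ClosedLoopSteadyStates}, $\hat{\mathrm x}$ is \emph{the} closed-loop steady-state, hence $\hat{\mathrm u} = -\E_\G g(\E_\G^\top \hat{\mathrm y})$ is consistent with the diffusive coupling. Each $S_i$ from Theorem \ref{thm.MEIPAgents} then satisfies the per-agent passivity inequality against $(\hat{\mathrm u}_i,\hat{\mathrm y}_i)$; summing and using monotonicity of the static controllers to annihilate the coupling term $(u-\hat{\mathrm u})^\top(y-\hat{\mathrm y})$ yields $\dot S \le -\sum_i\rho_i(y_i-\hat{\mathrm y}_i)^2$. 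Applying Corollary \ref{cor.FFunction} promotes this to $\dot S \le -\Omega(S)$ along the trajectory. Since $\Omega(S)\ge 0$, the function $S(x(t))$ is monotone non-increasing, so $\Omega(S(x(t)))\ge \Omega(S(x(t_{k+1})))$ for all $t\in[t_k,t_{k+1}]$ by monotonicity of $\Omega$. Integrating $\dot S \le -\Omega(S)$ over $[t_k,t_{k+1}]$ thus gives
\[ S(x(t_{k+1}))-S(x(t_k)) \le -\Omega(S(x(t_{k+1})))(t_{k+1}-t_k), \]
which is exactly the test in the algorithm; every iteration passes and the algorithm runs indefinitely.

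For the reverse direction, suppose the test passes at every $k$. Telescoping the discrete inequality and using $S\ge 0$ yields $\sum_{k=0}^\infty \Omega(S(x(t_{k+1})))(t_{k+1}-t_k) \le S(x(0)) < \infty$. By the algorithm's rule, $t_{k+1}-t_k > S(x_0)/M$ with $M = \min_{S(x)\ge\delta_{k+1}}\Omega(S(x))$, and since $\Omega$ is strictly monotone this minimum equals $\Omega(\delta_{k+1})$. Consequently any index $k$ for which $S(x(t_{k+1}))\ge\delta_{k+1}$ contributes at least $\Omega(\delta_{k+1})\cdot S(x_0)/\Omega(\delta_{k+1})= S(x_0)$ to the (finite) sum, so only finitely many such indices exist. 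Because $\delta_k = \delta_0/2^k \to 0$, this forces $S(x(t_k))\to 0$, equivalently $x(t_k)\to\hat{\mathrm x}$ along the sample times. Finally, since $(\G,\Sigma,\Pi)$ is MEIP, Theorem \ref{thm.ClosedLoopSteadyStates} guarantees the trajectory has a unique global limit; this limit must coincide with the subsequential limit $\hat{\mathrm x}$, so $x(t)\to\hat{\mathrm x}$.

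The main obstacle is the passage from sample-time convergence $x(t_k)\to\hat{\mathrm x}$, which is all the telescoping argument directly delivers, to continuous-time convergence. Because the inter-sample intervals grow like $1/\Omega(\delta_{k+1})\to\infty$, any Lipschitz-style bound on the excursion of $x(t)$ between samples is useless; I would therefore invoke the uniqueness of the global limit from Theorem \ref{thm.ClosedLoopSteadyStates} to rule out wandering behavior, rather than trying to upgrade the discrete convergence by hand. A minor technical checkpoint is the well-posedness of $M$ at every iteration: strict monotonicity of $\Omega$ with $\Omega(0)=0$ and $\delta_{k+1}>0$ ensures $M=\Omega(\delta_{k+1})>0$, so each $t_{k+1}$ is finite while $t_k\to\infty$, validating the hypotheses of the telescoping argument.
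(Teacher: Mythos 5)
Your proof is correct and takes essentially the same route as the paper: the forward direction discretizes $\dot S \le -\Omega(S)$ (obtained via Theorem \ref{thm.OmegaExists}, Proposition \ref{prop.Jensen} and Corollary \ref{cor.FFunction}) using monotonicity of $S$ and $\Omega$, and the reverse direction exploits the designed sampling gap $t_{k+1}-t_k > S(x_0)/M$ to force $S(x(t_k))$ below $\delta_k\to 0$, then identifies $\hat{\mathrm x}$ with the unique trajectory limit guaranteed by Theorem \ref{thm.ClosedLoopSteadyStates}. The only cosmetic difference is mechanical: the paper derives $S(x(t_k))\le\delta_k$ by a per-step contradiction, while you reach the same conclusion by telescoping the tested inequality and counting the (necessarily finitely many) indices violating the threshold.
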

\begin{proof}
We denote the true limit of the system $(\G,\Sigma,\Pi)$ by $\mathrm x$. We first assume the algorithm never stops, and show that $\hat{\mathrm x} = \mathrm x$ . We show that $S(x(t_k)) \le \delta_k$, which would suffice as $\delta_k\to 0$ and $S(t)\to 0$ implies that $x(t) \to \hat{\mathrm x}$, and thus $\mathrm x = \hat{\mathrm x}$. 
Suppose, heading toward contradiction, that $S(x(t_k)) \not \le \delta_k$. Then $\Omega(S(x(t_k))) \ge M$, meaning that the right-hand side of the checked inequality is larger than $-S(x(t_k))$. Thus, if the inequality holds then $S(x(t_{k+1})) < 0$, which is absurd. Thus $S(x(t_k)) \le \delta_k$, and $\hat{\mathrm x} = \mathrm x$.
On the contrary, if the conjectured limit $\hat{\mathrm x}$ is the true limit of the network, then Theorem \ref{thm.OmegaExists}, Proposition \ref{prop.Jensen} and Corollary \ref{cor.FFunction} show that $S(x(t_{k+1}))-S(x(t_k)) \le -\Omega(S(x(t_{k+1})))(x(t_{k+1})-x(t_k))$ always holds, so the algorithm never stops, as expected.
\end{proof}
}

\begin{rem} \label{rem.ConvRateAlgorithm}
{Proposition \ref{prop.ConvRateInequality} shows we can take any sample times $t_k$ such that $\lim\inf_{k\to\infty} (t_k - t_{k-1}) > 0$, and still get a valid convergence assertion algorithm. The suggested algorithm gives extra information, as it also bounds the distance of $x(t_k)$ from $\mathrm x$.}
Another way to choose $t_k$ is to use the solution of the ODE $\dot{\tilde{S}} = -\Omega(\tilde{S})$ with $\tilde{S}(t_0) = S(x_0)$. Let $t_k$ be the earliest time in which $\tilde{S}(t_k) \le \delta_k$. The inequality $\dot{S} \le -\Omega(S)$ assures that $S(x(t_k)) \le \delta_k$. This method is more demanding, as the minima $M$ computed before can be stored in a table, but the solution to the ODE must be computed on-line.
\end{rem}

\begin{rem}
Although we can prove convergence with this method using very seldom measurements, we should still sample the system at a reasonable rate. This is because we want to detect faults as soon as possible. If we sample the system in too large intervals, we will not be able to sense a fault until a large amount of time has passed.
\end{rem}

We conclude this section with a short discussion about the perks and drawbacks of the two presented convergence assertion methods. The convergence profile method allows the designer to sample the system at any desired rate, allowing one to prove convergence using very seldom measurements. Moreover, it gives certain rate of convergence guarantees before running the system. On the contrary, the high-rate sampling method can require a long time to assert convergence to a $\delta$-ball around the desired steady-state, unless one is willing to increase the sampling rate, perhaps arbitrarily. However, its main upshot over the convergence profile method is that we need not assume that Assumption \ref{Assumption_g_2} holds{, and that the method is computationally easier, as one can avoid function inversion needed to compute $\Omega$}.
\vspace{-5pt}
\section{Case Study: Velocity Coordination in Vehicles with Drag} \label{sec.Simulation}
{Consider a collection of $n=20$ vehicles trying to coordinate their velocity. Each vehicle is modeled as a double integrator $G(s)=1/s^2$ in vacuum, but is subject to drag in the real world. The drag model of a vehicle is usually modeled as a force against the direction of the vehicle's motion, which is quadratic in the size of the velocity \cite[p. 164]{Serway2018}. Thus, each vehicle has the following model:
\begin{align}
\Sigma_i: \dot{x}_i = -C_i x_i|x_i| + u_i,~~ y_i &= x_i,
\end{align}
where $x_i$ is the velocity of the $i$-th vehicle, and $C_i$ is a constant which aggregates different parameters affecting the drag, e.g. density of the air, viscosity of the air, the geometry of the vehicle, and the mass of the vehicle. The vehicles are trying to coordinate their velocity - Agents $\#1-\#7$ want to travel at $60_{\rm km/h}$, agents $\#8-\#13$ want to travel at $70_{\rm km/h}$, and agents $\#14-\#20$ want to travel at $50_{\rm km/h}$. The edge controllers are static nonlinearities given by sigmoid functions of the form $\mu_e = \tanh(\zeta_e)$. This diffusively coupled network satisfies both Assumptions \ref{Assumption} and \ref{Assumption_g_2}, and we note both the agents and the controllers are nonlinear. We choose an interaction graph $\G$ as seen in Figure \ref{fig.VehicleExampleGraph}. One can check that $\G$ is $4$-connected, either using Menger's theorem or using other known algorithms \cite{Galil1980}. }

\begin{figure} [!t] 
\vspace{-15pt}
    \centering
    \includegraphics[scale=0.5]{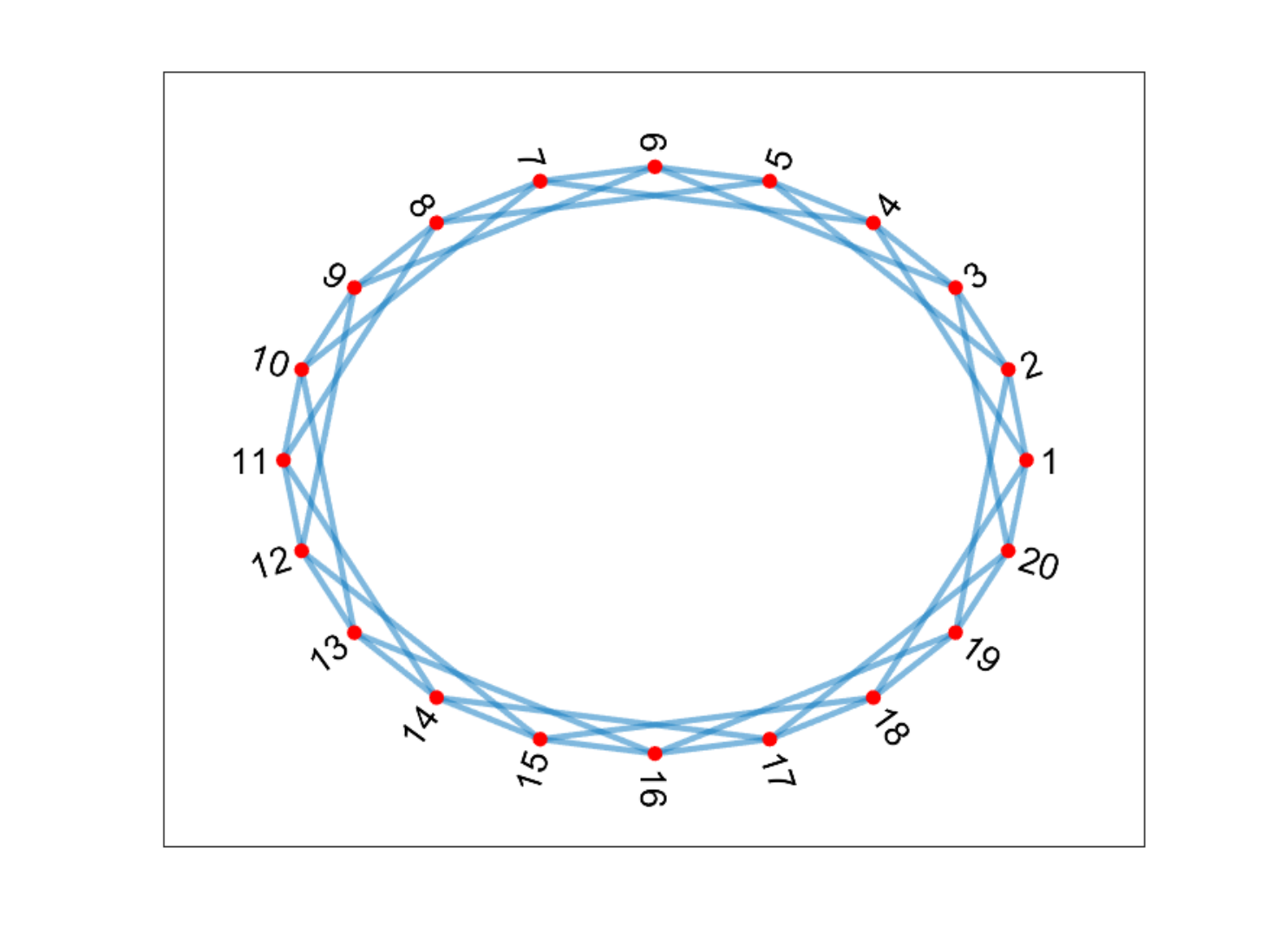}
\vspace{-20pt}
    \caption{The faultless underlying graph in the case study.}
    \label{fig.VehicleExampleGraph}
\vspace{-10pt}
\end{figure}
{
The parameters $C_i$ were chosen as log-uniformly between $0.01$ and $0.1$. The initial velocity of the agents was chosen to be Gaussian with mean $\mu = 70_{km/h}$ and standard deviation $\sigma = 20_{km/h}$.
We solve the synthesis problem, forcing the network to converge to $\mathrm y^\star = [60\cdot\mathbbm{1}_{7}^\top,70\cdot\mathbbm{1}_6^\top,50\cdot\mathbbm{1}_7^\top]^\top_{\rm km/h}$, where $\mathbbm{1}_m$ is the all-one vector of size $m$, allowing up to $2$ edges to fault. We run our FDI protocol, implementing the profile-based convergence assertion scheme, sampling the system at $10_{\rm Hz}$ (i.e., a modified version of Algorithm \ref{alg.ConvProfile}). We consider four different scenarios., each lasting $100$ seconds.
\begin{enumerate}
\item A faultless scenario.
\item At time $t=20_{sec}$, the edge $\{2,3\}$ faults, and at time $t=50_{sec}$, the edge $\{13,14\}$ faults.
\item At time $t=20_{sec}$, the edge $\{2,3\}$ faults, and at time $t=21_{sec}$, the edge $\{13,14\}$ faults.
\item At time $t=0.5_{sec}$, the edge $\{2,3\}$ faults, and at time $t=4_{sec}$, the edge $\{13,14\}$ faults.
\end{enumerate} 
The first scenario tests the nominal behavior of the protocol. The second tests its ability to handle with one single fault at a time. The third tests its ability to handle more than one fault at a time. The last tests its ability to deal with faults before the network converged.
The results  are available in Figures \ref{fig.VehicleExampleNominal}, \ref{fig.VehicleExampleFarFaults}, \ref{fig.VehicleExampleCloseFaults}, and \ref{fig.VehicleExampleCloseEarlyFaults}. It can be seen that we achieve the control goal in all four scenarios. Moreover, in all scenarios and at all times, the velocities of the agents are not too far from the values found in $y^\star$, meaning that this protocol cannot harm the agents by demanding them to have very wild states. 
In the second and third scenario, the exploratory phases begins at the first measurement after the fault occurred. On the contrary, in the fourth scenario, it takes the exploratory phase begins only at $t=1.3_{sec}$, a little under a second after the first fault. This is because the steady-states of the faulty and nominal closed-loop system are relatively close, meaning it takes a little extra time to find that a fault exists. }
\begin{figure} [!t] 
    \centering
    \includegraphics[scale=0.55]{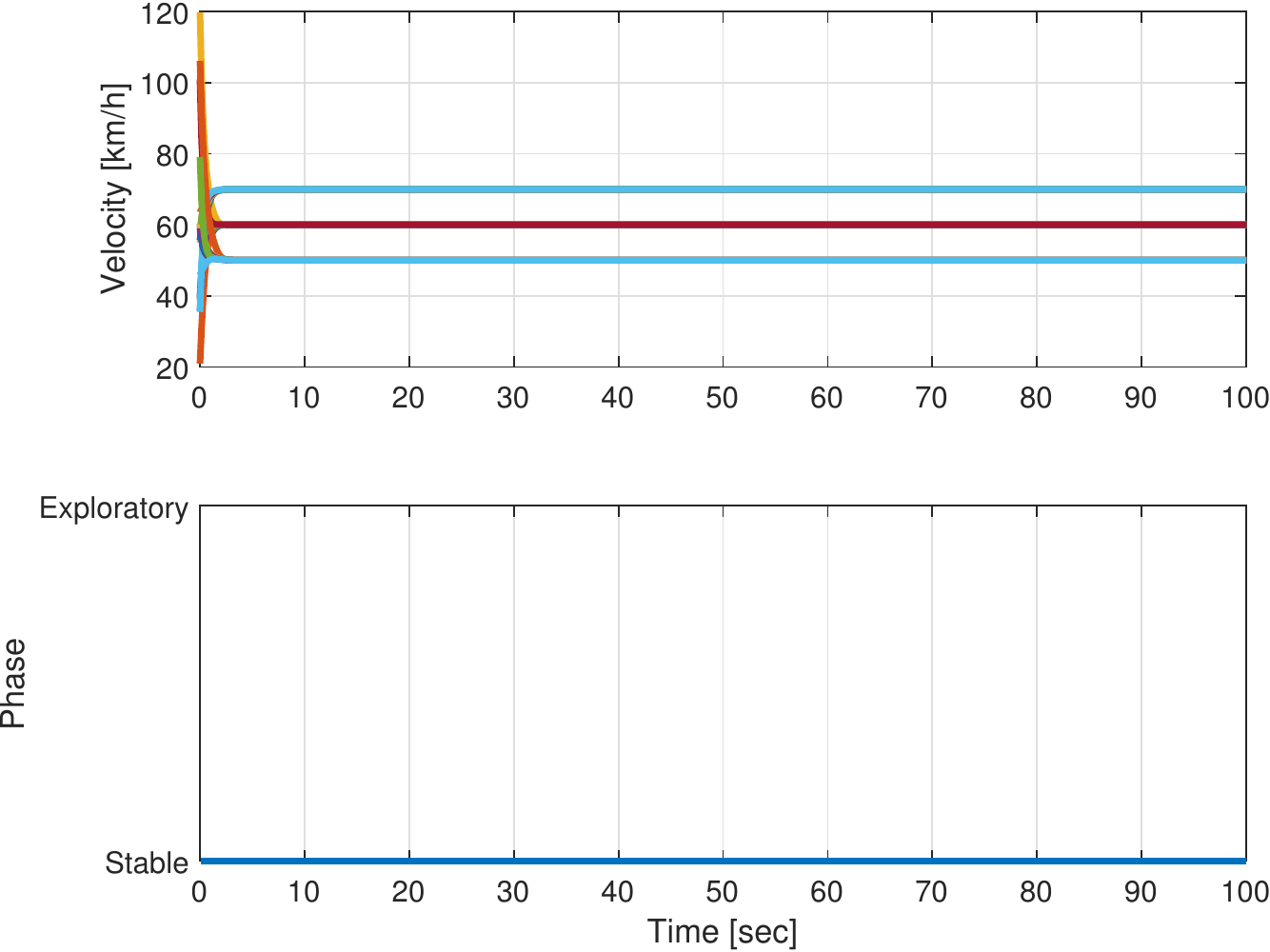}
    \caption{Results of first scenario.}
    \label{fig.VehicleExampleNominal}
\vspace{-15pt}
\end{figure}
\begin{figure} [!b] 
\vspace{-15pt}
    \centering
    \includegraphics[scale=0.55]{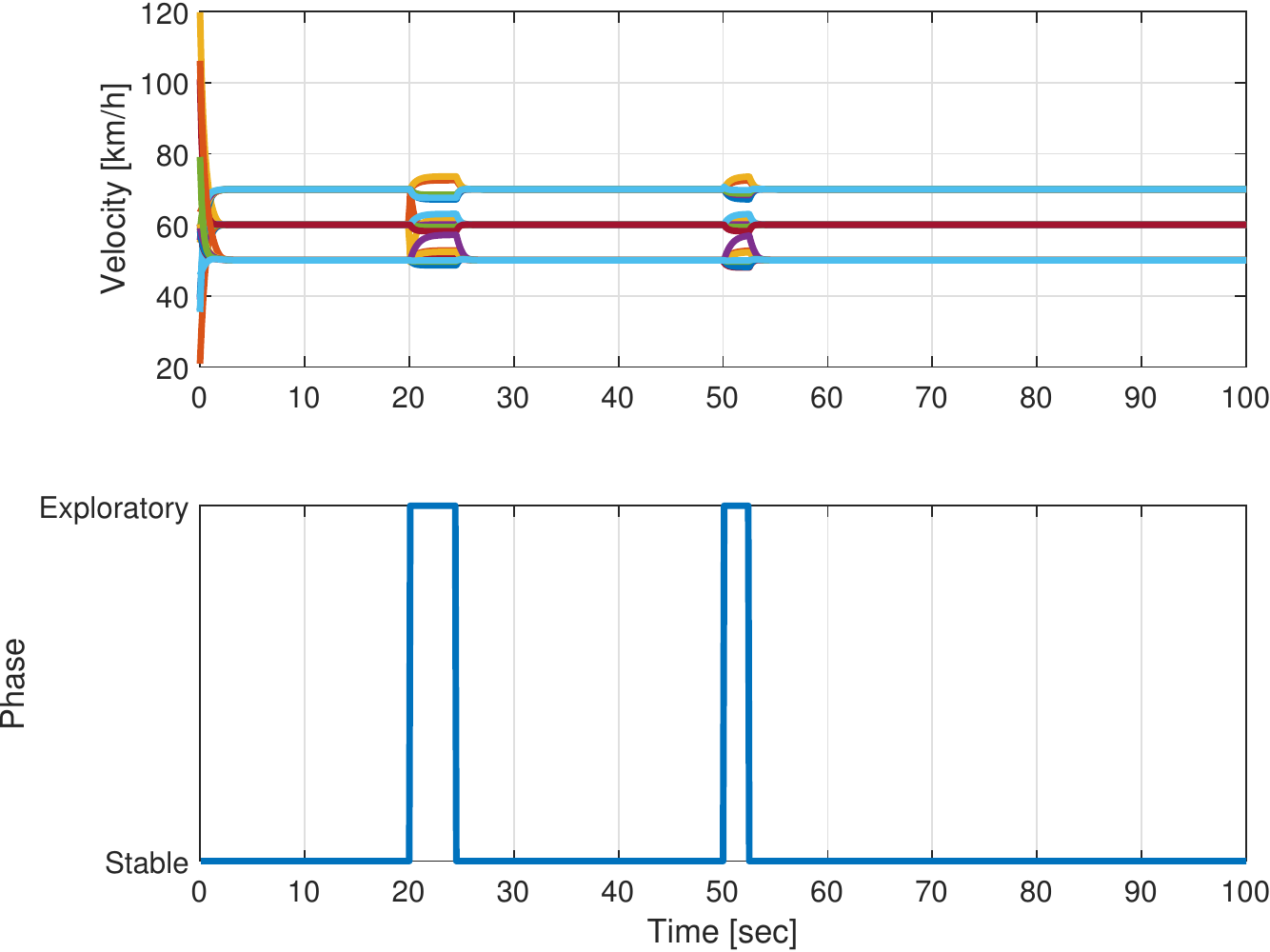}
    \caption{Results of second scenario.}
    \label{fig.VehicleExampleFarFaults}
\end{figure}
\begin{figure} [!t] 
    \centering
    \includegraphics[scale=0.55]{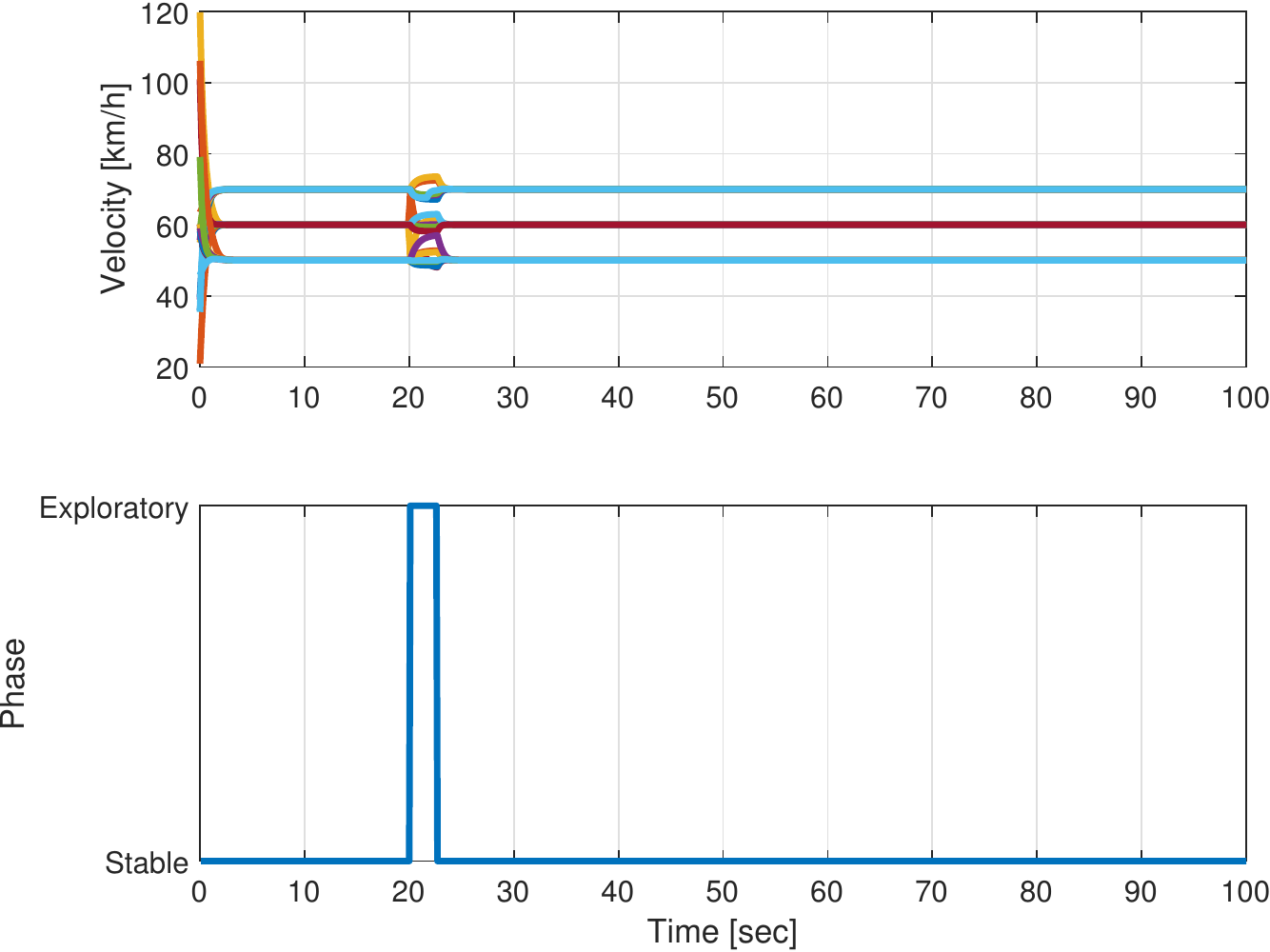}
    \caption{Results of third scenario.}
    \label{fig.VehicleExampleCloseFaults}
\vspace{-15pt}
\end{figure}
\begin{figure} [!b] 
\vspace{-15pt}
    \centering
    \includegraphics[scale=0.55]{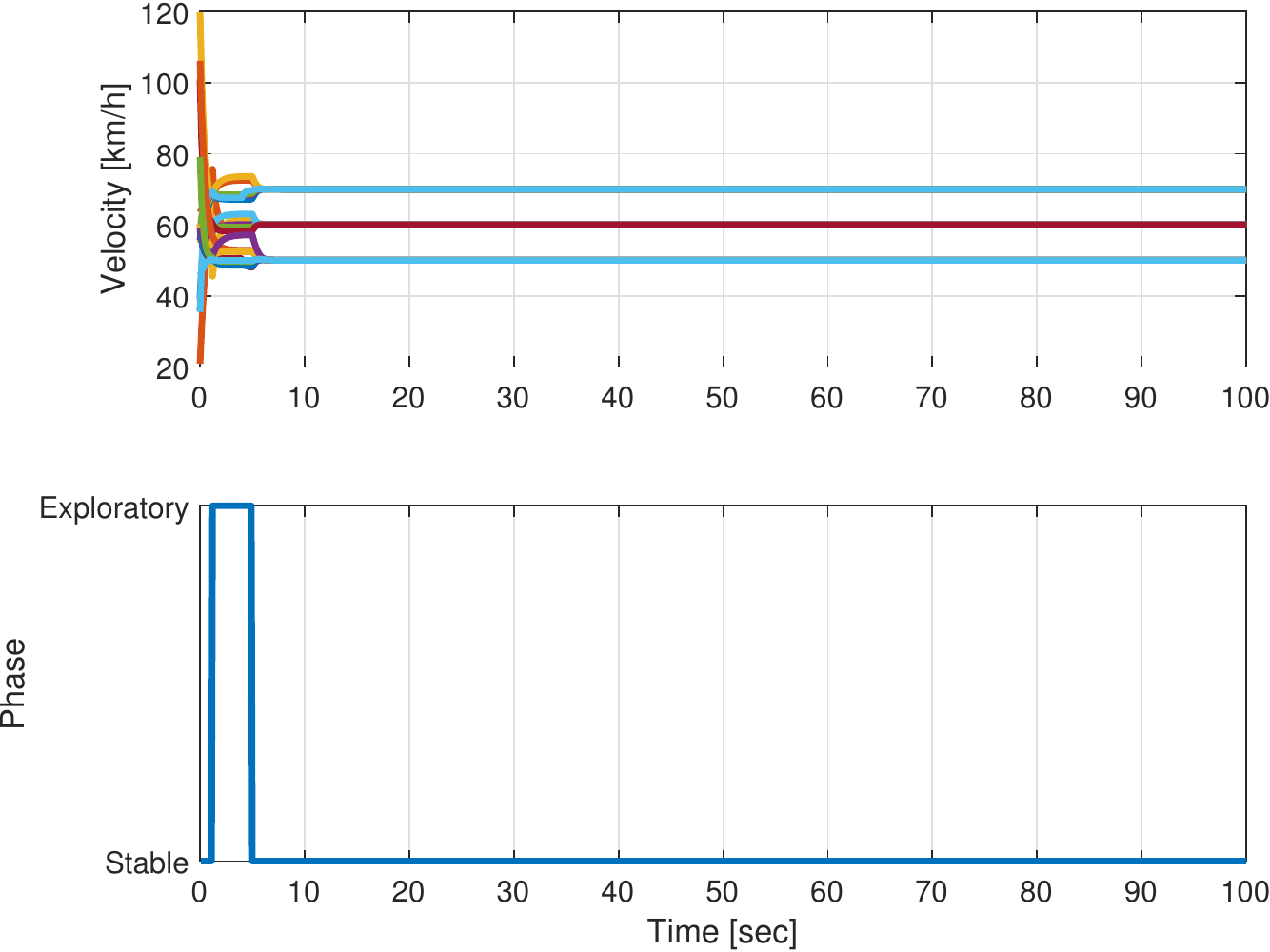}
    \caption{Results of fourth scenario.}
    \label{fig.VehicleExampleCloseEarlyFaults}
\end{figure}

\if(0)
\begin{figure}[!t]
\begin{center}
	\subfigure[Results of first scenario] {\scalebox{.5}{\includegraphics{ExampleVehicleNoFaults}}\label{fig.VehicleExampleNominal}}
	\subfigure[Results of second scenario]{\scalebox{.5}{\includegraphics{ExampleVehicleFarFaults}}\label{fig.VehicleExampleFarFaults}}
  \caption{First scenario in FDI for vehicle systems.}
\end{center}
\end{figure}
\begin{figure}[!b]
\begin{center}
	\subfigure[Results of Third Scenario] {\scalebox{.5}{\includegraphics{ExampleVehicleCloseFaults}}\label{fig.VehicleExampleCloseFaults}}
	\subfigure[Results of Fourth Scenario]{\scalebox{.5}{\includegraphics{ExampleVehicleEarlyFaults}}\label{fig.VehicleExampleCloseEarlyFaults}}
  \caption{Last scenarios in FDI for vehicle systems.}
\end{center}
\end{figure}
\fi
\vspace{-12pt}
\section{Conclusion}
We considered multi-agent networks prone to communication faults in which the agents are {output-strictly MEIP and the controllers are MEIP}. We exhibited a protocol in which a constant bias $\mathrm w$ is added to the controller output, and showed that if $\mathrm w$ is chosen randomly, no matter what the underlying graph $\G$ is, we can asymptotically differentiate between any two versions (faulty or faultless) of the system. We also showed that if $\mathrm w$ is chosen randomly within a certain set, we asymptotically differentiate the faultless version of the system from its faulty version, while also solving the synthesis problem for the faultless version, {assuming $\G$ was connected enough, i.e., $2$-connected}. These results were used to describe algorithms for {network} fault detection and isolation protocols for general MEIP multi-agent systems, where the number of isolable faults is given by a graph-theoretic characteristic of $\G$, while no extra information on the agents and controllers but MEIP is used. These were achieved by assuming the existence of an on-line algorithm asserting that a given network converges to a conjectured steady-state, allowing us to move from asymptotic differentiation to on-line differentiation. Later, two such algorithms were built using passivity of the agents and controllers, {and their correctness was proved}. We demonstrated our protocols by a case study, {in which we successfully detect communication faults in a nonlinear network. We emphasize that the proposed method is proved to work so long the agents and controllers are MEIP, and the graph $\mathcal{G}$ is connected enough. In particular, there is no assumption on the scale of the network.}
Future directions can include more robust {network} fault detection and isolation techniques, in which a larger number of faults can be isolated by studying more delicate graph-theoretical properties of the underlying graph. 
\vspace{-10pt}
\bibliographystyle{ieeetr}
\bibliography{main}

\appendix
This appendix includes the proof of various technical propositions from Subsection \ref{subsec.ConvProf}. We start with the proof of Proposition \ref{prop.Jensen}:
\begin{proof} 
Without loss of generality, we assume that $\Omega_i = \Omega_\star$ for all $i$, as {proving that $\sum_{i=1}^n \rho_i\Omega_\star(\theta_i) \ge C\Omega_\star(\sum_{i=1}^n \theta_i)$ would imply} the desired inequality. We also assume that $\rho_i = 1$ for all $i$, as {proving that $\sum_i \Omega_i(\theta_i) \ge C\cdot\Omega_\star(\sum_i \theta_i)$ would give $\sum_i \rho_i \Omega_i(\theta_i) \ge C\min_i \rho_i \cdot\Omega_\star(\sum_i \theta_i)$}.
Define $F:[0,D]^n\backslash\{0\}\to\mathbb{R}$ as
$
F(\theta_1,\cdots,\theta_n) = \frac{\sum_{i=1}^n \Omega_\star(\theta_i)}{\Omega_\star(\sum_{i=1}^n \theta_i)},
$
where the claim is equivalent to $F$ being bounded from below. For any $r>0$, $F$ is continuous on the compact set $[0,D]^n\backslash\{x:\ ||x|| > r\}$, so its minimum is obtained at some point. As $F$ does not vanish on the set, the minimum is positive, so $F$ is bounded from below on that set by a constant greater than zero. It remains to show that $\lim\inf_{\theta_1,\cdots,\theta_n\to 0} F(\theta_1,\cdots,\theta_n) > 0$.
Let $\beta = \max_i{\beta_i}$, so that $\lim_{\theta\to 0} \frac{\Omega_\star(\theta)}{\theta^\beta} > 0$. Then
$
F(\theta_1,\cdots,\theta_n) = \frac{\sum_{i=1}^n \Omega_\star(\theta_i)}{\Omega_\star(\sum_{i=1}^n \theta_i)}
 = \frac{\sum_{i=1}^n \Omega_\star(\theta_i)}{(\sum_{i=1}^n \theta_i)^\beta} \cdot \frac{(\sum_{i=1}^n \theta_i)^\beta}{\Omega_\star(\sum_{i=1}^n \theta_i)}.
$
We want to bound both factors from below when $\theta_1,\cdots,\theta_n \to 0$. It is clear that the second factor is equal to $(\lim_{\theta\to 0} \frac{\Omega_\star(\theta)}{\theta^\beta})^{-1}$, which is a positive real number by assumption. As for the first factor, we can bound it as $
\lim_{\theta_1\cdots \theta_n \to 0}\frac{\sum_{i=1}^n \Omega_\star(\theta_i)}{(\sum_{i=1}^n \theta_i)^\beta} \ge \lim_{\theta_1\cdots \theta_n \to 0} \frac{\Omega_\star (\max_i \theta_i)}{(n\max_i \theta_i)^\beta} > 0
$
as $\sum_{i=1}^n \theta_i \le n\max_i \theta_i$ and $\sum_i \Omega_\star(\theta_i) \ge \Omega_\star(\max_i \theta_i)$. This completes the proof.
\end{proof}
We now prove Proposition \ref{prop.ConvRateInequality}
\begin{proof}
By assumption $\tilde S(t_k)$ is monotone decreasing and bounded from below, as $\tilde S(t_k) \ge 0$. Thus it converges to some value, denoted $\tilde S_\infty$. Using $\tilde S(t_{k+1}) - \tilde S(t_k) \le -\Omega_\star(\tilde S(t_{k+1}))(t_{k+1}-t_k)$ and taking $k\to \infty$ gives that $0 \le -\Omega_\star(\tilde S_\infty)$. However, $\Omega_\star$ is non-negative, so we must have $\Omega_\star(\tilde S_\infty) =0$, and thus $S_\infty = 0$, meaning that $\tilde S(t_k) \to 0$. By monotonicity of $\tilde S$, we conclude that $\tilde S(t) \to 0$ as $t\to \infty$.
\end{proof}

\end{document}